\newcounter{theorem}
\newtheorem{thm}[theorem]{Theorem}
\newtheorem{lemma}[theorem]{Lemma}
\newtheorem{defn}[theorem]{Definition}
\newtheorem*{definition}{Definition}
\newtheorem*{remark}{Remark}
      \theoremstyle{plain}
      \newtheorem*{assumption1a}{(A.1)}
      \newtheorem*{assumption2a}{(A.2)}
       \newtheorem*{assumption3a}{(A.3)}
      \newtheorem*{assumption4a}{(A.4)}
     \newtheorem*{assumption1e}{(E.1)}
      \newtheorem*{assumption2e}{(E.2)}
       \newtheorem*{assumption3e}{(E.3)}
      \newtheorem*{assumption4e}{(E.4)}
       \newtheorem*{assumptionH1}{(H.1)}
      \newtheorem*{assumptionH2}{(H.2)}
             \newtheorem*{assumptionK1}{(K.1)}
      \newtheorem*{assumptionK2}{(K.2)}
\newcommand{\Fa}{\mathcal{F}}
\newcommand{\E}{\mathbb{E}}
\newcommand{\var}{\mathbb{V}\mathrm{ar}} 
\newcommand{\tr}{\mbox{tr}}
\newcommand{\Var}{\mathbb{V}\mathrm{ar}}
\newcommand{\cov}{\mbox{cov}}
\newcommand{\spn}{\operatorname{span}}
\newcommand{\spc}{{\mathcal S}}
\newcommand{\real}{{\mathbb R}}
\newcommand{\1}{\mathbf 1}
\newcommand{\Z}{{\mathbf Z}}
\newcommand{\X}{{\mathbf X}}
\newcommand{\I}{{\mathbf I}}
\newcommand{\Ub}{{\mathbf U}}
\newcommand{\Ob}{{\mathbf O}}
\newcommand{\x}{{\mathbf x}}
\newcommand{\bb}{{\mathbf b}}
\newcommand{\M}{{\mathbf M}}
\newcommand{\Q}{{\mathbf Q}}
\newcommand{\0}{{\mathbf 0}}
\newcommand{\B}{{\mathbf B}}
\newcommand{\V}{{\mathbf V}}
\newcommand{\eb}{\mathbf{e}}
\newcommand{\Pbf}{\mathbf{P}}
\newcommand{\vb}{\mathbf v}
\newcommand{\bs}{\mathbf s}
\newcommand{\A}{{\mathbf A}}
\newcommand{\rs}{{\mathbf r}}
\newcommand{\y}{{\mathbf y}}
\def\xn{{\mathbf x}}
\newcommand{\Pb}{\mathbb{ P}}
\newcommand{\greekbold}[1]{\mbox{\boldmath $#1$}}
\newcommand{\iotabf}{\greekbold{\iota}}
\newcommand{\Sigmabf}{\greekbold{\Sigma}}
\newcommand{\Sigmaxbf}{\Sigmabf_{\x}}
\newcommand{\mubf}{\greekbold{\mu}}
\newcommand{\cs}{\mathcal{S}_{Y\mid\X}} 
\newcommand{\ms}{\mathcal{S}_{\E\left(Y\mid\X\right)}} 
\newcommand{\msf}{\mathcal{S}_{\E\left(f_t(Y)\mid\X\right)}} 
\newcommand{\eps}{\epsilon}
\newcommand{\ind}{\perp \!\!\! \perp}
\newcommand{\gcs}{g_{\textit{cs}}}
\newcommand{\argmin}{\operatorname{argmin}}
\title{\bf Ensemble Conditional Variance Estimator for Sufficient Dimension Reduction}
\author{Lukas Fertl\thanks{lukas.fertl@tuwien.ac.at}\\
            \hspace{.2cm}
			Institute of Statistics and Mathematical Methods in Economics\\ Faculty of Mathematics and Geoinformation\\ TU Wien, Vienna, Austria
			\And \\
			\textbf{Efstathia Bura} \thanks{efstathia.bura@tuwien.ac.at} \\
			\hspace{.2cm}
			Institute of Statistics and Mathematical Methods in Economics\\ Faculty of Mathematics and Geoinformation\\ TU Wien, Vienna, Austria}
\begin{document}
\maketitle
\begin{abstract}
\textit{Ensemble Conditional Variance Estimation} (\texttt{ECVE}) is a novel sufficient dimension reduction (SDR) method in regressions with continuous response and predictors. \texttt{ECVE} applies to general non-additive error regression models. It operates under the assumption that the predictors can be replaced by a lower dimensional projection without loss of information.It is a semiparametric forward regression model based exhaustive sufficient dimension reduction estimation method that is shown to be consistent under mild assumptions. 
It is shown to outperform  \textit{central subspace mean average variance estimation} (\texttt{csMAVE}), its main competitor, under several simulation settings  and in a benchmark data set analysis.
\end{abstract}

\section{Introduction} \label{sec:intro}
Let  $(\Omega ,{\mathcal {F}},\Pb)$ be a probability space. Let $Y$ be a univariate continuous response and $\X$ a $p$-variate continuous predictor,  jointly distributed, with $(Y,\X^T)^T:\Omega \to \real^{p+1}$. We consider the linear sufficient dimension reduction model 
\begin{align}
Y = \gcs(\B^T \X, \epsilon), \label{mod:e_basic}
\end{align}
where  $\X \in \real^p$ is independent of the random variable $\epsilon$, $\B$ is a $ p \times k$ matrix of rank $k$, 
and $\gcs: \real^{k+1} \to \real$ is an unknown non-constant function.

\cite[Thm. 1]{ZENG2010271} showed that if $(Y,\X^T)^T$ has a joint continuous distribution, \eqref{mod:e_basic} is equivalent to
\begin{align}\label{dimredspace}
Y \ind \X \mid \B^T\X,  
\end{align}
where the symbol $\ind$ indicates stochastic independence. 
The matrix $\B$ is not unique. It can be replaced by any basis of its column space, $\spn\{\B\}$. Let $\spc$ denote a subspace of $\real^p$, and let $\Pbf_{\spc}$ denote the orthogonal projection onto $\spc$ with respect to the usual inner product. If the response $Y$ and predictor vector $\X$ are independent conditionally
on $\Pbf_{\spc}\X$, then $\Pbf_{\spc}\X$ can replace $\X$ as the predictor in the regression of $Y$ on $\X$ without loss of information. Such subspaces $\spc$ are called dimension reduction subspaces and their intersection, provided it satisfies the conditional independence condition \eqref{dimredspace},
is called the central subspace and denoted by $\cs$ [see \cite[p. 105]{Cook1998}, \cite{Cook2007}]. 

By their equivalence, under both models \eqref{mod:e_basic} and \eqref{dimredspace}, $F_{Y \mid \X}(y) = F_{Y \mid \B^T\X}(y)$ and 
$\cs=\spn\{\B\}$. Since the conditional distribution of $Y \mid \X$ is the same as that of $Y \mid \B^T\X$, $\B^T\X$ contains all the information in $\X$ for modeling the target variable $Y$, and it can replace $\X$ without any loss of information.

If the error term in model \eqref{mod:e_basic} is additive with $\E(\eps \mid\X)=0$, \eqref{mod:e_basic} reduces to $Y = g(\B^T\X) + \eps$. Now, $\E(Y \mid \X) =\E(Y \mid \B^T\X)= \E(Y \mid \Pbf_{\spc} \X)$, where $\spc=\spn\{\B\}$.  The mean subspace, denoted by $\ms$, is the intersection of all subspaces $\spc$ such that $\E(Y \mid \X) = \E(Y \mid \Pbf_{\spc} \X)$ \cite{CookLi2002}.  In this case, \eqref{mod:e_basic}  becomes the classic mean subspace model with $\spn\{\B\} = \ms$. 
\cite{CookLi2002} showed that the mean subspace is a subset of the central subspace,  $ \ms \subseteq \cs$. 

Several \textit{linear sufficient dimension reduction} (SDR) methods estimate $\ms$ consistently (\cite{AdragniCook2009,MaZhu2013, Li2018,Xiaetal2002}). \textit{Linear} refers to the reduction being a linear transformation of the predictor vector. \textit{Minimum Average Variance Estimation} (\texttt{MAVE}) \cite{Xiaetal2002} is the most competitive and accurate method among them. \texttt{MAVE} differentiates from the majority of SDR methods, in that  it is not \textit{inverse regression} based such as, for example, the widely used \textit{Sliced Inverse Regression} (SIR, \cite{Li1991}).  \texttt{MAVE} requires minimal assumptions on the distribution of $(Y,\X^T)^T$ and  is based on estimating the gradients of the regression function $E(Y \mid \X)$ via local-linear smoothing \cite{locallinearsmoothing}.

The \textit{central subspace mean average variance estimation} (\texttt{csMAVE}) \cite{WangXia2008,MAVEpackage} is the extension of \texttt{MAVE}  that consistently and exhaustively estimates the $\spn\{\B\}$ in model \eqref{mod:e_basic} without restrictive assumptions limiting its applicability. \texttt{csMAVE} has remained the gold standard since it was proposed by \cite{WangXia2008}. It is based on repeatedly applying  \texttt{MAVE} 
on the sliced target variables $f_u(Y) = \1_{\{s_{u-1} < Y \leq s_u\}}$ for $s_1 < \ldots < s_H$. \cite{WangXia2008} showed that the mean subspaces of the sliced $Y$ can be combined to recover the central subspace $\cs$.

Several papers made  contributions in establishing a road path from the central mean to the central subspace [see \cite{YinLi2011} for a list of references]. \cite{YinLi2011} recognized that these approaches pointed to the same direction: if one can estimate
the central mean subspace of $\E(f(\X) \mid Y)$ for sufficiently many functions $f \in \Fa$ for a family of functions $\Fa$,
then one can recover the central subspace. Such families that are rich enough to obtain the desired outcome are called \textit{characterizing ensembles} by \cite{YinLi2011}, who also proposed and  studied such functional families [see also \cite{Li2018} for an overview]. 

In this paper, we extend the \textit{conditional variance estimator} (\texttt{CVE}) \cite{FertlBura} to the exhaustive \textit{ensemble conditional variance estimator} for recovering fully the \textit{central subspace} $\cs$. \textit{Conditional variance estimation} is a semi-parametric method for the estimation of $\ms$ consistently under minimal regularity assumptions on the distribution of $(Y,\X^T)^T$. In contrast to other SDR approaches, it operates by identifying the orthogonal complement of $\ms$.
In this paper we apply the \textit{conditional variance estimator} (\texttt{CVE}) to identify the mean subspace $\msf$ of transformed responses $f_t(Y)$, where $f_t$ are elements of an \textit{ensemble} $\Fa = \{f_t : t \in \Omega_T\}$, and then combine them to form the \textit{central subspace} $\cs$.

The paper is organized as follows. In Section~\ref{Preliminaries} we define the notation and concepts we use throughout the paper. A short overview of ensembles is given in Section~\ref{e_motivation}. The \textit{ensemble conditional variance estimator} (ECVE) is introduced in Section~\ref{sec:ensembleCVE} and the estimation procedure in Section~\ref{e_estimation}. In  Section~\ref{sec:consistency}, the consistency of the \textit{ensemble conditional variance estimator} for the central subspace is shown. We  assess and compare the performance of the estimator vis-a-vis \texttt{csMAVE} via simulations in Section~\ref{sec:simulations} and by applying it to the Boston Housing data in Section~\ref{sec:dataAnalysis}. We conclude in Section~\ref{sec:discussion}.

\section{Preliminaries}\label{Preliminaries}
We denote by $F_\Z$  the cumulative distribution function (cdf) of a random variable or vector $\Z$. We drop the subscript, when the attribution is clear from the context. For a matrix $\A$,  $\|\A\|$ denotes its Frobenius norm, and $\|\bm{a}\|$ the Euclidean norm for a vector $\bm{a}$.  Scalar product refers to the usual Euclidean scalar product, and $\perp$ denotes orthogonality with respect to it. The probability density function of $\X$ is denoted by $f_{\X}$, and its support by $\mbox{supp}(f_{\X})$. The notation $Y \ind \X$ signifies  stochastic independence of the random vector $\X$ and random variable $Y$. The $j$-th standard basis vector with zeroes everywhere except for 1 on the $j$-th position is denoted by  $\eb_j \in \real^p$, $\iotabf_p = (1,1,\ldots,1)^T \in \real^p$, and $\I_p = (\eb_1,\ldots,\eb_p)$ is the  identity matrix of order $p$. For any matrix $\M \in \real^{p \times q}$, $\Pbf_{\M}$ denotes the orthogonal projection matrix on its column or range space $\spn\{\M\}$;  i.e., $\Pbf_{\M} = \Pbf_{\spn\{\M\}} =\M(\M^T \M)^{-1} \M^T \in \real^{p \times p}$.

For $q \leq p$,  
\begin{equation}\label{Smanifold}
    \spc(p,q) = \{\V \in \real^{p \times q}: \V^T\V = \I_q\},
\end{equation}
denotes the Stiefel manifold that comprizes of all $p \times q$ matrices with orthonormal columns.  $\spc(p,q)$ is compact with  $\dim(\spc(p,q)) = pq - q(q+1)/2$  [see  \cite{Boothby} and Section 2.1 of \cite{Tagare2011}]. 
The set
\begin{equation}\label{Grassman_def}
    Gr(p,q) = \spc(p,q)/\spc(q,q)
\end{equation}
denotes a Grassmann manifold \cite{Grassman} that contains all $q$-dimensional subspaces in $\real^p$. $Gr(p,q)$ is  the quotient space of $\spc(p,q)$ with all $q \times q$ orthonormal matrices in $\spc(q,q)$. 

\subsection{Ensembles}\label{e_motivation}

\cite{YinLi2011} introduced \textit{ensembles} as a device to extend mean subspace  to central subspace SDR methods.
The \textit{ensemble} approach of combining mean subspaces to span the central subspace comprizes of two components: (a) a rich family of functions of transformations for the response and (b) a sampling mechanism for drawing the functions from the ensemble to ascertain coverage of the central subspace. To distinguish between families of functions and ensembles, \cite{YinLi2011} use the term \textit{parametric} ensemble, which we define next. 

\begin{definition}\label{parametricensemble}
A family $\Fa$ of measurable functions from $\real$ to $\real$ is called an ensemble. If $\Fa$ is a  family of measurable functions with respect to an index set $\Omega_T$; i.e. $\Fa = \{f_t : t \in \Omega_T\}$, $\Fa$ is called a parametric ensemble.
\end{definition}

\medskip
Let $\Fa$ be an ensemble, $f \in \Fa$ and let $f(Y)$, for $Y$ following  model~\eqref{mod:e_basic}. The space $\spc_{\E(f(Y)\mid \X)}$ is defined to be the mean subspace of the transformed random variable $f(Y)$ [see \cite{Cook1998} or \cite{CookLi2002}]. 

\begin{definition}
An ensemble $\Fa$ characterizes the central subspace $\cs$, if 
\begin{align}\label{Fa_characterises_cs}
    \spn\{\msf: f_t \in \Fa\} = \cs
\end{align}
\end{definition}

As an example, the parametric ensemble $\Fa =\{f_t: t \in \Omega_T\} = \{1_{\{z \leq t\}}: t \in \real\}$ can characterize the \textit{central subspace} $\cs$. That is, $\E(f_t(Y)|\X)$ is the conditional cumulative distribution function evaluated at $t$. 
To see this, let $\B \in \spc(p,k)$ be such that $\E(f_t(Y) \mid \X) = \E(f_t(Y) \mid \B^T \X)$ for all $t$.  Then,
$F_{Y \mid \X}(t) = \E(f_t(Y) \mid \X) = \E(f_t(Y) \mid \B^T \X) = F_{Y \mid \B^T\X}(t)$ for all $t$.  
Varying over the parametric ensemble $\Fa$, in this case over $t \in \real$, obtains the conditional cumulative distribution function.  
This \texttt{indicator} ensemble  fully recovers the conditional distribution of $Y\mid \X$ and, thus, also the \textit{central subspace} $\cs$,
\begin{align*}
  \spn\{\msf: f_t \in \Fa\} =  \spn\{\mathcal{S}_{\E\left(1_{\{Y \leq t\}} \mid \X \right)}: t \in \real\} = \cs
\end{align*}
We reproduce a list of parametric ensembles $\Fa$, and associated regularity conditions, that can characterize $\cs$ from \cite{YinLi2011} next. 
\begin{description}
    \item[Characteristic ensemble] $\Fa =\{f_t: t \in \Omega_T\} = \{\exp(i t \cdot): t \in \real\}$  
    \item[Indicator ensemble] $\Fa = \{1_{\{z \leq t\}}: t \in \real\}$, where $\spn\{\msf: f_t \in \Fa\}$ recovers the conditional cumulative distribution function
    \item[Kernel ensemble] $\Fa = \{ h^{-1}K\left((z-t)/h\right): t \in \real, h > 0\}$,  where $K$ is a kernel suitable for density estimation, and  $\spn\{\msf: f_t \in \Fa\}$ recovers the conditional density
    \item[Polynomial ensemble] $\Fa = \{z^t: t = 1,2,3,...\}$, where $\spn\{\msf: f_t \in \Fa\}$ recovers the conditional moment generating function
    \item[Box-Cox ensemble] $\Fa = \{(z^t-1)/t : t \neq 0\} \cup \{ \log(z): t = 0\}$ 
    Box-Cox Transforms
    \item[Wavelet ensemble] Haar Wavelets 
\end{description}

The \texttt{characteristic} and  \texttt{indicator} ensembles describe the conditional characteristic and distribution function of $Y\mid \X$, respectively, which always exist and determine the distribution uniquely. If the conditional density function $f_{Y\mid \X}$ of $Y\mid \X$ exists, then the \textit{kernel} ensemble  characterizes the conditional distribution $Y\mid \X$. Further, if the conditional moment generating function 
exists, then the polynomial ensemble characterizes $\cs$.
\cite{YinLi2011} used the ensemble device to extend \texttt{MAVE} \cite{Xiaetal2002}, which targets the mean subspace, to its ensemble version that also estimates the central subspace $\cs$ consistently. 

\medskip
Theorem~\ref{Fa_characteris_cs_thm} \cite[Thm~2.1]{YinLi2011} establishes when an ensemble $\Fa$ is rich enough to characterize $\cs$.

\begin{thm}\label{Fa_characteris_cs_thm}
Let $\mathcal{B} = \{1_A: A \text{ is a Borel set in}\; \text{supp}(Y) \}$ be the set of indicator functions on $\text{supp}(Y)$ and $ L^2(F_Y)$ be the set of square integrable random variables with respect to the distribution $F_Y$ of the response $Y$. If $\Fa \subseteq L^2(F_Y)$ is dense in $\mathcal{B} \subseteq L^2(F_Y)$, then the ensemble $\Fa$ characterizes the \textit{central subspace} $\cs$.
\end{thm}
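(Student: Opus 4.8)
The plan is to prove the two inclusions $\spn\{\msf : f_t \in \Fa\} \subseteq \cs$ and $\cs \subseteq \spn\{\msf : f_t \in \Fa\}$ separately, the first being routine and the second being where the density hypothesis does the real work. For the first inclusion I would note that the defining conditional independence $Y \ind \X \mid \B^T\X$ of \eqref{dimredspace} is inherited by every measurable transformation, so $f(Y) \ind \X \mid \Pbf_{\cs}\X$ and hence $\E(f(Y)\mid\X) = \E(f(Y)\mid\Pbf_{\cs}\X)$; this gives $\spc_{\E(f(Y)\mid\X)} \subseteq \cs$ for every $f \in \Fa$, and taking the span over $\Fa$ yields $\spn\{\msf : f_t \in \Fa\}\subseteq\cs$. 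The substance lies in the reverse inclusion, and the natural strategy is to show that the fixed subspace $\spc_0 := \spn\{\msf : f_t \in \Fa\}$ is itself a dimension reduction subspace for $Y$, since then minimality of $\cs$ forces $\cs \subseteq \spc_0$.

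The key device I would introduce is a bounded linear operator that converts the density hypothesis into a usable topological statement. For a subspace $\spc$ with orthogonal projection $\Pbf_{\spc}$, define $T_{\spc} : L^2(F_Y) \to L^2(F_{\X})$ by $T_{\spc}g = \E(g(Y)\mid\X) - \E(g(Y)\mid\Pbf_{\spc}\X)$; this lands in $L^2(F_{\X})$ because $\sigma(\Pbf_{\spc}\X)\subseteq\sigma(\X)$, and since conditional expectation is an $L^2$-contraction, $\|T_{\spc}g\|_{L^2(F_\X)} \leq 2\|g\|_{L^2(F_Y)}$. Thus $T_{\spc}$ is continuous and its kernel $\ker T_{\spc} = \{g : \E(g(Y)\mid\X) = \E(g(Y)\mid\Pbf_{\spc}\X)\}$ is a closed subspace of $L^2(F_Y)$. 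I would then record the monotonicity lemma that $\spc_{\E(g(Y)\mid\X)}\subseteq\spc$ implies $g \in \ker T_{\spc}$: if the mean subspace of $g(Y)$ sits inside $\spc$, then $\E(g(Y)\mid\X)$ is already $\sigma(\Pbf_{\spc}\X)$-measurable, and the tower property collapses $\E(g(Y)\mid\Pbf_{\spc}\X)$ back to $\E(g(Y)\mid\X)$.

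With $\spc = \spc_0$ these facts combine quickly. By definition of $\spc_0$, every $f \in \Fa$ satisfies $\spc_{\E(f(Y)\mid\X)} \subseteq \spc_0$, so the lemma places $\Fa \subseteq \ker T_{\spc_0}$. Since $\ker T_{\spc_0}$ is closed and $\Fa$ is dense in $\mathcal{B}$ by hypothesis, I obtain $\mathcal{B} \subseteq \overline{\Fa} \subseteq \ker T_{\spc_0}$. Applying this to $g = 1_A$ gives $\pr(Y \in A \mid \X) = \pr(Y \in A \mid \Pbf_{\spc_0}\X)$ for every Borel set $A$, i.e. $F_{Y\mid\X} = F_{Y\mid\Pbf_{\spc_0}\X}$, so $\spc_0$ is a dimension reduction subspace and $\cs \subseteq \spc_0$. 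Combined with the first inclusion this yields $\spn\{\msf : f_t \in \Fa\} = \cs$, which is the assertion.

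The step I expect to be the main obstacle is precisely the use of closedness of $\ker T_{\spc_0}$ together with the monotonicity lemma, that is, making rigorous that $L^2(F_Y)$-approximation of indicators by elements of $\Fa$ transfers the conditional-mean invariance from $\Fa$ to all of $\mathcal{B}$. Everything downstream, namely the passage from invariance of every indicator mean to equality of the conditional distributions and the appeal to minimality of $\cs$, is then formal. A secondary point worth checking carefully is the contraction bound that guarantees $T_{\spc}$ is well defined and continuous, since the whole argument rests on its kernel being closed.
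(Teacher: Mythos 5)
The paper does not actually prove this statement: Theorem~\ref{Fa_characteris_cs_thm} is quoted from Yin and Li (2011, Thm.~2.1) and no proof is reproduced in the text or appendix, so there is no in-paper argument to compare against. Your proof is correct, and it is essentially the standard argument behind the cited result: the easy inclusion $\spn\{\msf\}\subseteq\cs$ from conditional independence being inherited by measurable transformations, and the reverse inclusion obtained by showing that the set of $g\in L^2(F_Y)$ with $\E(g(Y)\mid\X)=\E(g(Y)\mid\Pbf_{\spc_0}\X)$ is a closed linear subspace containing $\Fa$, hence containing $\mathcal{B}$, which forces $F_{Y\mid\X}=F_{Y\mid\Pbf_{\spc_0}\X}$ and $\cs\subseteq\spc_0$ by minimality. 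The operator $T_{\spc}$ and the contraction bound $\|T_{\spc}g\|\le 2\|g\|$ are a clean way to package the closedness, and your monotonicity lemma (that $\spc_{\E(g(Y)\mid\X)}\subseteq\spc$ implies $g\in\ker T_{\spc}$ via the tower property) is exactly the right bridge from the definition of $\spc_0$ to membership in the kernel.

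Two small points you should make explicit if you write this up. First, the conclusion ``$\pr(Y\in A\mid\X)=\pr(Y\in A\mid\Pbf_{\spc_0}\X)$ a.s.\ for every Borel $A$'' has a null set depending on $A$; to pass to equality of the conditional distributions (and hence to $Y\ind\X\mid\Pbf_{\spc_0}\X$) you should restrict to a countable generating class, e.g.\ $A=(-\infty,q]\cap\mathrm{supp}(Y)$ for rational $q$, and invoke a monotone class argument. Second, your monotonicity lemma tacitly uses that the mean subspace $\spc_{\E(g(Y)\mid\X)}$ exists and satisfies its defining property $\E(g(Y)\mid\X)=\E(g(Y)\mid\Pbf_{\spc_{\E(g(Y)\mid\X)}}\X)$, and likewise that $\cs$ is the minimal dimension reduction subspace; both are guaranteed under the paper's standing assumptions (convex support, absolutely continuous $\X$), but they are hypotheses, not free facts. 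Neither point is a gap in the idea, only in the bookkeeping.
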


In Theorem~\ref{finite_characterisation_of_cs} we show that finitely many functions of an ensemble $\Fa$ are sufficient to characterize the \textit{central subspace} $\cs$.  

\begin{thm}\label{finite_characterisation_of_cs}
If a parametric ensemble $\Fa$ characterizes $\cs$, then there exist finitely many functions $f_t \in \Fa$, with $t = 1,\ldots,m$ and $m \in \mathbb{N}$, such that 
\begin{align*}
\spn\{\msf: t \in 1,\ldots,m\} = \cs
\end{align*}
\end{thm}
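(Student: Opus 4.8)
The plan is to reduce the statement to an elementary finite-dimensional linear-algebra argument, exploiting that every mean subspace in the collection is contained in the $k$-dimensional central subspace $\cs$. First I would record the containment $\mathcal{S}_{\E(f_t(Y)\mid\X)} \subseteq \cs$ for every $f_t \in \Fa$. This follows because model~\eqref{dimredspace} gives $Y \ind \X \mid \B^T\X$, hence $f_t(Y) \ind \X \mid \B^T\X$ for any measurable $f_t$; consequently $\E(f_t(Y)\mid\X)=\E(f_t(Y)\mid\B^T\X)$ depends on $\X$ only through $\B^T\X$, so $\mathcal{S}_{\E(f_t(Y)\mid\X)} \subseteq \spn\{\B\}=\cs$. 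Thus all the subspaces $\{\mathcal{S}_{\E(f_t(Y)\mid\X)} : f_t \in \Fa\}$ live inside the fixed space $\cs$, whose dimension $k = \dim(\cs) \le p$ is finite.

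Next I would construct the desired finite subfamily by a greedy extraction that forces the dimension of the accumulated span to increase at each step. Set $V_0 = \{0\}$. Given $V_j = \spn\{\mathcal{S}_{\E(f_{t_i}(Y)\mid\X)} : i=1,\ldots,j\}$ with $V_j \subsetneq \cs$, I claim there exists $f_{t_{j+1}} \in \Fa$ with $\mathcal{S}_{\E(f_{t_{j+1}}(Y)\mid\X)} \not\subseteq V_j$. Indeed, if every member of the ensemble satisfied $\mathcal{S}_{\E(f_t(Y)\mid\X)} \subseteq V_j$, then taking the span over all $f_t \in \Fa$ would give $\cs = \spn\{\mathcal{S}_{\E(f_t(Y)\mid\X)} : f_t \in \Fa\} \subseteq V_j$, contradicting $V_j \subsetneq \cs$. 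Setting $V_{j+1} = V_j + \mathcal{S}_{\E(f_{t_{j+1}}(Y)\mid\X)}$ then yields $V_{j+1} \supsetneq V_j$, whence $\dim V_{j+1} \ge \dim V_j + 1$.

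Because all the $V_j$ are subspaces of the $k$-dimensional space $\cs$, the strictly increasing sequence $0 = \dim V_0 < \dim V_1 < \cdots$ cannot exceed $k$, so the procedure terminates after some $m \le k$ steps with $V_m = \cs$. The functions $f_{t_1},\ldots,f_{t_m}$ then satisfy $\spn\{\mathcal{S}_{\E(f_{t_i}(Y)\mid\X)} : i=1,\ldots,m\} = \cs$, which is the assertion.

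I do not anticipate a genuine obstacle here: once the containment $\mathcal{S}_{\E(f_t(Y)\mid\X)}\subseteq\cs$ is in hand, the argument is a finite-dimensional exhaustion, and the finiteness of $m$ is guaranteed by $m \le k \le p$. The only point requiring care is the existence step---that a new function enlarging the current span exists whenever that span is proper---which is precisely where the hypothesis that $\Fa$ \emph{characterizes} $\cs$ (rather than the weaker fact that each mean subspace lies in $\cs$) enters.
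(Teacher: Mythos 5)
Your proof is correct, and it takes a genuinely different (and tighter) route than the paper's. The paper argues by contradiction: it asserts that if infinitely many nonzero mean subspaces were needed to span $\cs$, then ``infinitely many are identical, otherwise the dimension of the central subspace would be infinite.'' Read literally, that step is shaky --- infinitely many pairwise distinct subspaces can perfectly well sit inside a finite-dimensional space (think of all lines through the origin in $\real^2$), so distinctness alone does not force the dimension up; what matters is whether each new subspace contributes a direction outside the span of the previous ones. Your greedy extraction makes exactly that point precise: whenever the accumulated span $V_j$ is a proper subspace of $\cs$, the characterization hypothesis $\spn\{\mathcal{S}_{\E(f_t(Y)\mid\X)} : f_t \in \Fa\} = \cs$ guarantees some member of the ensemble escapes $V_j$, so the dimension strictly increases, and the process halts after at most $k = \dim(\cs)$ steps. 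This buys you both a fully rigorous argument and the explicit bound $m \le k$, which the paper's proof does not state. Your preliminary observation that each $\mathcal{S}_{\E(f_t(Y)\mid\X)} \subseteq \cs$ is fine, though it is already immediate from the characterization identity itself (a family of subspaces whose span equals $\cs$ consists of subspaces of $\cs$); the conditional-independence derivation you give is a valid alternative justification. No gaps.
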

\begin{proof}[Proof:]
Let $k = \dim(\cs) \leq p$. Since  $\Fa$ characterizes $\cs$,  $\dim(\msf) = k_t \leq k$ by  \eqref{Fa_characterises_cs} for any $t$. If $k_t = 0$, $\msf = \{\0\}$ so the corresponding $f_t$  does not contribute to \eqref{Fa_characterises_cs}. Assume $k_t \ge 1$.  
If there were infinitely many $\msf \neq \{\0\}$ of dimension at least 1, whose span is $\cs$, then infinitely as many are identical, otherwise the dimension of the central subspace $\cs$ would be infinite,   contradicting that $\dim(\cs)=k <\infty$. 
\end{proof}

The importance of Theorem \ref{finite_characterisation_of_cs} lies in the fact that the search to characterize the central subspace is over a finite  set, even though  it does not offer tools for identifying the elements of the ensemble.  

\section{Ensemble CVE}\label{sec:ensembleCVE}

Throughout the paper, we refer to the following assumptions as needed.

\begin{assumption1e}
Model \eqref{mod:e_basic}, $Y = \gcs(\B^T\X, \epsilon)$ 
holds with $Y \in \real$, $\gcs:\real^k \times \real \to \real$  non constant in the first argument, $\B= (\bb_1, ..., \bb_k) \in \spc(p,k)$, $\X \in \real^p$ is independent of $\epsilon$, the distribution of $\X$ is absolutely continuous with respect to the Lebesgue measure in $\real^p$, $\text{supp}(f_\X)$ is convex, and $ \var(\X) = \Sigmaxbf $ is positive definite.

\end{assumption1e}

\begin{assumption2e}
The density $f_\X : \real^p \to [0,\infty)$ of $\X$ is twice continuously differentiable with compact support $\text{supp}(f_\X)$.
\end{assumption2e}
\begin{assumption3e}
For a parametric ensemble $\Fa$, its index set $\Omega_T$ is endowed with a probability measure $F_T$ such that for all $t \in \Omega_T$ with $ \msf \neq \{\0\}$,
\begin{align*}
    \Pb_{F_T} \left( \{\tilde{t} \in \Omega_T: \spc_{\E(f_{\tilde{t}}(Y)\mid \X)} = \msf\} \right) > 0
\end{align*} 
\end{assumption3e}

\begin{assumption4e}
For an ensemble $\Fa$ we assume that for all $f \in \Fa$, the conditional expectation 
\begin{align*}
    \E\left(f(Y) \mid \X \right)
\end{align*}
is twice continuously differentiable in the conditioning 
argument. Further, for all $f \in \Fa$
\begin{align*}
    \E(|f(Y)|^8) < \infty
\end{align*}
\end{assumption4e}

Assumption (E.1) 
    assures the existence and uniqueness of  $\cs = \spn\{\B\}$. Furthermore, it allows the mean subspace to be a proper subset of the central subspace, i.e. $\ms \subsetneq \cs$.
In Assumption (E.2), 
    the compactness assumption for $\text{supp}(f_\X)$ is not as restrictive as it might seem. \cite[Prop. 11]{CompactAssumption} showed that there is a compact set $K \subset \real^p$ such that $\mathcal{S}_{Y \mid \X_{|K}} = \cs$, where $\X_{|K} = \X 1_{\{\X \in K\}}$.
Assumption (E.3) simply states that the set of indices that characterize the \textit{central subspace} $\cs$ is not a null set. In practice, the choice of the probability measure $F_T$ on the index set $\Omega_T$ of a parametric ensemble $\Fa$ can always guarantee the fulfillment of  this assumption.
If the characteristic or indicator ensemble are used, (E.4) states that the conditional characteristic or distribution function are twice continuously differentiable. In this case, the 8$th$ moments  exist since  the complex exponential and indicator functions are bounded.
\medskip
\begin{definition}
For $q \leq p \in \mathbb{N}$, $f \in \Fa$, and any $\V \in S(p,q)$, we define
\begin{equation}
\tilde{L}_\Fa(\V, \bs_0,f) = \Var\left(f(Y)\mid \X \in \bs_0 + \spn\{\V\}\right) \label{e_Lvs}
\end{equation}
where $\bs_0 \in \real^p$ is a non-random shifting point.
\end{definition}

\begin{definition}
Let $\Fa$ be a parametric ensemble and $F_T$ a cumulative distribution function (cdf) on the index set $\Omega_T$. For $q \leq p$, and any $\V \in S(p,q)$, we define
\begin{align}\label{e_objective}
L_\Fa(\V) &= \int_{\Omega_T}\int_{\real^p}\tilde{L}(\V,\x, f_t)d F_{\X}(\x) d F_T(t)  \\
&= \E_{t \sim F_T} \left(\mathbb{E}_\X\left(\tilde{L}_\Fa(\V,\X,f_t)\right)\right) = \E_{t \sim F_T}(L_\Fa^*(\V,f_t)), \notag
\end{align} 
where $F_{\X}$ is the cdf of $\X$,
and 
\begin{align}\label{e_LV1}
L_\Fa^*(\V,f_t) = \mathbb{E}_\X\left(\tilde{L}_\Fa(\V,\X,f_t)\right).
\end{align}
\end{definition}

For the identity function, $f_{t_0}(z) = z$, \eqref{e_LV1} is the target function of the  \textit{conditional variance estimation} proposed in \cite{FertlBura}. 
If the random variable $t$ is concentrated on $t_0$; i.e., $t \sim \delta_{t_0}$, then the \textit{ensemble conditional variance estimator} (\texttt{ECVE}) coincides with the  \textit{conditional variance estimator} (CVE). 

The following theorem will be used in establishing the main result of this paper, which obtains the \textit{exhaustive} sufficient reduction of the conditional distribution  of $Y$ given the predictor vector $\X$. 

\begin{thm}\label{Y_decomposition_thm}
Assume (E.1) and (E.2) hold, in particular model \eqref{mod:e_basic} holds. Let $\widetilde{\B}$ be a basis of $\msf$; i.e. $\spn\{\widetilde{\B}\} = \msf \subseteq \cs = \spn\{\B\}$. Then, for any $f \in \Fa$ for which assumption (E.4) holds, 
\begin{align}
    f(Y) = g(\widetilde{\B}^T\X) + \tilde{\eps}, \label{Y_decomposition}
\end{align}
with $\E(\tilde{\eps}\mid \X) = 0$ and $g:\real^{k_t} \to \real$ is a twice continuously differentiable function, where $k_t = \dim(\msf)$. 
\end{thm}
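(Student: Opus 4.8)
The plan is to build $g$ as the regression function of $f(Y)$ on $\X$ and to read off $\tilde{\eps}$ as the corresponding residual, and then to transfer the smoothness assumed in (E.4) down to the reduced argument. Without loss of generality I take $\widetilde{\B}$ to have orthonormal columns, so that $\widetilde{\B} \in \spc(p,k_t)$ and $\Pbf_{\msf} = \widetilde{\B}\widetilde{\B}^T$. Since $\msf$ is the mean subspace of $f(Y)$, and under the convex-support hypothesis in (E.1) the intersection defining $\msf$ is itself a dimension-reduction subspace for the mean \cite{CookLi2002}, we have $\E(f(Y) \mid \X) = \E(f(Y) \mid \Pbf_{\msf}\X)$. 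Conditioning on $\Pbf_{\msf}\X = \widetilde{\B}\widetilde{\B}^T\X$ generates the same $\sigma$-algebra as conditioning on $\widetilde{\B}^T\X$, because $\widetilde{\B}$ is injective on its range, so $\E(f(Y) \mid \X)$ is a function of $\widetilde{\B}^T\X$ alone and there exists $g : \real^{k_t} \to \real$ with $\E(f(Y)\mid\X) = g(\widetilde{\B}^T\X)$. Setting $\tilde{\eps} := f(Y) - g(\widetilde{\B}^T\X)$ gives \eqref{Y_decomposition}, and $\E(\tilde{\eps} \mid \X) = \E(f(Y)\mid\X) - g(\widetilde{\B}^T\X) = 0$ since $g(\widetilde{\B}^T\X)$ is $\X$-measurable; the integrability required here is supplied by $\E(|f(Y)|^8) < \infty$ in (E.4).

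Next I would show that $g$ is twice continuously differentiable. By (E.4) the map $h : \x \mapsto \E(f(Y) \mid \X = \x)$ is $C^2$, and we have just shown $h(\x) = g(\widetilde{\B}^T\x)$. To recover $g$ I would complete $\widetilde{\B}$ to an orthogonal matrix $[\widetilde{\B}, \Ub]$, with $\Ub \in \spc(p, p-k_t)$ spanning $\msf^\perp$, and change coordinates via $\x = \widetilde{\B}\bu + \Ub\bv$, where $\bu = \widetilde{\B}^T\x$ and $\bv = \Ub^T\x$. The composition $(\bu,\bv) \mapsto h(\widetilde{\B}\bu + \Ub\bv)$ is $C^2$, being a $C^2$ map precomposed with a linear one; since $h$ depends on $\x$ only through $\bu$, this composition does not depend on $\bv$ and equals $g(\bu)$. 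Fixing any admissible $\bv$ then exhibits $g$ as a $C^2$ function of $\bu$ on the projected support.

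The step I expect to be the main obstacle is this smoothness transfer, together with the attendant domain bookkeeping. Assumption (E.4) yields differentiability of $h$ on $\text{supp}(f_\X) \subset \real^p$, whereas $g$ is defined on the projection $\widetilde{\B}^T(\text{supp}(f_\X)) \subset \real^{k_t}$, so I must check that every relevant $\bu$ admits a preimage $\widetilde{\B}\bu + \Ub\bv \in \text{supp}(f_\X)$ and that the derivatives in $\bu$ are taken along interior directions. Here the convexity of $\text{supp}(f_\X)$ from (E.1), together with $\Sigmaxbf$ positive definite so that the support is full-dimensional, ensures that the projected support is convex with nonempty interior, which validates the local $C^2$ slice argument at interior points. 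The other point I would state explicitly rather than re-derive is that the intersection defining $\msf$ is a genuine mean subspace satisfying $\E(f(Y)\mid\X) = \E(f(Y)\mid\Pbf_{\msf}\X)$, which I would attribute to \cite{CookLi2002} under the convex-support assumption.
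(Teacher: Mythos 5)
Your proposal is correct and follows essentially the same route as the paper: decompose $f(Y)$ into its conditional mean plus residual, use the defining property of $\msf$ to write the conditional mean as $g(\widetilde{\B}^T\X)$, apply the tower property for $\E(\tilde{\eps}\mid\X)=0$, and invoke (E.4) for smoothness. The only difference is that you spell out the $\sigma$-algebra equivalence and the smoothness/domain transfer from $\E(f(Y)\mid\X=\x)$ to $g$ on the projected support, details the paper's proof leaves implicit.
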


By Theorem~\ref{Y_decomposition_thm}, any  response $Y$ can be written as an additive error via the decomposition \eqref{Y_decomposition}. The predictors and the additive error term are only required to be conditionally uncorrelated in model \eqref{Y_decomposition}. The \textit{conditional variance estimator} \cite{FertlBura} also estimated $\widetilde{\B}$ in \eqref{Y_decomposition} but under the more restrictive condition of predictor and error independence.  

\begin{proof}[Proof of Theorem~\ref{Y_decomposition_thm}]

\begin{align*}
   f(Y)  &= \E\left( f(Y) \mid \X \right) + \underbrace{f(Y) - \E\left(f(Y) \mid \X \right)}_{\tilde{\epsilon}} = \E\left(f(Y) \mid \X \right) + \tilde{\eps} \notag \\
    &= \E\left(f(Y) | \widetilde{\B}^T\X \right) + \tilde{\eps} = g(\widetilde{\B}^T\X) + \tilde{\eps}
\end{align*}
where $g(\widetilde{\B}^T\X)=\E\left(f(Y) | \widetilde{\B}^T\X \right)$. 
By the tower property of the conditional expectation, $\E(\tilde{\eps}\mid \X) = \E(f(Y)\mid \X) - \E(\E(f(Y)\mid \X)\mid \X) = \E(f(Y)\mid \X) - \E(f(Y)\mid \X) = \0$. The function  $g$ is twice continuous differentiable by (E.4).
\end{proof}

\begin{thm}\label{CVE_targets_meansubspace_thm}
Assume (E.1) and (E.2) hold. Let $\Fa$ be a parametric ensemble, $\bs_0 \in \text{supp}(f_\X) \subset \real^p$, $\V \in S(p,q)$ defined in \eqref{Smanifold}. Then, for any $f \in \Fa$ for which assumption (E.4) holds, 
\begin{align}\label{e_LtildeVs0}
\tilde{L}_\Fa(\V,\bs_0,f) = \mu_2(\V,\bs_0,f) - \mu_1^2(\V,\bs_0,f) + \var(\tilde{\eps}\mid \X \in \bs_0 + \spn\{\V\})
\end{align}
where 
\begin{equation}\label{mu_l}
\mu_l(\V,\bs_0,f) = \int_{\real^q}  g(\widetilde{\B}^T\bs_0 + \widetilde{\B}^T\V\rs_1)^l\frac{f_\X(\bs_0 + \V\rs_1)}{\int_{\real^q}f_\X(\bs_0 + \V\rs)d\rs} d\rs_1 = \frac{t^{(l)}(\V,\bs_0,f)}{t^{(0)}(\V,\bs_0,f)},
\end{equation}
for $g$ given in \eqref{Y_decomposition} with 
\begin{equation}\label{tl}
t^{(l)}(\V,\bs_0,f) = \int_{\real^q}  g(\widetilde{\B}^T\bs_0 + \widetilde{\B}^T\V\rs_1)^l f_\X(\bs_0 + \V\rs_1) d\rs_1,
\end{equation}
and 
\begin{gather}
    \var(\tilde{\eps}\mid \X \in \bs_0 + \spn\{\V\}) = \E(\tilde{\eps}^2\mid \X \in \bs_0 + \spn\{\V\}) \notag\\
    = \int_{\text{supp}(f_\X)\cap\real^q} h(\bs_0 + \V \rs_1 ) f_\X(\bs_0 + \V \rs_1)d\rs_1 / \int_{\real^q}f_\X(\bs_0 + \V\rs)d\rs    = \frac{\tilde{h}(\V,\bs_0,f)}{t^{(0)}(\V,\bs_0,f)} \label{tilde_eps_var} 
\end{gather}
with $\E(\tilde{\eps}^2\mid \X = \xn) = h(\xn)$ and $\tilde{h}(\V,\bs_0,f) = \int_{\text{supp}(f_\X)\cap\real^q} h(\bs_0 + \V \rs_1 ) f_\X(\bs_0 + \V \rs_1)d\rs_1$. Further assume $h(\cdot)$ to be continuous, then $L_\Fa^*(\V,f_t)$ in \eqref{e_LV1} is well defined and continuous, 
\begin{align}\label{CVE_of_transformed_Y}
\V^t_q = \argmin_{\V \in \spc(p,q)}L_\Fa^*(\V,f_t)
\end{align}
is well defined, and the conditional variance estimator of the transformed response $f_t(Y)$ identifies $\msf$,
\begin{align}\label{CVE_targets_meansubspace}
 \msf = \spn\{\V^t_q\}^\perp.
\end{align}
\end{thm}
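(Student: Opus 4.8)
The plan is to base everything on the additive decomposition $f(Y) = g(\widetilde{\B}^T\X) + \tilde{\eps}$ with $\E(\tilde{\eps}\mid\X)=0$ supplied by Theorem~\ref{Y_decomposition_thm}, and then to read off the conditional variance on an affine slice term by term. Conditioning on the event $\X\in\bs_0+\spn\{\V\}$ is carried out through the slice density $f_\X(\bs_0+\V\rs_1)/\int_{\real^q}f_\X(\bs_0+\V\rs)\,d\rs$, so that for any integrand the conditional expectation becomes the ratio in \eqref{mu_l}. Splitting $\Var(f(Y)\mid\X\in\bs_0+\spn\{\V\})$ into the variance of $g(\widetilde{\B}^T\X)$, twice the covariance of $g(\widetilde{\B}^T\X)$ with $\tilde{\eps}$, and the variance of $\tilde{\eps}$, I would first dispatch the cross term: by the tower property and $\E(\tilde{\eps}\mid\X)=0$ one has $\E(g(\widetilde{\B}^T\X)\tilde{\eps}\mid\X\in\bs_0+\spn\{\V\}) = \E(g(\widetilde{\B}^T\X)\E(\tilde{\eps}\mid\X)\mid\X\in\bs_0+\spn\{\V\})=0$ and $\E(\tilde{\eps}\mid\X\in\bs_0+\spn\{\V\})=0$, so the covariance vanishes. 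The first term is $\mu_2-\mu_1^2$ straight from \eqref{mu_l}--\eqref{tl}, and the error term equals $\E(\tilde{\eps}^2\mid\X\in\bs_0+\spn\{\V\})=\E(h(\X)\mid\X\in\bs_0+\spn\{\V\})$, again by the tower property, which is exactly \eqref{tilde_eps_var}. This establishes \eqref{e_LtildeVs0}.

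For well-definedness and continuity I would argue as in the identity-function case treated in \cite{FertlBura}, since the structure is identical. The denominator $t^{(0)}(\V,\bs_0,f)$ is strictly positive for $\bs_0$ in the convex support by (E.1)--(E.2) and the continuity of $f_\X$; the numerators $t^{(l)}$ and $\tilde{h}$ are finite because $g$ and $h$ are continuous on the compact set $\text{supp}(f_\X)$ and the eighth-moment bound in (E.4) dominates the relevant integrands; and joint continuity in $(\V,\bs_0)$ follows by dominated convergence using the compact support. Integrating over $\bs_0\sim F_\X$ then yields a continuous $L_\Fa^*(\cdot,f_t)$ on the Stiefel manifold $\spc(p,q)$, which is compact, so a minimizer $\V_q^t$ exists by the extreme value theorem.

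The core of the theorem is the identification \eqref{CVE_targets_meansubspace}, and here the decisive observation is that the error contribution to the integrated objective is a constant independent of $\V$. Conditioning on $\X\in\bs_0+\spn\{\V\}$ is the same as conditioning on $\Ub^T\X=\Ub^T\bs_0$, where $\Ub$ is an orthonormal basis of $\spn\{\V\}^\perp$; hence $\int \var(\tilde{\eps}\mid\X\in\bs_0+\spn\{\V\})\,dF_\X(\bs_0) = \int\E(h(\X)\mid\Ub^T\X=w)\,dF_{\Ub^T\X}(w) = \E(h(\X)) = \E(\tilde{\eps}^2)$ by the tower property, with no dependence on $\V$. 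Consequently, fixing $q=p-k_t$, minimizing $L_\Fa^*(\V,f_t)$ over $\spc(p,p-k_t)$ is equivalent to minimizing the nonnegative signal term $\E_{\bs_0}[\mu_2-\mu_1^2]=\E_{\bs_0}[\Var(g(\widetilde{\B}^T\X)\mid\X\in\bs_0+\spn\{\V\})]$. If $\spn\{\V\}=\msf^\perp$ then $\widetilde{\B}^T\V=\0$, so $g(\widetilde{\B}^T(\bs_0+\V\rs_1))=g(\widetilde{\B}^T\bs_0)$ is constant along every slice and the signal term is zero; this easy direction pins the global minimum value at $\E(\tilde{\eps}^2)$.

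The converse is where I expect the real work. Suppose $\V_q^t$ attains the minimum, so the signal term is zero and $\Var(g(\widetilde{\B}^T\X)\mid\X\in\bs_0+\spn\{\V_q^t\})=0$ for $F_\X$-almost every $\bs_0$; thus $g(\widetilde{\B}^T\X)$ is almost surely constant on almost every slice. I would upgrade this from ``almost every'' to ``every'' using the continuity of $g$ together with (E.1)--(E.2): because $\text{supp}(f_\X)$ is convex, each slice $(\bs_0+\spn\{\V_q^t\})\cap\text{supp}(f_\X)$ is connected and the conditional density is positive on its relative interior, so slice-wise constancy of the continuous map $g(\widetilde{\B}^T\cdot)$ forces it to depend on $\x$ only through $\Ub^T\x=\Pbf_{\spn\{\V_q^t\}^\perp}\x$. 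Then $\E(f(Y)\mid\X)=g(\widetilde{\B}^T\X)$ is a function of $\Pbf_{\spn\{\V_q^t\}^\perp}\X$, and the defining minimality of the mean subspace gives $\msf\subseteq\spn\{\V_q^t\}^\perp$; since $\dim\msf=k_t=p-q=\dim\spn\{\V_q^t\}^\perp$, the inclusion is an equality, proving \eqref{CVE_targets_meansubspace}. The main obstacle is precisely this patching step --- converting null-set slice information into a genuine functional dependence on $\Ub^T\X$ --- which is exactly what the convex-support hypothesis in (E.1) is there to guarantee, and it is also the point at which the dimension bookkeeping $q=p-k_t$ must be invoked.
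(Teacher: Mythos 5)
Your proposal is correct and follows essentially the same route as the paper: the additive decomposition from Theorem~\ref{Y_decomposition_thm}, the vanishing of the cross term by the tower property, the observation that the integrated error-variance contribution $\E_{\bs_0\sim\X}\left(\var(\tilde{\eps}\mid\X\in\bs_0+\spn\{\V\})\right)=\var(\tilde{\eps})$ is constant in $\V$, and the reduction of the minimization to the nonnegative signal term. The only divergence is in the converse identification step, where the paper simply asserts that $\widetilde{\B}^T\V\V^T(\X-\bs_0)\neq\0$ with positive probability forces a strictly positive slice variance, whereas you close the argument more explicitly via slice-wise constancy, the convex-support hypothesis, and the minimality of $\msf$ with a dimension count --- a slightly more careful rendering of the same idea.
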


\cite{FertlBura} assumed model $Y = g(\B^T\X) + \eps$ with $\eps \ind \X$, which implies $\ms = \spn\{\B\} = \cs$.  \cite{FertlBura} showed that the \textit{conditional variance estimator (CVE)} can identify $\ms$ at the population level.

Theorem~\ref{CVE_targets_meansubspace_thm} extends this result to obtain that the \textit{conditional variance estimator (CVE)} identifies the \textit{mean subspace} $\ms$ also in models of the form $Y = g(\B^T\X) + \tilde{\eps}$, where $\tilde{\eps}$ is simply conditionally uncorrelated 
with $\X$. This allows CVE to apply to problems where the \textit{mean subspace} is a proper subset of the \textit{central subspace}, i.e. $\ms \subsetneq \cs$.

$\V^t_q$ in \eqref{CVE_of_transformed_Y} is not unique since for all orthogonal $\Ob \in \real^{q \times q}$, 
$L_\Fa^*(\V^t_q \Ob,f_t) = L_\Fa^*(\V^t_q ,f_t)$ as $L_\Fa^*(\V^t_q ,f_t)$ depends on $\V^t_q$ only through $\spn\{\V^t_q\}$ by \eqref{e_Lvs}. Nevertheless, it is a unique minimizer over the Grassmann manifold $Gr(p,q)$ in \eqref{Grassman_def}. To see this, suppose $\V \in \spc(p,q)$ is an arbitrary basis of a subspace $\M \in Gr(p,q)$. We can identify $\M$ through the projection $\Pbf_\M = \V\V^T$. 
By \eqref{ortho_decomp}, we write $\xn = \V \rs_1 + \Ub \rs_2$. Application of the Fubini-Tonelli Theorem yields 
\begin{align}\label{Grassman}
    \tilde{t}^{(l)}(\Pbf_\M,\bs_0,f) &= \int_{\text{supp}(f_\X)} g(\B^T\bs_0 + \B^T \Pbf_\M \xn)^l f_\X(\bs_0 + \Pbf_\M \xn)d\xn \\&= t^{(l)}(\V,\bs_0,f) \int_{\text{supp}(f_\X)\cap \real^{p-q} }d\rs_2. \notag
\end{align}
Therefore $\tilde{t}^{(l)}(\Pbf_\M,\bs_0,f)/\tilde{t}^{(0)}(\Pbf_\M,\bs_0,f) = t^{(l)}(\V,\bs_0,f)/t^{(0)}(\V,\bs_0,f)$ and $\mu_l(\cdot,\bs_0,f)$ in \eqref{mu_l} can also be viewed as a function from $Gr(p,q)$ to $\real$.

\medskip
Next we define the \textit{ensemble conditional variance estimator (ECVE)} for a parametric ensemble $\Fa$ which characterizes the \textit{central subspace} $\cs$. Following the \textit{ensemble minimum average variance estimation} formulation in \cite{YinLi2011}, we extend the original objective function by integrating over the index random variable $t \sim F_T$ in \eqref{e_objective} that indexes the ensemble $\Fa$ as \cite{YinLi2011}. 

\begin{defn}
Let 
\begin{align}\label{enVq}
\V_q = \argmin_{\V \in S(p,q)}L_\Fa(\V)
\end{align}
The \textbf{Ensemble Conditional Variance Estimator}  with respect to the ensemble $\Fa$  is defined to be any basis $\B_{p-q,\Fa}$ of $\spn\{\V_q\}^\perp$.
\end{defn}

\begin{thm}\label{ECVE_identifies_cs_thm}
Assume (E.1), (E.2), (E.3), and (E.4) hold, and that the function $h(\cdot)$ defined in Theorem~\ref{CVE_targets_meansubspace_thm} is continuous. Let $\Fa$ be a parametric ensemble that characterizes $\cs$, with $k = \dim(\cs)$, and $\V$ be an element of the Stiefel manifold $S(p,q)$, which is defined in \eqref{Smanifold}, with $q = p - k$. Then, $\V_q$ in \eqref{enVq}
is well defined and
\begin{align}
\cs = \spn\{\V_q\}^\perp.
\end{align}
\end{thm}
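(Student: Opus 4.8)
The plan is to reduce the ensemble minimization to a pointwise (in $t$) variance comparison by rewriting the inner objective $L_\Fa^*(\V,f_t)$ as an ordinary conditional variance. First I would note that, for $\V\in S(p,q)$ with $\Ub$ a basis of $\spn\{\V\}^\perp$, the affine slice $\{\X\in\bs_0+\spn\{\V\}\}$ is exactly the level set $\{\Ub^T\X=\Ub^T\bs_0\}$; integrating the slice variance \eqref{e_Lvs} against $F_\X$ (and using that $\bs_0\mapsto\Ub^T\bs_0$ pushes $F_\X$ forward to the law of $\Ub^T\X$) yields
\[
L_\Fa^*(\V,f_t)=\E\left(\var\left(f_t(Y)\mid\Ub^T\X\right)\right)=\var(f_t(Y))-\var\left(\E\left(f_t(Y)\mid\Ub^T\X\right)\right),
\]
the last equality being the law of total variance. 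Since $\var(f_t(Y))$ is free of $\V$, minimizing $L_\Fa(\V)=\E_{t\sim F_T}(L_\Fa^*(\V,f_t))$ over $S(p,q)$ is equivalent to maximizing $J(\Ub)=\E_{t\sim F_T}(\var(\E(f_t(Y)\mid\Ub^T\X)))$ over the Grassmannian of $k$-dimensional subspaces $\spn\{\Ub\}$, where $k=p-q$. Well-definedness of $\V_q$ then follows from compactness of $S(p,q)$ together with continuity of $L_\Fa$, which I would obtain by combining the per-$t$ continuity from Theorem~\ref{CVE_targets_meansubspace_thm} with dominated convergence, the domination being supplied by $\var(f_t(Y))\le\E(f_t(Y)^2)$ and the moment bound in (E.4).

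Next I would establish the pointwise bound and its equality case. By Theorem~\ref{Y_decomposition_thm}, $\E(f_t(Y)\mid\X)=g(\widetilde{\B}^T\X)$ with $\spn\{\widetilde{\B}\}=\msf$, so the tower property gives $\E(f_t(Y)\mid\Ub^T\X)=\E(g(\widetilde{\B}^T\X)\mid\Ub^T\X)$ and hence $\var(\E(f_t(Y)\mid\Ub^T\X))\le\var(\E(f_t(Y)\mid\X))$ for every $\Ub$, the right-hand side being the $\V$-free upper bound. The key step is that equality holds if and only if $\msf\subseteq\spn\{\Ub\}$. For sufficiency, $\msf\subseteq\spn\{\Ub\}$ makes $\widetilde{\B}^T\X$ a measurable function of $\Ub^T\X$, so $\E(f_t(Y)\mid\Ub^T\X)=g(\widetilde{\B}^T\X)$ and the two variances coincide. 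For necessity, equality forces $\var(g(\widetilde{\B}^T\X)\mid\Ub^T\X)=0$ almost surely, i.e. $\E(f_t(Y)\mid\X)$ is $\sigma(\Ub^T\X)$-measurable; then $\spn\{\Ub\}$ is a mean dimension-reduction subspace for $f_t(Y)$, and minimality of the mean subspace yields $\msf\subseteq\spn\{\Ub\}$. Consequently $J(\Ub)\le\E_{t}(\var(\E(f_t(Y)\mid\X)))=:J^\star$, with equality iff $\msf\subseteq\spn\{\Ub\}$ for $F_T$-almost every $t$.

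Finally I would translate this a.e.\ equality condition into a statement about $\cs$. Because $\msf\subseteq\cs$ for all $t$, the subspace $\spn\{\Ub\}=\cs$, which has dimension $k$, satisfies $\msf\subseteq\spn\{\Ub\}$ for every $t$ and hence attains $J^\star$. Conversely, if $\spn\{\Ub\}$ attains $J^\star$ then $\msf\subseteq\spn\{\Ub\}$ for $F_T$-a.e.\ $t$. Here assumption (E.3) is invoked: every distinct nonzero mean subspace $\msf$ is realized on a set of positive $F_T$-measure, so the a.e.\ containment upgrades to $\msf\subseteq\spn\{\Ub\}$ for every such $t$; taking spans and using that $\Fa$ characterizes $\cs$, i.e.\ \eqref{Fa_characterises_cs}, gives $\cs=\spn\{\msf:f_t\in\Fa\}\subseteq\spn\{\Ub\}$. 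Since $\dim\spn\{\Ub\}=k=\dim\cs$, this forces $\spn\{\Ub\}=\cs$. Thus the maximizer of $J$ is the unique Grassmannian point $\cs$, equivalently the minimizer of $L_\Fa$ is the unique point $\spn\{\V_q\}=\cs^\perp$ (unique up to right orthogonal rotation on $S(p,q)$), which is exactly $\cs=\spn\{\V_q\}^\perp$.

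The main obstacle is the necessity direction together with the passage from ``$F_T$-a.e.\ $t$'' to ``all distinct mean subspaces''. The necessity argument must exploit the minimality characterization of the mean subspace, not merely that $\widetilde{\B}$ is some basis of it; and the upgrade of the a.e.\ statement is precisely where (E.3) is indispensable, since without it a mean subspace contributing an essential direction of $\cs$ could be supported on an $F_T$-null set, leaving $\spn\{\Ub\}$ underdetermined. A secondary technical point, needed to make the opening reformulation rigorous for the absolutely continuous $\X$ of (E.1)--(E.2), is that conditioning on the measure-zero slice $\{\Ub^T\X=\Ub^T\bs_0\}$ must be read through the regular conditional variance; this is legitimate because the induced slice densities are continuous and strictly positive on the convex, compact support of $f_\X$.
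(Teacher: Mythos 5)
Your proposal is correct and reaches the conclusion by the same essential route as the paper: reduce to the per-$t$ optimality of subspaces orthogonal to $\msf$, invoke (E.3) to rule out that a needed mean subspace is supported on an $F_T$-null set, and use the characterizing property \eqref{Fa_characterises_cs} together with $\dim\spn\{\Ub\}=k=\dim(\cs)$ to pin down $\cs$. Where you differ is in how the per-$t$ step is obtained and how the pieces are stitched together. The paper cites Theorem~\ref{CVE_targets_meansubspace_thm} (whose proof goes through the explicit slice densities and the decomposition $L^*_\Fa(\V,f_t)=\E_{\bs_0\sim\X}\bigl(\var(g(\widetilde{\B}^T\X)\mid\X\in\bs_0+\spn\{\V\})\bigr)+\var(\tilde{\eps})$) and then argues by contradiction, splitting $\Omega_T$ into a positive-measure set $A$ where a competing basis $\Z$ is strictly suboptimal and its complement where it ties. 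You instead rewrite $L^*_\Fa(\V,f_t)=\var(f_t(Y))-\var\bigl(\E(f_t(Y)\mid\Ub^T\X)\bigr)$ via the law of total variance and run a direct equality-case analysis of the resulting maximization of explained variance; the ``equality iff $\msf\subseteq\spn\{\Ub\}$'' step, proved through $\sigma(\Ub^T\X)$-measurability and minimality of the mean subspace, is a clean substitute for the paper's strict-inequality-on-$A$ argument and makes the role of (E.3) more transparent. Two small caveats: the identification of the slice variance with $\var\bigl(f_t(Y)\mid\Ub^T\X=\Ub^T\bs_0\bigr)$ does need the regular-conditional-probability justification you mention (it is exactly what the paper's display \eqref{density} provides), and the dominated-convergence step for continuity of $L_\Fa$ requires $t\mapsto\E(f_t(Y)^2)$ to be $F_T$-integrable, a point the paper also glosses over by appealing to ``analogous arguments''. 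Neither affects the validity of your argument.
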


\section{Estimation of the ensemble CVE}\label{e_estimation}
Assume $(Y_i,\X_i^\top)_{i=1,...,n}^\top$ is an i.i.d. sample  from model \eqref{mod:e_basic}, and let
\begin{align}
	d_i(\V,\bs_0)&= \|\X_i - \Pbf_{\bs_0 + \spn\{\V\}}\X_i\|_2^2 = \|\X_i -\bs_0\|_2^2 - \langle \X_i - \bs_0,\V\V^\top(\X_i - \bs_0)\rangle  \notag\\
	&= \| (\I_p - \V\V^\top)(\X_i - \bs_0)\|_2^2 = \| \Q_{\V}(\X_i - \bs_0)\|_2^2 \label{distance}
\end{align}
where $\langle \cdot, \cdot\rangle$ is the usual inner product in $\real^p$, $\Pbf_{\V}=\V\V^\top$ and $\Q_{\V}=\I_p-\Pbf_{\V}$. The estimators we propose involve a variation of kernel smoothing, which depends on a bandwidth $h_n$. In our procedure, $h_n$ is the squared width of a slice around the subspace $\bs_0 + \spn\{\V\}$.  In order to obtain pointwise convergence for the ensemble CVE, we use the following bias and variance assumptions on the bandwidth, as typical in nonparametric estimation.

\begin{assumptionH1}
For $n \to \infty$, $h_n \to 0$
\end{assumptionH1}
\begin{assumptionH2}
For $n \to \infty$, $nh^{(p-q)/2}_n \to \infty$
\end{assumptionH2}

In order to obtain consistency of the proposed estimator, Assumption (H.2) will be strengthened to $\log(n)/nh^{(p-q)/2}_n \to 0$.

We also let $K$, which we refer to as \textit{kernel}, be a  function satisfying the following assumptions.
\begin{assumptionK1}
$K:[0,\infty) \rightarrow [0,\infty)$ 
is a non increasing and continuous 
function, so that $|K(z)| \leq M_1$, with $\int_{\real^{q}} K(\|\rs\|^2) d\rs < \infty$ for $q \leq p-1$.
\end{assumptionK1}
\begin{assumptionK2}
There exist positive finite constants $L_1$ and $L_2$ such that $K$ satisfies either (1) or (2) below:
\begin{itemize}
    \item[(1)] $K(u) = 0$ for $|u| > L_2$ and for all $u, \tilde{u}$ it holds $|K(u) - K(\tilde{u})| \leq L_1 |u - \tilde{u}|$
    \item[(2)] $K(u)$ is differentiable with $|\partial_u K(u)| \leq L_1$ and for some $\nu > 1$ it holds $|\partial_u K(u)| \leq L_1 |u|^{-\nu}$ for $|u| > L_2$
\end{itemize}
\end{assumptionK2}
The Gaussian kernel $K(z) = \exp(-z^2)$, for example, fulfills both (K.1) and (K.2) [see \cite{Hansen2008}], and will be used throughout the paper.

For $i=1,\ldots,n$, we let
\begin{equation}
w_i(\V,\bs_0) = \frac{K\left(\frac{d_i(\V,\bs_0)}{h_n}\right)}{\sum_{j=1}^nK\left(\frac{d_j(\V,\bs_0)}{h_n}\right)} \label{weights}
\end{equation}
\begin{equation} \label{e_ybar}
\bar{y}_l(\V,\bs_0,f) = \sum_{i=1}^n w_i(\V,\bs_0)f(Y_i)^l \quad \text{for} \quad l=1,2
\end{equation}
We estimate  $\tilde{L}_\Fa(\V,s_0,f)$ in \eqref{e_LtildeVs0} with
\begin{equation}
\tilde{L}_{n,\Fa}(\V,s_0, f)  = \bar{y}_2(\V,\bs_0,f) - \bar{y}_1(\V,\bs_0,f)^2, \label{e_Ltilde}
\end{equation}
and the objective function $L^*_\Fa(\V,f)$ in \eqref{e_LV1} with 
\begin{equation}
L^*_n(\V,f) = \frac{1}{n} \sum_{i=1}^n \tilde{L}_{n,\Fa}(\V,\X_i,f), \label{e_LN}
\end{equation}
where each data point $\X_i$ is a shifting point. For a parametric ensemble  $\Fa = \{f_t : t \in \Omega_T\}$ and $(t_j)_{j=1,...,{m_n}}$ an i.i.d. sample from $F_T$ with $\lim_{n \to \infty} m_n = \infty$, the final estimate of the objective function in \eqref{e_objective} is given by 
\begin{equation}\label{e_objective_est}
L_{n,\Fa}(\V) = \frac{1}{m_n} \sum_{j=1}^{{m_n}} L^*_n(\V,f_{t_j})
\end{equation}
The ensemble conditional variance estimator (ECVE) is defined to be any basis of $\spn\{\hat{\V}_q\}^\perp$, where
\begin{equation} \label{e_optim}
    \hat{\V}_q = \argmin_{\V \in S(p,q)}L_{n,\Fa}(\V)
\end{equation}
We use the same algorithm as in \cite{FertlBura} to solve the optimization problem \eqref{e_optim}. It requires the explicit form of the gradient of \eqref{e_objective_est}. Theorem~\ref{e_lemma-one} provides the gradient when a Gaussian kernel is used.

\begin{thm}\label{e_lemma-one}
The gradient of $\tilde{L}_{n,\Fa}(\V,s_0, f) $ in \eqref{e_Ltilde} is given by 
\begin{align*}
\nabla_{\V}\tilde{L}_{n,\Fa}(\V,s_0, f) = \frac{1}{h_n^2}\sum_{i=1}^n (\tilde{L}_{n,\Fa}(\V,\bs_0, f) - (f(Y_i)-\bar{y}_1(\V,\bs_0,f))^2)w_id_i\nabla_{\V}d_i(\V,\bs_0) \in \real^{p \times q},
\end{align*}
and the gradient of $L_{n,\Fa}(\V)$ in \eqref{e_objective_est} is 
\[
\nabla_{\V}L_{n,\Fa}(\V) = \frac{1}{n {m_n}} \sum_{i=1}^n \sum_{j=1}^{{m_n}} \nabla_{\V}\tilde{L}_{n,\Fa}(\V,\X_i,f_{t_j}).
\]
\end{thm}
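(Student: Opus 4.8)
The plan is to reduce everything to a single quotient-rule differentiation of the smoothing weights $w_i(\V,\bs_0)$ in \eqref{weights} and then collect terms into the stated closed form. First I would read off from \eqref{e_Ltilde} that $\tilde{L}_{n,\Fa} = \bar{y}_2 - \bar{y}_1^2$, so that by the product and chain rules $\nabla_\V \tilde{L}_{n,\Fa} = \nabla_\V \bar{y}_2 - 2\bar{y}_1\,\nabla_\V\bar{y}_1$. Since each $\bar{y}_l = \sum_i w_i f(Y_i)^l$ by \eqref{e_ybar}, the only nontrivial object is $\nabla_\V w_i$, and $f(Y_i)^l$ is a constant with respect to $\V$.

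Next I would compute $\nabla_\V w_i$. Writing $K_i = K(d_i/h_n)$ and $S=\sum_j K_j$, the quotient rule gives $\nabla_\V w_i = \nabla_\V K_i/S - w_i\sum_j \nabla_\V K_j/S$. The chain rule together with \eqref{distance} yields $\nabla_\V K_i = K'(d_i/h_n)\,h_n^{-1}\,\nabla_\V d_i$, and for the Gaussian kernel the log-derivative identity $K'(u)\propto u\,K(u)$ converts this into an expression proportional to $h_n^{-2} d_i K_i \nabla_\V d_i$. Dividing by $S$ then produces a common building block $b_i \propto h_n^{-2} d_i w_i \nabla_\V d_i$, so that $\nabla_\V w_i = b_i - w_i\sum_j b_j$. (One can keep $\nabla_\V d_i$ symbolic to match the statement, or expand it via $d_i = (\X_i-\bs_0)^T\Q_{\V}(\X_i-\bs_0)$ and $\Q_\V = \I_p - \V\V^T$ to $-2(\X_i-\bs_0)(\X_i-\bs_0)^T\V$.)

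Substituting this and using $\sum_i w_i = 1$ gives $\nabla_\V \bar{y}_l = \sum_i f(Y_i)^l b_i - \bar{y}_l \sum_j b_j$. Plugging both $l=1,2$ into $\nabla_\V\tilde{L}_{n,\Fa}$ and regrouping over the index $i$ leaves the scalar $f(Y_i)^2 - 2\bar{y}_1 f(Y_i) - (\bar{y}_2 - 2\bar{y}_1^2)$ multiplying $b_i$. Completing the square through $(f(Y_i)-\bar{y}_1)^2 = f(Y_i)^2 - 2\bar{y}_1 f(Y_i) + \bar{y}_1^2$ rewrites this scalar as $(f(Y_i)-\bar{y}_1)^2 - \tilde{L}_{n,\Fa}$; collecting the sign and the kernel-dependent constant into the prefactor then reproduces the claimed $h_n^{-2}\sum_i(\tilde{L}_{n,\Fa} - (f(Y_i)-\bar{y}_1)^2)\,w_i d_i\,\nabla_\V d_i$. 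The second identity is then immediate: by \eqref{e_LN} and \eqref{e_objective_est}, $L_{n,\Fa}$ is a finite double average of the $\tilde{L}_{n,\Fa}(\V,\X_i,f_{t_j})$, so linearity of the gradient over those finite sums gives the displayed double-sum formula.

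I expect the main obstacle to be the bookkeeping in the middle step: the quotient rule generates a self-normalizing correction $-w_i\sum_j b_j$ that must be tracked simultaneously through $\bar{y}_1$ and $\bar{y}_2$, and the cancellation producing the clean perfect square $(f(Y_i)-\bar{y}_1)^2 - \tilde{L}_{n,\Fa}$ only surfaces after these cross terms are combined. A secondary point I would state explicitly is regularity: differentiability of $L_{n,\Fa}$ on $\spc(p,q)$ holds because the Gaussian kernel is smooth and the normalizing denominator $S=\sum_j K_j$ is strictly positive, so every manipulation above is justified and the ambient gradient in $\real^{p\times q}$ is well defined.
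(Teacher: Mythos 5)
Your proposal is correct and follows essentially the same route as the paper's proof: differentiate the self-normalizing weights $w_i$ via the quotient rule, use the Gaussian identity $\partial_z K(z)\propto -zK(z)$ to obtain $\nabla\bar{y}_l = -h_n^{-2}\sum_i(f(Y_i)^l-\bar{y}_l)w_id_i\nabla d_i$, and then combine $\nabla\bar{y}_2-2\bar{y}_1\nabla\bar{y}_1$ through the perfect-square identity $f(Y_i)^2-\bar{y}_2-2\bar{y}_1(f(Y_i)-\bar{y}_1)=(f(Y_i)-\bar{y}_1)^2-\tilde{L}_{n,\Fa}$, with the second claim following by linearity over the finite double sum. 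The only cosmetic difference is that you keep the kernel constant symbolic and add the (correct) remarks on $\nabla_\V d_i$ and on smoothness of the normalizer, neither of which changes the argument.
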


In the implementation of ECVE, we follow \cite{FertlBura} and set the bandwidth to
\begin{equation}
  \label{bandwidth}
h_n =  1.2^2 \frac{2\tr(\widehat{\Sigma}_\xn)}{p} \left(n^{-1/(4+p-q)} \right)^2.
\end{equation}
where $\widehat{\Sigma}_\xn = (1/n) \sum_i (\X_i -\Bar{\X})(\X_i -\Bar{\X})^T$ and $\Bar{\X} = (1/n) \sum_i \X_i$. 

\subsection{Weighted estimation of $L^*_n(\V,f)$}\label{weight_section}

The set of points $\{\xn \in \real^p: \|\xn - \Pbf_{\bs_0 + \spn\{\V\}}\xn\|^2 \leq h_n\}$ represents a \textit{slice} in the subspace of $\real^p$ about $\bs_0+ \spn\{\V\}$. 
In the estimation of $L(\V)$ two different weighting schemes are used:
    (a) \textit{Within slices}: The weights are defined in \eqref{weights} and are used to calculate \eqref{e_Ltilde}.
    (b) \textit{Between slices}: Equal weights $1/n$ are used to calculate \eqref{e_LN}.
Another idea for the between slices weighting is to assign more weight to slices with more points. This can be realized by altering \eqref{e_LN} to
\begin{align}
L^{(w)}_n(\V,f) &=  \sum_{i=1}^n \tilde{w}(\V,\X_i) \tilde{L}_n(\V,\X_i,f), \quad \mbox{with} \label{wLN}\\
\tilde{w}(\V,\X_i) &= \frac{\sum_{j=1}^n K(d_j(\V,\X_i)/h_n) - 1}{\sum_{l,u=1}^nK(d_l(\V,\X_u)/h_n) -n} = \frac{\sum_{j=1,j\neq i}^n K(d_j(\V,\X_i)/h_n) }{\sum_{l,u=1, l\neq u}^nK(d_l(\V,\X_u)/h_n)}\label{wtilde}
\end{align}
The denominator in \eqref{wtilde} 
guarantees the weights $\tilde{w}(\V,\X_i)$ sum up to one.
If \eqref{wLN} instead of \eqref{e_LN} is used in \eqref{e_objective_est} we refer to this method as \textit{weighted ensemble conditional variance estimation}.

For example, if a rectangular kernel is used, $\sum_{j=1,j\neq i}^n K(d_j(\V,\X_i)/h_n)$ is the number of $\X_j$ ($j \neq i$) points in the slice corresponding to $\tilde{L}_n(\V,\X_i,f)$. Therefore, this slice is assigned  weight that is proportional to the number of $\X_j$ points in it, and the more observations we use for estimating $L(\V,\X_i,f)$, the better its accuracy.

\section{Consistency of the ECVE}\label{sec:consistency}

The consistency of ECVE derives from the consistency of CVE \cite{FertlBura} that targets a specific $\msf$ and the fact that we can recover $\cs$ from $\msf$ across all transformations $f_t \in \Fa = \{f_t : t \in \Omega_T\}$ for an ensemble that characterizes $\cs$. 
This is achieved in sequential steps from  Theorem \ref{uniform_convergence_ecve}, which is the main building block, to Theorem \ref{ECVE_consistency}. The proofs are technical and lengthy, and, thus, are given in the Appendix.

\begin{thm}\label{uniform_convergence_ecve}
Assume conditions (E.1), (E.2), (E.4), (K.1), (K.2), (H.1) hold, $a_n^2 = \log(n)/nh_n^{(p-q)/2} = o(1)$, and $a_n/h_n^{(p-q)/2} = O(1)$. Let $\Fa$ be a parametric ensemble such that  $\E(|\tilde{\eps}|^l \mid\X =\xn)$ is continuous for $l = 1,\ldots,4$, and the second conditional moment is twice continuously differentiable, where $\tilde{\eps}$ is given by Theorem~\ref{Y_decomposition_thm}. Then, $L^*_n(\V,f)$, defined in \eqref{e_LN}, converges uniformly in probability to $L^*(\V,f)$  in \eqref{e_LV1} for all $f \in \Fa$; i.e.,
\begin{align*}
    \sup_{\V \in \spc(p,q)}|L^*_n(\V,f) -L^*(\V,f)| \longrightarrow 0 \quad \text{in probability as $n \to \infty$.}
\end{align*}
\end{thm}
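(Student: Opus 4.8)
The plan is to reduce the claim to the uniform convergence of the kernel-weighted local moments that build $L^*_n$, and then to recombine them through the Nadaraya--Watson ratio structure. Writing $f(Y_i) = g(\widetilde{\B}^T\X_i) + \tilde{\eps}_i$ via Theorem~\ref{Y_decomposition_thm}, I would expand $\tilde{L}_{n,\Fa}(\V,\X_i,f) = \bar{y}_2(\V,\X_i,f) - \bar{y}_1(\V,\X_i,f)^2$ and note that each $\bar{y}_l$ is a ratio $\hat{t}^{(l)}/\hat{t}^{(0)}$ of unnormalized kernel sums $\hat{t}^{(l)}(\V,\bs_0,f) = n^{-1}\sum_{j} K(d_j(\V,\bs_0)/h_n)\,f(Y_j)^l$, whose population counterparts are the density integrals $t^{(l)}$ in \eqref{tl}, the conditional means $\mu_l$ in \eqref{mu_l}, and the conditional-variance term \eqref{tilde_eps_var}. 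First I would show that, after scaling by $h_n^{-(p-q)/2}$, each $\hat{t}^{(l)}$ converges to its population integral uniformly in both $\V \in \spc(p,q)$ and the shifting point $\bs_0$ over $\text{supp}(f_\X)$, splitting the deviation into a deterministic bias term and a centered stochastic term.

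For the bias I would Taylor-expand the integrand in the $(p-q)$ orthogonal directions across the affine slice $\bs_0 + \spn\{\V\}$; since $h_n \to 0$ by (H.1), the conditional moments are continuous and the second conditional moment is twice continuously differentiable by hypothesis, and $f_\X$ is twice continuously differentiable with compact convex support by (E.1)--(E.2), the bias vanishes uniformly. The central difficulty, which I expect to be the main obstacle, is the stochastic term: I must control $\sup_{\V}\sup_{\bs_0}|\hat{t}^{(l)} - \E\hat{t}^{(l)}|$ over the \emph{manifold} while the evaluation points $\bs_0 = \X_i$ are themselves part of the sample, and while $\tilde{\eps}$ is only conditionally uncorrelated with $\X$ rather than independent as in \cite{FertlBura}. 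I would handle this with a covering argument in the spirit of \cite{Hansen2008}: $\spc(p,q)$ is compact of finite dimension $pq - q(q+1)/2$, the map $\V \mapsto d_j(\V,\bs_0) = \|\Q_\V(\X_j - \bs_0)\|^2$ is smooth, and $K$ is Lipschitz by (K.2), so $\V \mapsto K(d_j(\V,\bs_0)/h_n)$ is Lipschitz with constant of order $h_n^{-1}$ on the compact support. A polynomial-size net over the manifold, a Bernstein-type exponential bound on the net, and a union bound then yield a fluctuation of order $a_n = (\log n / n h_n^{(p-q)/2})^{1/2}$, which is exactly where the hypotheses $a_n^2 = o(1)$ and $a_n/h_n^{(p-q)/2} = O(1)$ enter; the diagonal self-contribution $j=i$ is of order $1/(n h_n^{(p-q)/2})$ and hence negligible. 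The moment bound $\E(|f(Y)|^8) < \infty$ from (E.4) and the continuity of $\E(|\tilde{\eps}|^l \mid \X = \xn)$ for $l \le 4$ supply the variance and tail control needed for the exponential inequality.

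Having established uniform convergence of the numerators, I would show the denominator $\hat{t}^{(0)}$, a kernel density-type estimator along the slice, is bounded uniformly away from zero in probability on $\text{supp}(f_\X)$, so that the ratios $\bar{y}_l$, and hence $\tilde{L}_{n,\Fa}(\V,\bs_0,f)$, converge uniformly in $\V$ and $\bs_0$ to $\tilde{L}_\Fa(\V,\bs_0,f)$ in the decomposed form \eqref{e_LtildeVs0}. Finally, since $L^*_n(\V,f) = n^{-1}\sum_i \tilde{L}_{n,\Fa}(\V,\X_i,f)$ averages the inner quantity over the shifting points, I would combine the uniform-in-$\bs_0$ inner convergence with a uniform law of large numbers for the outer empirical average of $\tilde{L}_\Fa(\V,\X_i,f)$ toward $\E_\X(\tilde{L}_\Fa(\V,\X,f)) = L^*(\V,f)$ in \eqref{e_LV1}, again invoking continuity in $\V$ and compactness of $\spc(p,q)$, to obtain $\sup_{\V}|L^*_n(\V,f) - L^*(\V,f)| \to 0$ in probability. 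The architecture parallels the CVE consistency argument of \cite{FertlBura}; the essential new ingredient is that every bias and variance bound must now accommodate the $\X$-dependent conditional moments of $\tilde{\eps}$ rather than constant error moments.
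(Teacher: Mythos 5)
Your proposal follows essentially the same route as the paper's proof: the decomposition $f(Y)=g(\widetilde{\B}^T\X)+\tilde{\eps}$ from Theorem~\ref{Y_decomposition_thm}, the ratio structure $\bar{y}_l=t^{(l)}_n/t^{(0)}_n$, a bias--variance split with a Taylor expansion in the $(p-q)$ transverse directions for the bias, a polynomial net over $\spc(p,q)\times\text{supp}(f_\X)$ combined with Bernstein's inequality for the stochastic term, and a final uniform law of large numbers for the outer average over shifting points. Two details in your sketch need more care than you give them: the denominator $t^{(0)}(\V,\bs_0)$ is \emph{not} bounded away from zero uniformly on all of $\text{supp}(f_\X)$ (a continuous density with compact support vanishes at the boundary), so the paper works on a shrinking interior region $A_n\uparrow A$ with $\delta_n^{-1}(a_n+h_n)\to 0$ before passing to the limit; and Bernstein's inequality requires bounded summands, so the paper first truncates $\tilde{Y}_i^l$ at $\tau_n=a_n^{-1}$, which is precisely where the hypothesis $a_n/h_n^{(p-q)/2}=O(1)$ is consumed.
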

Next, Theorem~\ref{thm_consistency_mean_subspace} shows that ensemble conditional variance estimator is consistent for $\msf$ for any transformation $f$. 

\begin{thm}\label{thm_consistency_mean_subspace}
Under the same conditions as Theorem~\ref{uniform_convergence_ecve}, 
the conditional variance estimator $\spn\{\widehat{\B}^t_{{k_t}}\}$ estimates $\msf$ consistently, for $f_t \in \Fa$. That is,
\begin{equation*}
\|\Pbf_{\widehat{\B}^t_{{k_t}}} - \Pbf_{\msf}\| \to 0 \quad \text{in probability as } n \to \infty .
\end{equation*}
where $\widehat{\B}^t_{{k_t}}$ is any basis of $\spn\{\widehat{\V}_{k_t}^t\}^\perp$ with 
\begin{align*}
    \widehat{\V}_{k_t}^t= \argmin_{\V \in \spc(p,q)}L_{n,\Fa}^*(\V,f_t).
\end{align*}
with $q = p - k_t$ and $k_t = \dim(\msf)$.
\end{thm}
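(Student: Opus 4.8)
The plan is to treat this as a standard argmin-consistency (Wald-type M-estimation) argument, feeding the uniform convergence of Theorem~\ref{uniform_convergence_ecve} into the population-level identification of Theorem~\ref{CVE_targets_meansubspace_thm}. Since both the empirical objective $L_{n,\Fa}^*(\cdot,f_t)$ (equal to $L_n^*$ of \eqref{e_LN}) and the population objective $L^*(\cdot,f_t)$ depend on $\V$ only through its column space --- the objective is orthogonally invariant, as recorded in \eqref{Grassman} and in the discussion of \eqref{e_Lvs} --- the first step is to lift the problem from the Stiefel manifold $\spc(p,q)$ to the compact Grassmann manifold $Gr(p,q)$, on which both objectives descend to well-defined continuous functions. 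Throughout I identify a subspace $\M \in Gr(p,q)$ with its projection $\Pbf_\M$ and measure distance by $\|\Pbf_{\M} - \Pbf_{\M'}\|$.

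Next I would establish well-separation of the population minimizer. By Theorem~\ref{CVE_targets_meansubspace_thm}, equation \eqref{CVE_targets_meansubspace}, the descended population objective $\ell(\M) := L^*(\V,f_t)$ (for any basis $\V$ of $\M$) attains its minimum on $Gr(p,q)$ at the unique subspace $\M_0 = \msf^{\perp} = \spn\{\V_q^t\}$, where $q = p - k_t$. Because $\ell$ is continuous and $Gr(p,q)$ is compact, for every $\varepsilon > 0$ the quantity
\begin{align*}
\delta_\varepsilon := \inf\{\ell(\M) - \ell(\M_0) : \|\Pbf_\M - \Pbf_{\M_0}\| \ge \varepsilon\}
\end{align*}
is a minimum over a compact set, hence attained, and is strictly positive since $\M_0$ is the unique minimizer. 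This separation is the identifiability input that turns closeness in objective value into closeness in subspace.

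Then I would run the sandwich argument. Writing $\widehat{\M} = \spn\{\widehat{\V}_{k_t}^t\}$, the defining inequality $L_{n,\Fa}^*(\widehat{\V}_{k_t}^t,f_t) \le L_{n,\Fa}^*(\V_q^t,f_t)$ together with Theorem~\ref{uniform_convergence_ecve} gives
\begin{align*}
0 \le \ell(\widehat{\M}) - \ell(\M_0) \le 2\sup_{\V \in \spc(p,q)} |L_{n,\Fa}^*(\V,f_t) - L^*(\V,f_t)| \longrightarrow 0
\end{align*}
in probability. Combined with the well-separation bound, the inclusion $\{\|\Pbf_{\widehat{\M}} - \Pbf_{\M_0}\| \ge \varepsilon\} \subseteq \{\ell(\widehat{\M}) - \ell(\M_0) \ge \delta_\varepsilon\}$ forces $\pr(\|\Pbf_{\widehat{\V}_{k_t}^t} - \Pbf_{\V_q^t}\| \ge \varepsilon) \to 0$. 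Finally, because $\widehat{\B}_{k_t}^t$ is a basis of $\spn\{\widehat{\V}_{k_t}^t\}^\perp$ and $\msf = \spn\{\V_q^t\}^\perp$, the complementary projections satisfy $\Pbf_{\widehat{\B}_{k_t}^t} = \I_p - \Pbf_{\widehat{\V}_{k_t}^t}$ and $\Pbf_{\msf} = \I_p - \Pbf_{\V_q^t}$, so that $\|\Pbf_{\widehat{\B}_{k_t}^t} - \Pbf_{\msf}\| = \|\Pbf_{\widehat{\V}_{k_t}^t} - \Pbf_{\V_q^t}\| \to 0$ in probability, which is the claim.

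I expect the main obstacle to be the well-separation/uniqueness step rather than the sandwich, which is routine once uniform convergence is available. Specifically, I must ensure that Theorem~\ref{CVE_targets_meansubspace_thm} delivers a genuinely \emph{unique} minimizing subspace on $Gr(p,q)$ (not merely one whose orthogonal complement equals $\msf$), since only then is $\delta_\varepsilon > 0$; this, and the continuity of $\ell$ on $Gr(p,q)$, rest on the regularity already assumed --- in particular the continuity of $h(\cdot)$ and the smoothness and boundedness supplied by (E.2) and (E.4). A secondary, purely technical point is the existence and measurability of the random argmin $\widehat{\V}_{k_t}^t$ over the Stiefel manifold, together with the harmlessness of its non-uniqueness there; both are dispatched cleanly by phrasing everything in terms of projections, as above.
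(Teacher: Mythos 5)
Your proposal is correct and takes essentially the same approach as the paper: the paper likewise descends both objectives to the compact Grassmann manifold, feeds the uniform convergence of Theorem~\ref{uniform_convergence_ecve} and the unique population minimizer from Theorem~\ref{CVE_targets_meansubspace_thm} into a standard argmin-consistency result, and finishes with the same complementary-projection identity $\|\Pbf_{\widehat{\B}^t_{k_t}} - \Pbf_{\msf}\| = \|\Pbf_{\widehat{\V}^t_{k_t}} - \Pbf_{\V^t_q}\|$. The only difference is that the paper cites Amemiya's Theorem 4.1.1 as a black box for the argmin step, whereas you inline its proof via the well-separation and sandwich argument --- the content is identical.
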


A straightforward application of Theorem \ref{thm_consistency_mean_subspace}, using the identity function, obtains that  $\spc_{E(Y\mid \X)}$ can be consistently estimated by ECVE. 

\begin{thm}\label{uniform_convergence_eobjective}
Assume the conditions of Theorem~\ref{uniform_convergence_ecve} hold. Let $\Fa$ be a parametric ensemble such that $\sup_{t \in \Omega_T} |f_t(Y)| < M < \infty$ almost surely, and let the index random variable $t \sim F_T$  be independent from the data $(Y_i, \X_i)_{i=1,\ldots,n}$. Then $L_{n,\Fa}(\V)$, defined in \eqref{e_objective_est}, converges uniformly in probability to $L_{\Fa}(\V)$  in \eqref{e_objective}; i.e.,
\begin{align*}
    \sup_{\V \in \spc(p,q)}|L_{n,\Fa}(\V) -L_{\Fa}(\V)| \longrightarrow 0 \quad  \text{in probability as $n \to \infty$.}
\end{align*}
\end{thm}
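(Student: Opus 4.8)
The plan is to split the deviation into an \emph{estimation} error and a \emph{Monte Carlo} error. Writing, for $\V \in \spc(p,q)$,
\begin{align*}
L_{n,\Fa}(\V) - L_\Fa(\V)
&= \underbrace{\frac{1}{m_n}\sum_{j=1}^{m_n}\bigl(L^*_n(\V,f_{t_j}) - L^*(\V,f_{t_j})\bigr)}_{(\mathrm{I})} \\
&\quad + \underbrace{\frac{1}{m_n}\sum_{j=1}^{m_n}L^*(\V,f_{t_j}) - \int_{\Omega_T}L^*(\V,f_t)\,dF_T(t)}_{(\mathrm{II})},
\end{align*}
I would bound $\sup_{\V}|L_{n,\Fa}(\V)-L_\Fa(\V)| \le \sup_\V|(\mathrm{I})| + \sup_\V|(\mathrm{II})|$ and treat the two terms separately, the first capturing the kernel-estimation error for the drawn ensemble members and the second the sampling error of the empirical average over $t_1,\dots,t_{m_n}$.

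For $(\mathrm{I})$, I would set $D_n(t) = \sup_{\V \in \spc(p,q)}|L^*_n(\V,f_t) - L^*(\V,f_t)|$, so that $\sup_\V|(\mathrm{I})| \le m_n^{-1}\sum_{j=1}^{m_n}D_n(t_j)$. Theorem~\ref{uniform_convergence_ecve} gives $D_n(t)\to 0$ in probability for each fixed $t$, and the hypothesis $\sup_{t}|f_t(Y)|<M$ almost surely forces $0\le D_n(t)\le C$ for a finite constant $C$ independent of $t$ and $n$, because both $L^*_n(\V,f_t)$ and $L^*(\V,f_t)$ are empirical or population conditional variances of a variable bounded by $M$. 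Boundedness together with convergence in probability yields $\E_{\mathrm{data}}[D_n(t)]\to 0$ for each fixed $t$; since $t$ is drawn independently of the data, Fubini gives $\E\bigl[m_n^{-1}\sum_j D_n(t_j)\bigr] = \int_{\Omega_T}\E_{\mathrm{data}}[D_n(t)]\,dF_T(t)$, and as the integrand is bounded by $C$ and tends pointwise to $0$, dominated convergence with respect to the probability measure $F_T$ sends this expectation to $0$. Markov's inequality and $D_n\ge 0$ then give $\sup_\V|(\mathrm{I})|\to 0$ in probability.

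For $(\mathrm{II})$, I would invoke a uniform law of large numbers over the compact manifold $\spc(p,q)$. For fixed $\V$ the terms $L^*(\V,f_{t_j})$ are i.i.d.\ in $j$ and bounded by $C$, so the weak law gives $m_n^{-1}\sum_j L^*(\V,f_{t_j}) \to \int_{\Omega_T}L^*(\V,f_t)\,dF_T(t)$ in probability as $m_n\to\infty$. To upgrade to uniformity in $\V$, I would show that $\V\mapsto L^*(\V,f_t)$ is Lipschitz with a constant independent of $t$; this strengthens the continuity already established in Theorem~\ref{CVE_targets_meansubspace_thm} and follows from the density smoothness in (E.2) together with the bound $M$, since $L^*(\V,f_t)$ depends on $\V$ only through the smoothly varying weights in the conditional-variance integrals \eqref{mu_l}. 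Covering $\spc(p,q)$ by finitely many $\delta$-balls, controlling the centers by the pointwise law and the oscillation inside each ball by the common Lipschitz constant, then yields $\sup_\V|(\mathrm{II})|\to 0$ in probability. The triangle inequality combines the two bounds and finishes the argument.

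The step I expect to be the main obstacle is $(\mathrm{I})$: transferring the \emph{fixed-}$t$ uniform-in-$\V$ convergence of Theorem~\ref{uniform_convergence_ecve} into control of the average over the \emph{random and growing} index set $t_1,\dots,t_{m_n}$ while $n$ also grows. The crucial leverage is the almost-sure uniform bound $\sup_t|f_t(Y)|<M$, which is precisely what makes $D_n(t)$ uniformly bounded and thus licenses passing to expectations and the two successive applications of bounded and dominated convergence; without it the rate of the kernel approximation could degrade as the ensemble is explored. A secondary technical point is the $t$-uniform Lipschitz continuity of $L^*(\cdot,f_t)$ needed for the covering argument in $(\mathrm{II})$, which again rests on the bound $M$ together with (E.2).
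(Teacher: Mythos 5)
Your decomposition into the kernel-estimation error $(\mathrm{I})$ and the Monte Carlo error $(\mathrm{II})$ is exactly the paper's, and your treatment of $(\mathrm{I})$ — bounding $D_n(t)=\sup_\V|L^*_n(\V,f_t)-L^*(\V,f_t)|$ by a constant of order $M^2$, passing to expectations via boundedness plus convergence in probability, then Fubini, dominated convergence over $F_T$, and Markov — is the same argument the paper carries out with Markov's inequality and Fatou's lemma. That part is fine.

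The gap is in $(\mathrm{II})$. You assert that $\V\mapsto L^*(\V,f_t)$ is Lipschitz with a constant \emph{independent of} $t$, claiming this follows from (E.2) and the bound $\sup_t|f_t(Y)|\le M$. It does not. By \eqref{e_LtildeVs0} and \eqref{mu_l}, the dependence of $L^*(\V,f_t)$ on $\V$ runs through $g_t(\widetilde{\B}^T(\bs_0+\V\rs_1))$ with $g_t(\cdot)=\E(f_t(Y)\mid\widetilde{\B}^T\X=\cdot)$ and through $h_t(\bs_0+\V\rs_1)$ with $h_t(\cdot)=\E(\tilde\eps_t^2\mid\X=\cdot)$; a Lipschitz bound in $\V$ therefore requires control of the \emph{derivatives} (or at least moduli of continuity) of $g_t$ and $h_t$, and (E.4) guarantees these only pointwise in $t$ — a uniform bound on $|f_t(Y)|$ gives a uniform bound on the \emph{values} of $g_t$, not on its oscillation. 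This is precisely why the paper does not run a bare covering argument for $(\mathrm{II})$ but instead invokes Jennrich's uniform law of large numbers (Theorem~\ref{Jennrich}), which needs only continuity of $L^*(\cdot,f_t)$ for each $t$, measurability in $t$, an integrable envelope (here $4M^2$), and compactness of the parameter space. Your argument is repairable without new assumptions: replace the uniform Lipschitz constant by the per-$t$ oscillation $\omega_\delta(t)=\sup_{\|\Pbf_\V-\Pbf_{\V'}\|\le\delta}|L^*(\V,f_t)-L^*(\V',f_t)|$, bound the within-ball oscillation of the empirical average by $m_n^{-1}\sum_j\omega_\delta(t_j)$, and use that $\omega_\delta(t)\le 8M^2$ and $\omega_\delta(t)\to0$ as $\delta\to0$ for each $t$ together with dominated convergence — which is exactly the content of the ULLN the paper cites. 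As written, however, the uniform Lipschitz step would fail.
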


The assumption $\sup_{t \in \Omega_T} |f_t(Y)| < M < \infty$ in Theorem~\ref{uniform_convergence_eobjective}  
is trivially satisfied by the elements of  the characteristic and indicator ensembles. Further the assumption $a_n/h_n^{(p-q)/2} = O(1)$ used for the truncation step in the proof of Theorem~\ref{uniform_convergence_ecve} can be dropped since obviously no truncation is needed.

The rate of convergence of $m_n$ is not characterized in Theorem~\ref{uniform_convergence_eobjective}. In the simulation studies of Sections \ref{sec:consistency_simulations} and \ref{sec:simulations}, we find that $m_n$ should be chosen to be very small relative to the sample size $n$, roughly at the rate of $\log(n)$.

The consistency of the ensemble CVE is shown in Theorem~\ref{ECVE_consistency}. 

\begin{thm}\label{ECVE_consistency}
Assume the conditions of Theorem~\ref{uniform_convergence_ecve} and (E.3) hold. Let $\Fa$ be a parametric ensemble that characterizes $\cs$ and whose members satisfy  $\sup_{t \in \Omega_T} |f_t(Y)| < M < \infty$ almost surely. Also, assume the index random variable $t \sim F_T$  is independent from the data $(Y_i, \X_i)_{i=1,\ldots,n}$. Then, the \textit{ensemble conditional variance estimator (ECVE)} is a consistent estimator for $\cs$. That is, for any basis $\widehat{\B}_{p-q, \Fa}$  of $\spn\{\widehat{\V}_q\}^\perp$, where $\widehat{\V}_q$ is defined in \eqref{e_optim} with $q = p- k$ and $k = \dim(\cs)$, 
\begin{align*}
    \|\Pbf_{\widehat{\B}_{p-q, \Fa}} - \Pbf_{\cs}\| \longrightarrow 0 \quad \text{in probability as } n \to \infty,
\end{align*}
where $\Pbf_{\M}$ denotes the orthogonal projection onto the range space of the matrix or linear subspace $\M$.
\end{thm}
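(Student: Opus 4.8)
The plan is to treat this as a classical argmin ($M$-estimation) consistency problem, but executed on the Grassmann manifold $Gr(p,q)$ of \eqref{Grassman_def} rather than on the Stiefel manifold $\spc(p,q)$, because the minimizers $\V_q$ and $\widehat{\V}_q$ are identified only up to a right orthogonal factor, and hence only as subspaces. The three ingredients I would assemble are: uniform convergence of the objective, already supplied by Theorem~\ref{uniform_convergence_eobjective}; a unique and well-separated population minimizer, coming from the identification result Theorem~\ref{ECVE_identifies_cs_thm} together with compactness; and the passage between convergence of a basis and convergence of its projection.

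First I would pass to projections. As observed following \eqref{Grassman}, both $L_\Fa(\V)$ and $L_{n,\Fa}(\V)$ depend on $\V$ only through $\spn\{\V\}$, equivalently through $\Pbf_\V = \V\V^T$, so each descends to a well-defined function on the compact manifold $Gr(p,q)$, which I would metrize by $d(\spn\{\V_1\}, \spn\{\V_2\}) = \|\Pbf_{\V_1} - \Pbf_{\V_2}\|$. Because $q = p - k$ with $k = \dim(\cs)$, Theorem~\ref{ECVE_identifies_cs_thm} gives $\spn\{\V_q\} = \cs^{\perp}$, a single $q$-dimensional subspace; thus the population minimizer is \emph{unique} on $Gr(p,q)$ and the projection $\Pbf_{\V_q} = \I_p - \Pbf_\cs$ is uniquely determined. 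Since a supremum over all orthonormal bases coincides with the supremum over subspaces, the uniform convergence of Theorem~\ref{uniform_convergence_eobjective} on $\spc(p,q)$ transfers verbatim to uniform convergence of the induced functions on $Gr(p,q)$.

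Next I would verify the well-separated minimum condition: for every $\epsilon > 0$,
\begin{align*}
\inf_{\V \in \spc(p,q):\ \|\Pbf_\V - \Pbf_{\V_q}\| \ge \epsilon} L_\Fa(\V) > L_\Fa(\V_q).
\end{align*}
The constraint region is a closed subset of the compact $Gr(p,q)$, hence itself compact, and $L_\Fa$ is continuous there, so it attains its infimum on the region. Because $\cs^{\perp}$ is the unique global minimizer and lies outside the region, that attained infimum strictly exceeds the global minimum $L_\Fa(\V_q)$, which is exactly the separation needed. With uniform convergence and this well-separated unique minimizer over the compact domain in hand, the standard argmin consistency theorem yields $\|\Pbf_{\widehat{\V}_q} - \Pbf_{\V_q}\| \to 0$ in probability for $\widehat{\V}_q$ from \eqref{e_optim}. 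Finally, since $\widehat{\B}_{p-q,\Fa}$ is any basis of $\spn\{\widehat{\V}_q\}^{\perp}$ and $\cs = \spn\{\V_q\}^{\perp}$, I would write $\Pbf_{\widehat{\B}_{p-q,\Fa}} = \I_p - \Pbf_{\widehat{\V}_q}$ and $\Pbf_\cs = \I_p - \Pbf_{\V_q}$, so that $\|\Pbf_{\widehat{\B}_{p-q,\Fa}} - \Pbf_\cs\| = \|\Pbf_{\widehat{\V}_q} - \Pbf_{\V_q}\| \to 0$ in probability, as claimed.

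The main obstacle is organizational rather than analytic: the invariance $L_\Fa(\V\Ob) = L_\Fa(\V)$ under orthogonal $\Ob$ means neither $\V_q$ nor $\widehat{\V}_q$ is a well-defined matrix, so one cannot argue $\widehat{\V}_q \to \V_q$ directly and must instead control the projections $\Pbf_{\widehat{\V}_q}$, i.e.\ work entirely on $Gr(p,q)$. Care is also needed to confirm that the global minimizer is genuinely unique as a subspace, so that the well-separation is not vacuous; this is precisely what Theorem~\ref{ECVE_identifies_cs_thm} secures by pinning the minimizing subspace to the fixed $\cs^{\perp}$. Once these points are settled, all the quantitative work has already been done in Theorems~\ref{uniform_convergence_eobjective} and~\ref{ECVE_identifies_cs_thm}, and what remains is the routine compactness-and-uniform-convergence argument.
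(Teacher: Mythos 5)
Your proposal is correct and follows essentially the same route as the paper: the paper's proof reduces Theorem~\ref{ECVE_consistency} to the argument of Theorem~\ref{thm_consistency_mean_subspace}, which applies the standard extremum-estimator consistency theorem (compact parameter space on $Gr(p,q)$, continuity and measurability, uniform convergence from Theorem~\ref{uniform_convergence_eobjective}, and a unique minimizer from Theorem~\ref{ECVE_identifies_cs_thm}), and then passes to orthogonal complements exactly as you do. You merely unroll the well-separation step of that standard theorem explicitly rather than citing it as a black box.
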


\section{Simulation Studies}\label{sec:simulations}
\subsection{Influence of $m_n$ on ECVE}\label{sec:influence}

In this section we study the influence of the number of functions of the ensemble $\Fa$, $m_n$ in \eqref{e_objective_est}, on the accuracy of the ensemble conditional variance estimation. In Theorem~ \ref{uniform_convergence_eobjective} and \ref{ECVE_consistency}, how fast  $m_n$ approaches $\infty$ is unspecified. We consider the 2-dimensional regression model
\begin{equation}\label{ecve_simmodel}
    Y = (\bb_2^T\X) + (0.5 + (\bb_1^T\X)^2)\epsilon,
\end{equation}
where  $p = 10$, $k =2$, $\X \sim N(0,\I_{10})$, $\epsilon \sim N(0,1)$ independent of $\X$, 
$\bb_1 = (1,0,\ldots,0)^T \in \real^{p}$, and $\bb_2=(0,1,0,\ldots,0)^T \in \real^{p}$. Therefore, $\ms = \spn\{\bb_2\} \subsetneq \cs = \spn\{\B\}$, with $\B = (\bb_1,\bb_2)$. 

We set the sample size to $n = 300$ and  vary $m$ over $\{4,8,10,26,50,76,100\}$ for the (a) indicator, $\Fa_{m,\text{Indicator}}= \{1_{\{x \geq q_j\}}: j = 1,\ldots,m\}$, where $q_j$ is the $j/(m +1)$th empirical quantile of $(Y_i)_{i=1,\ldots,n}$; (b) characteristic or Fourier, $\Fa_{m,\text{Fourier}}= \{\sin(jx): j = 1,\ldots,m/2\} \cup \{\cos(jx): j = 1,\ldots,m/2\}$; (c) monomial, $\Fa_{m,\text{Monom}}= \{x^j: j = 1,\ldots,m\}$, (d) and Box-Cox, $\Fa_{m,\text{BoxCox}}= \{(x^{t_j} -1)/t_j: t_j = 0.1 + 2(j-1)/(m-1), j = 1,\ldots,m-1\} \cup \{\log(x)\}$, ensembles. 

For each ensemble, we form the ensemble conditional variance estimator and its weighted version as in section~\ref{weight_section}, 
see also \cite{FertlBura}.  The results of 100 replications for each method and each $m$ are displayed in Figure~\ref{ecve_test_nr basisfunction_Fig2}. We assess the estimation accuracy with $\text{err}_{j,m} = \|\widehat{\B}\widehat{\B}^T - \B\B^T\|/(2k)^{1/2}$, $j=1,\ldots,100$, $m \in \{2,4,8,10,26,50,76,100\}$. \texttt{ECVE}'s main competitor, \texttt{csMAVE}, which does not vary with $m$, estimate of the central subspace has median error 0.2 with a wide range from 0.1 to 0.6. The estimation accuracy of Fourier, Indicator and Box-Cox \texttt{ECVE} vary over $m$ and is on par or better for some $m$ values. 

For the Fourier basis, fewer basis functions give the best performance, the indicator and BoxCox ensembles are quite robust against varying $m$, whereas the errors get rapidly larger if $m$ is increased for the monomial ensemble. The weighted version of ECVE improves the accuracy for all ensembles.  $\Fa_{4,\text{Fourier\_weighted}}$, $\Fa_{8,\text{Indicator\_weighted}}$, $\Fa_{4,\text{BoxCox\_weighted}}$ are on par or more accurate than \texttt{csMAVE}. 
In sum, the simulation results support a choice of a small $m$ number of basis functions. Based on this and further unreported simulations, we set the default value of $m$ to
\begin{align}\label{mn_value}
    m_n = \begin{cases}
\lceil \log(n)\rceil,\text{if}\quad\lceil \log(n)\rceil \quad  \text{even}\\
\lceil \log(n)\rceil + 1 ,\text{if}\quad \lceil \log(n)\rceil \quad  \text{odd}\\
\end{cases}
\end{align}
for all simulations in  Section~\ref{sec:consistency_simulations}, \ref{evaluate_acc} and the data analysis in Section~\ref{sec:dataAnalysis}.

\subsection{Demonstrating consistency}\label{sec:consistency_simulations}

We explore the consistency rate of the \textit{conditional variance estimator (CVE)} and \textit{ensemble conditional variance estimator (CVE)}, \texttt{csMAVE} and \texttt{mMAVE} in model~\eqref{ecve_simmodel}.

Specifically, we apply seven estimation methods,  the first five targeting the central subspace $\cs$ and the last two $\ms$, as follows.  
For $\cs$, we compare \texttt{ECVE} for the indicator (I), Fourier (II), monomial (III) and Box-Cox (IV) ensembles, as in Section \ref{sec:influence}, and \texttt{csMAVE} (V). 
For $\ms$, we use \texttt{CVE} (VI) of \cite{FertlBura} and \texttt{mMAVE} (VII) in \cite{Xiaetal2002}.

The simulation is performed as follows. 
We generate 100 i.i.d samples $(Y_i,\X_i^T)_{i=1,...,n}$ from \eqref{ecve_simmodel} for each sample size $n=100,200,400,600,800,1000$. Model \eqref{ecve_simmodel} is a two dimensional model with $\ms=\spn(\bb_2) \subsetneq \cs=\spn(\B)$. For methods (I)-(V), we set $k =2$ and estimate $\B \in \real^{10 \times 2}$. For (VI) and (VII),  we set $k=1$ and estimate $\bb_2 \in \real^{10 \times 1}$. Then, we calculate $\text{err}_{j,n} = \|\widehat{\B}\widehat{\B}^T - \B\B^T\|/(2k)^{1/2}$, $j=1,\ldots,100$, $n \in \{100,200,400,600,800,1000\}$. 
 Figure~\ref{ecve_consisteny_Fig1} displays the distribution of $err_{j,n}$ for increasing $n$ for the seven methods. As the sample size increases ECVE Indicator, Fourier and csMAVE are on par with respect to both speed and accuracy. The accuracy of ECVE Box-Cox improves as the sample size increases but at a slower rate. There is no improvement in the accuracy of ECVE monomial. This is not surprising as the monomial, as well as the Box-Cox, do not satisfy the assumption  $\sup_{t \in \Omega_T} |f_t(Y)| < M < \infty$ in Theorem~\ref{ECVE_consistency}, in contrast to the Indicator and Fourier ensembles. The Fourier, Indicator ECVE and csMAVE estimate $\cs = \spn\{\B\}$ consistently and the mean subspace methods, CVE and mMAVE, estimate $\ms = \spn\{\bb_2\}$ consistently.

\begin{figure}
\begin{minipage}{0.5\linewidth}
\centering
\includegraphics[width=1.04\textwidth]{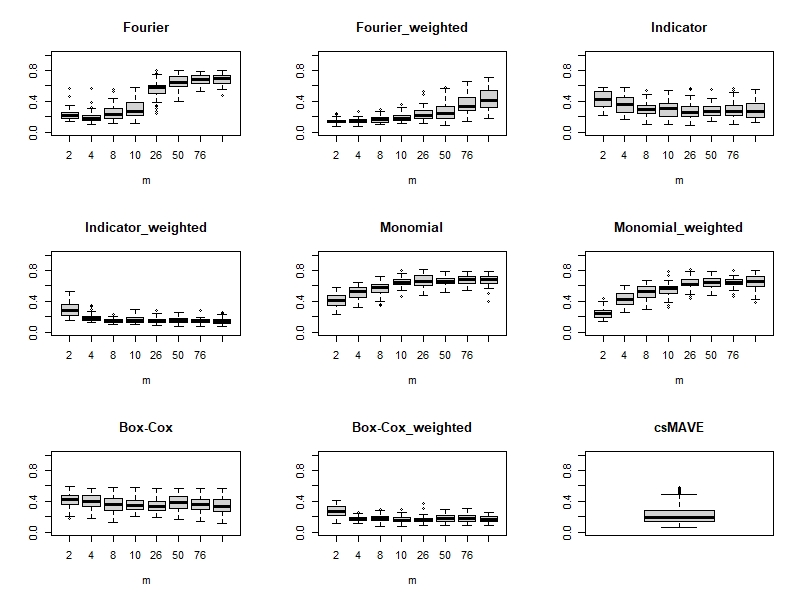}
		\caption{Box plots of the estimation errors over $100$ replications of model~\eqref{ecve_simmodel} with $n =300$ over $ m =|\Fa| = (2,4,8,10,26,50,76,100)$ across four ensembles.}
		\label{ecve_test_nr basisfunction_Fig2}
\end{minipage}
\quad
\begin{minipage}{0.5\linewidth}
\centering
\includegraphics[width=1.05\textwidth]{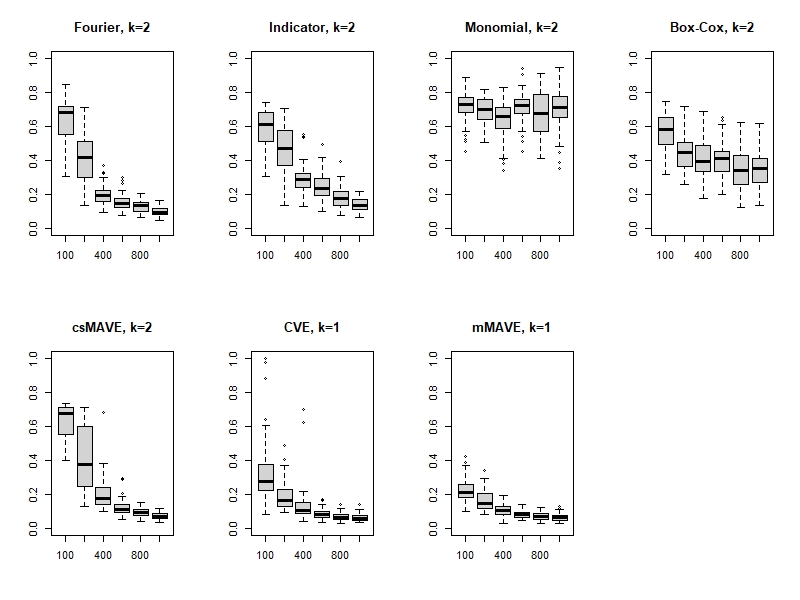}
		\caption{Estimation error distribution of model~\eqref{ecve_simmodel} plotted over $n = (100,200,400,600,800,1000)$ for the seven (I-VII) methods}
		\label{ecve_consisteny_Fig1}
\end{minipage}
\end{figure}

\subsection{Evaluating estimation accuracy}\label{evaluate_acc}
We consider seven models, (M1-M7) defined in Table~\ref{tab:e_mod}, three different sample sizes $\{100,200,400\}$, and three different distributions of the predictor vector $\X = \Sigmabf^{1/2}\Z \in \real^p$, where $\Sigmabf=(\Sigma_{ij})_{i,j=1,\ldots,p}$, $\Sigma_{i,j} = 0.5^{|i-j|}$. Throughout, $p=10$, $\B$ are the first $k$ columns of $\I_p$, and $\epsilon \sim N(0,1)$ independent of $\X$. 
As in \cite{WangXia2008}, we consider three distributions for $\Z \in \real^p$: 
    (I) $N(0,\I_p)$,  
    (II) $p$-dimensional uniform distribution on $[-\sqrt{3},\sqrt{3}]^p$, i.e. all components of $\Z$ are independent and uniformly distributed
    , and (III) a mixture-distribution $N(0,\I_p)+\mubf$, where $\mubf = (\mu_1,\ldots,\mu_p)^T \in \real^p$ with $\mu_j=2$, $\mu_k=0$, for $k \ne j$, and $j$ is uniformly distributed on $\{1,\ldots,p\}$. 
 
The simple and weighted [see Section~\ref{weight_section}] \texttt{Fourier} 
 and \texttt{Indicator} 
 ensembles are used to form four \textit{ensemble conditional variance estimators} (ECVE).
 The monomial and 
 BoxCox ensembles 
  were also used but did not give satisfactory results and are not reported.
From these two ensembles four ECVE estimators are formed and compared against the reference method \texttt{csMAVE} \cite{WangXia2008}, which is implemented in the \texttt{R} package \texttt{MAVE}. The source code for \textit{conditional variance estimation} and its ensemble version is available at \url{https://git.art-ist.cc/daniel/CVE}.

\begin{table}[!htbp]
\centering
\caption{Models}
{\small
\begin{tabular}{lcccc}
		\toprule
Name & Model  & $\ms$& $\cs$&$k$ \\ \midrule
M1& $Y = \frac{1}{\bb_1^T\X}+0.2\epsilon$ &$\spn\{\bb_1\} $&$\spn\{\bb_1\} $& 1\\
M2& $Y = \cos(2\bb_1^T\X)+\cos(\bb_2^T\X)+0.2\epsilon$&$\spn\{\bb_1,\bb_2\} $&$\spn\{\bb_1, \bb_2\}$ &2\\
M3&$Y = (\bb_2^T\X) + (0.5 + (\bb_1^T\X)^2)\epsilon$&$\spn\{\bb_2\} $&$\spn\{\bb_1,\bb_2\} $& 2\\
M4& $Y = \frac{\bb_1^T\X}{0.5+(1.5+\bb_2^T\X)^2}+(|\bb_1^T\X| + (\bb_2^T\X)^2 +0.5)\epsilon$&$\spn\{\bb_1,\bb_2\} $&$\spn\{\bb_1,\bb_2\}$ & 2\\
M5& $Y = \bb_3^T\X +\sin(\bb_1^T \X(\bb_2^T\X)^2)\epsilon$&$\spn\{\bb_3\} $&$\spn\{\bb_1,\bb_2,\bb_3\}$ & 3\\
M6& $Y = 0.5(\bb_1^T\X)^2\epsilon$&$\spn\{\0\} $&$\spn\{\bb_1\} $& 1\\
M7& $Y = \cos(\bb_1^T\X - \pi)+ \cos(2\bb_1^T\X)\epsilon$&$\spn\{\bb_1\} $&$\spn\{\bb_1\} $& 1\\
\bottomrule
	\end{tabular}%
	}
	\label{tab:e_mod}%
\end{table}

We set $q = p - k$ and generate $r=100$ replicates of models M1-M7 with the specified distribution of $\X$ and sample size $n$. We estimate $\B$ using the four
ECVE methods and csMAVE. The accuracy of the estimates is assessed using  $err= \|\Pbf_\B - \Pbf_{\widehat{\B}}\|_2/\sqrt{2k} \in [0,1]$, where $\Pbf_\B = \B(\B^T\B)^{-1}\B^T$ is the orthogonal projection matrix on $\spn\{\B\}$. The factor $\sqrt{2k}$ normalizes the distance, with values closer to zero indicating better agreement and values closer to one indicating strong disagreement.
The results are displayed in Tables~\ref{tab:summaryM1}-\ref{tab:summaryM8}. In M1, which is taken from \cite{WangXia2008}, the mean subspace agrees with the central subspace, i.e. $\ms =\cs$, but due to the unboundedness of the link function $g(x) = 1/x$ most mean subspace estimation methods, such as \texttt{SIR, mean MAVE} and \texttt{CVE}, fail. In contrast, all 4 ensemble CVE methods and  \texttt{csMAVE}  succeed in identifying the minimal dimension reduction subspace, with ensemble CVE performing slightly better, as can be seen in Table~\ref{tab:summaryM1}.   In particular,   \texttt{Fourier}  is the best performing method.
M2, is a two dimensional mean subspace model, i.e. $\ms = \cs$, and in Table~\ref{tab:summaryM2} we see that \texttt{csMAVE} is the best performing method. M3 is the same as model \eqref{ecve_simmodel} and here the mean subspace is a proper subset of the central subspace. In Table~\ref{tab:summaryM3} we see that \texttt{Indicator\_weighted} and \texttt{csMAVE} are the best performers and are roughly on par.
In M4, the two dimensional mean subspace, which determines also the heteroskedasticity, agrees with the central subspace. In Table~\ref{tab:summaryM4} we see that this model is quite challenging for all methods, and only \texttt{Indicator\_weighted} and \texttt{csMAVE} give satisfactory results, with \texttt{Indicator\_weighted} the clear winner. 

In M5, the heteroskedasticity is induced by an interaction term, and the three dimensional central subspace model is a proper superset of the one dimensional mean subspace. In Table~\ref{tab:summaryM5} we see that M5 is quite challenging for all five methods, therefore we increase the sample size $n$ to $800$. For M5, the two weighted ensemble conditional variance estimators are the best performing methods followed by \texttt{csMAVE}.

M6 is a one dimensional pure central subspace model, whereas the mean subspace is $0$. In Table~\ref{tab:summaryM7}, we see that for $n = 100$ the two weighted ECVEs are the best performing methods and for higher sample sizes \texttt{csMAVE} is slightly more accurate than the ECVE methods.

In M7 the one dimensional mean subspace agrees with the central subspace, i.e. $\ms = \cs$, and the conditional first and second moments, $\E(Y^l \mid \X)$ for $l =1,2$, are highly  nonlinear and periodic functions of the sufficient reduction. In Table~\ref{tab:summaryM8}, we see that all ensemble conditional variance estimators clearly outperform  \texttt{csMAVE}. 

 \begin{table}[!htbp]
\centering
\caption{Mean and standard deviation  (in parenthesis) of estimation errors of M1}
{\tiny
\begin{tabular}{ll|ccccc}
		\toprule
Distribution &$n$ &\texttt{Fourier} & \texttt{Fourier\_weighted} & \texttt{Indicator}& \texttt{Indicator\_weighted}& \texttt{csMAVE}  \\ \midrule
I& 100& \textbf{0.172} &0.201 &0.248& 0.265 & 0.210\\
& & (0.047) &(0.054)& (0.064)& (0.063)&  (0.063)\\
\midrule
I& 200&  \textbf{0.120} &0.142& 0.182& 0.197&  0.128\\
& & (0.029)& (0.037) &(0.045)& (0.049)&  (0.037)\\
\midrule
I& 400& \textbf{0.079}& 0.091 &0.126& 0.136& 0.080\\
& & (0.020)& (0.024) &(0.037)& (0.040)& (0.024)\\
\midrule
II& 100& \textbf{0.174}&0.196& 0.241& 0.254&{0.193}\\
& & (0.038)& (0.049) &(0.055)& (0.056)& (0.059)\\
\midrule
II& 200& \textbf{0.110} &0.127& 0.170& 0.182&  {0.121}\\
& & (0.031)& (0.033) &(0.043)& (0.045)&  (0.036)\\
\midrule
II& 400& \textbf{0.078}& 0.091 &0.122 &0.132& {0.079}\\
& & (0.021) &(0.026)& (0.031)& (0.033) &(0.020)\\
\midrule
III& 100& \textbf{0.187}& 0.218& 0.256 &0.263& {0.204}\\
& &(0.045)& (0.053)& (0.060)& (0.058) &(0.066)\\
\midrule
III& 200& \textbf{0.118}& 0.137 &0.171&0.179 & \textbf{0.118}\\
& & (0.031)& (0.038)& (0.043)& (0.042)& (0.033)\\
\midrule
III& 400& 0.082 &0.101&0.127 &0.132& \textbf{0.079}\\
& & (0.020) &(0.029) &(0.031) &(0.032) &(0.022)\\
 		\bottomrule
	\end{tabular}%
	}
	\label{tab:summaryM1}%
\end{table}

 \begin{table}[!htbp]
\centering
\caption{Mean and standard deviation  (in parenthesis) of estimation errors of M2}
{\tiny
\begin{tabular}{ll|ccccc}
		\toprule
Distribution &$n$ &\texttt{Fourier} & \texttt{Fourier\_weighted} & \texttt{Indicator}& \texttt{Indicator\_weighted}& \texttt{csMAVE}  \\ \midrule
I& 100& {0.670} &0.601 &0.629& 0.582 & \textbf{0.575}\\
& & (0.089) &(0.135)& (0.130)& (0.140)&  (0.176)\\
\midrule
I& 200&  {0.478} &0.388& 0.436& 0.407&  \textbf{0.219}\\
& & (0.201)& (0.152) &(0.193)& (0.162)&  (0.136)\\
\midrule
I& 400& {0.226}& 0.201 &0.231& 0.236& \textbf{0.098}\\
& & (0.153)& (0.074) &(0.127)& (0.111)& (0.025)\\
\midrule
II& 100&{0.663}&0.652& 0.687& 0.658&\textbf{0.544}\\
& & (0.097)& (0.104) &(0.057)& (0.080)& (0.176)\\
\midrule
II& 200& {0.525} &0.468& 0.601& 0.539&  \textbf{0.182}\\
& & (0.171)& (0.171) &(0.127)& (0.148)&  (0.096)\\
\midrule
II& 400& {0.267}& 0.307 &0.375 &0.357& \textbf{0.087}\\
& & (0.081) &(0.146)& (0.154)& (0.141) &(0.021)\\
\midrule
III& 100& {0.657}& 0.590&\textbf{0.530} &0.542& {0.603}\\
& &(0.104)& (0.148)& (0.155)& (0.148) &(0.193)\\
\midrule
III& 200& {0.421}& 0.367 &0.306&0.336& \textbf{0.240}\\
& & (0.203)& (0.165)& (0.147)& (0.151)& (0.193)\\
\midrule
III& 400& 0.170&0.170&0.144&0.170& \textbf{0.089}\\
& & (0.110) &(0.071) &(0.053) &(0.063) &(0.019)\\
 		\bottomrule
	\end{tabular}%
	}
	\label{tab:summaryM2}%
\end{table}
 \begin{table}[!htbp]
\centering
\caption{Mean and standard deviation  (in parenthesis) of estimation errors of M3}
{\tiny
\begin{tabular}{ll|ccccc}
		\toprule
Distribution &$n$ &\texttt{Fourier} & \texttt{Fourier\_weighted} & \texttt{Indicator}& \texttt{Indicator\_weighted}& \texttt{csMAVE}  \\ \midrule
I& 100& 0.744& 0.657& 0.668& \textbf{0.561}& 0.602\\
& & (0.056)&(0.113)&(0.083)&(0.142)&(0.147)\\
\midrule
I& 200&0.702 &0.472 &0.559& \textbf{0.369}& 0.374 \\
& & (0.061)&(0.177)&(0.147)&(0.155)&(0.148)\\
\midrule
I&  400& 0.621& 0.252& 0.408& 0.223& \textbf{0.203}\\
& & (0.148)&(0.102)&(0.177)&(0.064)&(0.061)\\
\midrule
II& 100&0.751& 0.698& 0.683& \textbf{0.570} &0.635\\
& & (0.041)&(0.076)&(0.080)&(0.136)&(0.136)\\
\midrule
II& 200& 0.719& 0.521& 0.584 &\textbf{0.355} &0.387\\
& & (0.040)&(0.163)&(0.111)&(0.097)&(0.144)\\
\midrule
II&  400& 0.686& 0.267& 0.452 &0.252& \textbf{0.201}\\
& & (0.079)&(0.084)&(0.153)&(0.052)&(0.045)\\
\midrule
III& 100& 0.739& 0.676& 0.654& \textbf{0.563}&0.571\\
& &(0.073)&(0.106)&(0.105)&(0.150)&(0.120)\\
\midrule
III& 200& 0.704& 0.546& 0.523& 0.368&\textbf{0.330}\\
& & (0.048)&(0.162)&(0.171)&(0.153)&(0.131)\\
\midrule
III& 400& 0.616 &0.252 &0.297& 0.202 &\textbf{0.179}\\
& & (0.151)&(0.113)&(0.106)&(0.055)&(0.042)\\
 		\bottomrule
	\end{tabular}%
	}
	\label{tab:summaryM3}%
\end{table}
 \begin{table}[!htbp]
\centering
\caption{Mean and standard deviation  (in parenthesis) of estimation errors of M4}
{\tiny
\begin{tabular}{ll|ccccc}
		\toprule
Distribution &$n$ &\texttt{Fourier} & \texttt{Fourier\_weighted} & \texttt{Indicator}& \texttt{Indicator\_weighted}& \texttt{csMAVE}  \\ \midrule
I& 100& {0.836} &0.794 &0.774& \textbf{0.713} & {0.803}\\
& & (0.072) &(0.076)& (0.074)& (0.105)&  (0.087)\\
\midrule
I& 200&  {0.820} &0.733& 0.747& \textbf{0.545}&  {0.685}\\
& & (0.066)& (0.094) &(0.060)& (0.150)&  (0.116)\\
\midrule
I& 400& {0.782}& 0.633 &0.710& \textbf{0.364}& {0.534}\\
& & (0.059)& ( 0.142) &(0.081)& (0.129)& (0.155)\\
\midrule
II& 100&{0.839}&0.828& 0.788& \textbf{0.751}&{0.818}\\
& & (0.067)& (0.064) &(0.062)& (0.095)& (0.095)\\
\midrule
II& 200& {0.834} &0.781& 0.759& \textbf{0.660}&  {0.701}\\
& & (0.171)& (0.081) &(0.040)& (0.117)&  (0.111)\\
\midrule
II& 400& {0.812}& 0.712 &0.739 &\textbf{0.511}& {0.544}\\
& & (0.059) &(0.097)& (0.038)& (0.135) &(0.151)\\
\midrule
III& 100& {0.838}& 0.815&{0.764} &\textbf{0.706}& {0.786}\\
& &(0.074)& (0.077)& (0.069)& (0.108) &(0.109)\\
\midrule
III& 200& {0.829}&0.761&0.726&\textbf{0.544}& {0.676}\\
& & (0.071)& (0.099)& (0.083)& (0.149)& (0.123)\\
\midrule
III& 400& 0.796&0.646&0.669&\textbf{0.317}& {0.506}\\
& & (0.069) &(0.139) &(0.113) &(0.110) &(0.146)\\
 		\bottomrule
	\end{tabular}%
	}
	\label{tab:summaryM4}%
\end{table}

 \begin{table}[!htbp]
\centering
\caption{Mean and standard deviation  (in parenthesis) of estimation errors of M5}
{\tiny
\begin{tabular}{ll|ccccc}
		\toprule
Distribution &$n$ &\texttt{Fourier} & \texttt{Fourier\_weighted} & \texttt{Indicator}& \texttt{Indicator\_weighted}& \texttt{csMAVE}  \\ \midrule
I& 100& 0.705& \textbf{0.682}& 0.708 &0.691 &0.709\\
& & (0.060)&(0.067)&(0.060)&(0.056)&(0.069)\\
\midrule
I& 200&  0.679 &\textbf{0.634} &0.688 &0.642 &0.687\\
& & (0.061)&(0.054)&(0.058)&(0.060)&(0.073)\\
\midrule
I& 400& 0.644 &\textbf{0.588}& 0.660& 0.591& 0.646\\
& & (0.050)&(0.047)&(0.056)&(0.061)&(0.082)\\
\midrule
I& 800& 0.622& 0.543& 0.629& \textbf{0.493}& 0.553\\
& & (0.032)&(0.078)&(0.035)&(0.100)&(0.077)\\
\midrule
II& 100&0.712& \textbf{0.688}& 0.713& 0.697 &0.722\\
& & (0.060)&(0.069)&(0.051)&(0.057)&(0.054)\\
\midrule
II& 200& 0.693& \textbf{0.669} &0.694&\textbf{0.669}& 0.697\\
& & (0.058)&(0.065)&(0.054)&(0.057)&(0.064)\\
\midrule
II& 400& 0.670& \textbf{0.614}& 0.681& 0.633& 0.687\\
& & (0.054)&(0.059)&(0.052)&(0.050)&(0.067)\\
\midrule
II& 800& 0.660& \textbf{0.584}& 0.672& 0.585& 0.589\\
& & (0.053)&(0.045)&(0.052)&(0.055)&(0.074)\\
\midrule
III& 100& 0.706& \textbf{0.687}& 0.703& 0.691 &0.724\\
& &(0.062)&(0.062)&(0.061)&(0.061)&(0.051)\\
\midrule
III& 200& 0.701& \textbf{0.655} &0.702& 0.668& 0.703\\
& & (0.063)&(0.069)&(0.058)&(0.074)&(0.080)\\
\midrule
III& 400& 0.659 &\textbf{0.603} &0.664 &0.604& 0.682\\
& & (0.062)&(0.072)&(0.059)&(0.077)&(0.081)\\
\midrule
III& 800&0.657 &0.562& 0.651 &\textbf{0.513}& 0.602\\
& & (0.064)&(0.068)&(0.052)&(0.109)&(0.087)\\
 		\bottomrule
	\end{tabular}%
	}
	\label{tab:summaryM5}%
\end{table}
 \begin{table}[!htbp]
\centering
\caption{Mean and standard deviation  (in parenthesis) of estimation errors of M6}
{\tiny
\begin{tabular}{ll|ccccc}
		\toprule
Distribution &$n$ &\texttt{Fourier} & \texttt{Fourier\_weighted} & \texttt{Indicator}& \texttt{Indicator\_weighted}& \texttt{csMAVE}  \\ \midrule
I& 100& {0.304} &\textbf{0.294} &0.492& {0.299} & {0.539}\\
& & (0.092) &(0.082)& (0.135)& (0.087)&  (0.255)\\
\midrule
I& 200&  {0.217} &0.213& 0.329& {0.205}&  \textbf{0.194}\\
& & (0.057)& (0.054) &(0.107)& (0.059)&  (0.061)\\
\midrule
I& 400& {0.142}& 0.146 &0.199& {0.138}& \textbf{0.114}\\
& & (0.036)& ( 0.035) &(0.069)& (0.039)& (0.034)\\
\midrule
II& 100&{0.308}&\textbf{0.293}& 0.479& {0.299}&{0.488}\\
& & (0.094)& (0.073) &(0.129)& (0.086)& (0.248)\\
\midrule
II& 200& {0.205} &0.210& 0.321&{0.210}&  \textbf{0.192}\\
& & (0.058)& (0.057) &(0.095)& (0.058)&  (0.061)\\
\midrule
II& 400& {0.144}& 0.150 &0.190 &{0.142}& \textbf{0.111}\\
& & (0.039) &(0.042)& (0.055)& (0.045) &(0.032)\\
\midrule
III& 100& {0.373}& 0.375&{0.504} &\textbf{0.322}& {0.562}\\
& &(0.152)& (0.175)& (0.143)& (0.083) &(0.273)\\
\midrule
III& 200& {0.226}&0.230&0.340&\textbf{0.218}& \textbf{0.218}\\
& & (0.065)& (0.070)& (0.100)& (0.060)& (0.083)\\
\midrule
III& 400& 0.149&0.151&0.194&{0.146}& \textbf{0.114}\\
& & (0.039) &(0.038) &(0.068) &(0.042) &(0.032)\\
 		\bottomrule
	\end{tabular}%
	}
	\label{tab:summaryM7}%
\end{table}

 \begin{table}[!htbp]
\centering
\caption{Mean and standard deviation  (in parenthesis) of estimation errors of M7}
{\tiny
\begin{tabular}{ll|ccccc}
		\toprule
Distribution &$n$ &\texttt{Fourier} & \texttt{Fourier\_weighted} & \texttt{Indicator}& \texttt{Indicator\_weighted}& \texttt{csMAVE}  \\ \midrule
I& 100& {0.273} &\textbf{0.237} &0.241& {0.252} & {0.790}\\
& & (0.169) &(0.050)& (0.136)& (0.158)&  (0.316)\\
\midrule
I& 200&  {0.160} &0.159& \textbf{0.143}&  {0.153}& {0.425}\\
& & (0.093)& (0.041) &(0.083)& (0.093)&  (0.391)\\
\midrule
I& 400& {0.098}& 0.104 &\textbf{0.088}& {0.102}& {0.127}\\
& & (0.024)& ( 0.025) &(0.021)& (0.093)& (0.202)\\
\midrule
II& 100&\textbf{0.233}&{0.260}& 0.236& {0.265}&{0.902}\\
& & (0.057)& (0.134) &(0.142)& (0.185)& (0.219)\\
\midrule
II& 200& {0.154} &0.176& \textbf{0.145}&{0.150}&  {0.649}\\
& & (0.058)& (0.124) &(0.093)& (0.094)&  (0.414)\\
\midrule
II& 400& {0.097}& 0.110 &\textbf{0.087} &{0.099}& {0.295}\\
& & (0.025) &(0.094)& (0.022)& (0.093) &(0.391)\\
\midrule
III& 100& {0.274}& 0.303&\textbf{0.238} &{0.298}& {0.933}\\
& &(0.201)& (0.237)& (0.160)& (0.242) &(0.163)\\
\midrule
III& 200& {0.167}&0.188&\textbf{0.159}&{0.167}& {0.678}\\
& & (0.120)& (0.159)& (0.150)& (0.144)& (0.408)\\
\midrule
III& 400& 0.100&0.116&\textbf{0.089}&{0.112}& {0.375}\\
& & (0.023) &(0.090) &(0.023) &(0.129) &(0.431)\\
 		\bottomrule
	\end{tabular}%
	}
	\label{tab:summaryM8}%
\end{table}

\section{Boston Housing Data}\label{sec:dataAnalysis}
We apply the ensemble conditional variance estimator and \texttt{csMAVE} to the \texttt{Boston Housing} data set. This data set has been extensively used as a benchmark for assessing regression methods [see, for example, \cite{James2013}], and is available in the \texttt{R}-package \texttt{mlbench}. 
The data contains 506 instances of 14 variables from the  1970 Boston census, 13 of which are continuous. The binary variable \texttt{chas}, indexing proximity to the Charles river, is omitted from the analysis since ensemble conditional variance estimation operates under the assumption of continuous predictors. The target variable is the median value of owner-occupied homes, \texttt{medv}, in \$1,000. The 12 predictors are \texttt{crim} (per capita crime rate by town), \texttt{zn} (proportion of residential land zoned for lots over 25,000 sq.ft), \texttt{indus} 	(proportion of non-retail business acres per town),
\texttt{nox} (nitric oxides concentration (parts per 10 million)), \texttt{rm} (average number of rooms per dwelling), \texttt{age} (proportion of owner-occupied units built prior to 1940), \texttt{dis} (weighted distances to five Boston employment centres), \texttt{rad} (index of accessibility to radial highways), \texttt{tax} (full-value property-tax rate per \$10,000), \texttt{ptratio} (pupil-teacher ratio by town),  \texttt{lstat} (percentage of lower status of the population), and \texttt{b} stands for $1000(B - 0.63)^2$ where $B$ is the proportion of blacks by town. 

\medskip
We analyze these data with the weighted and unweighted Fourier and indicator ensembles, and \texttt{csMAVE}. We compute  unbiased error estimates by leave-one-out  
cross-validation. 
We estimate the sufficient reduction with the five methods from the standardized training set, estimate the forward model from the reduced training set using \texttt{mars}, multivariate adaptive regression splines \cite{mars}, in the \texttt{R}-package \texttt{mda}, and predict the target variable on the test set. 
We report results for dimension $k =1$. 
The analysis was  repeated setting $k = 2$ with similar results.  
Table~\ref{table:datasets} reports the first quantile, median, mean and third quantile of the out-of-sample prediction errors. The reductions estimated by the ensemble \texttt{CVE} methods achieve lower mean and median prediction errors than \texttt{csMAVE}. Also, both \texttt{ensemble CVE} and \texttt{csMAVE} are approximately on par with the variable selection methods in \cite[Section 8.3.3]{James2013}.

\begin{table}[!htbp]
\centering
\caption{Summary statistics of the out of sample prediction errors for the Boston Housing data obtained by LOO cross validation}
\vspace{0.05in}
{\footnotesize
\begin{tabular}{l|ccccc}
		\toprule
& \texttt{Fourier} & \texttt{Fourier\_weighted} & \texttt{Indicator}& \texttt{Indicator\_weighted}& \texttt{csMAVE}  \\ 
\bottomrule
25\% quantile & \textbf{0.766}  &  0.785  &0.973& 0.916&0.851\\
median  & \textbf{3.323}  &  3.358 &3.844& 3.666&4.515\\
mean & 19.971  & 19.948 &19.716& \textbf{19.583} &24.309\\
75\% quantile & 11.129     & 10.660  &11.099& \textbf{10.429} &16.521\\
\bottomrule

	\end{tabular}%
	}
	\label{table:datasets}%
\end{table}

\medskip
Moreover, we plot the standardized response \texttt{medv} against the reduced \texttt{Fourier} and \texttt{csMAVE} predictors, $\B^T\X$, in Figure~\ref{Fig_real_data}. The sufficient reductions are estimated using the entire data set. A particular feature of these data is that the response \texttt{medv} appears to be truncated  as the highest median price of exactly \$50,000 is reported in 16 cases. Both methods pick up similar patterns, which is captured by the relatively high absolute correlation of the coefficients of the two reductions, $|\widehat{\B}_{\texttt{Fourier}}^T \widehat{\B}_{\texttt{csMAVE}}| = 0.786$. The coefficients of the reductions, $\widehat{\B}_{\texttt{Fourier}}$ and $\widehat{\B}_{\texttt{csMAVE}}$, are reported in Table~\ref{tab:dataset_reductions}. For the \texttt{Fourier} ensemble, the variables \texttt{rm} and \texttt{lstat} have the highest influence on the target variable \texttt{medv}. This agrees with the analysis in \cite[Section 8.3.4]{James2013} where it was found that these two variables are by far the most important using different variable selection techniques, such as random forests and boosted regression trees. In contrast, the reduction estimated by \texttt{csMAVE} has a lower coefficient for \texttt{rm} and higher ones for \texttt{crim} and \texttt{rad}.

\begin{table}[!htbp]
{\scriptsize
\centering
\caption{{\footnotesize Rounded coefficients of the estimated reductions for $\widehat{\B}_{\texttt{Fourier}}$ and $\widehat{\B}_{\texttt{csMAVE}}$ from the full Boston Housing data}}
\vspace{0.05in}
\begin{tabular}{l|cccccccccccc}
		\toprule
& \texttt{crim}&    \texttt{zn}&      \texttt{indus}&  \texttt{ nox}&     \texttt{rm}&      \texttt{age}&\texttt{dis}&\texttt{rad}&\texttt{tax}&\texttt{pt}\texttt{ratio}& \texttt{b}&\texttt{lstat}\\ 
\bottomrule
\texttt{Fourier}&0.21 &-0.01 &0.04 & 0.1 &-0.62 &0.16&  0.2 &   0 & 0.2 & 0.27 &-0.25 & 0.57\\
\texttt{csMAVE}&0.5& -0.05 &-0.06& 0.14& -0.27& 0.11& 0.24& -0.43& 0.3&    0.19& -0.15 & 0.51\\
\bottomrule
\end{tabular}%
	\label{tab:dataset_reductions}%
}
\end{table}

\begin{figure}[htbp] 
	\centering
	\includegraphics[width=0.6\textwidth]{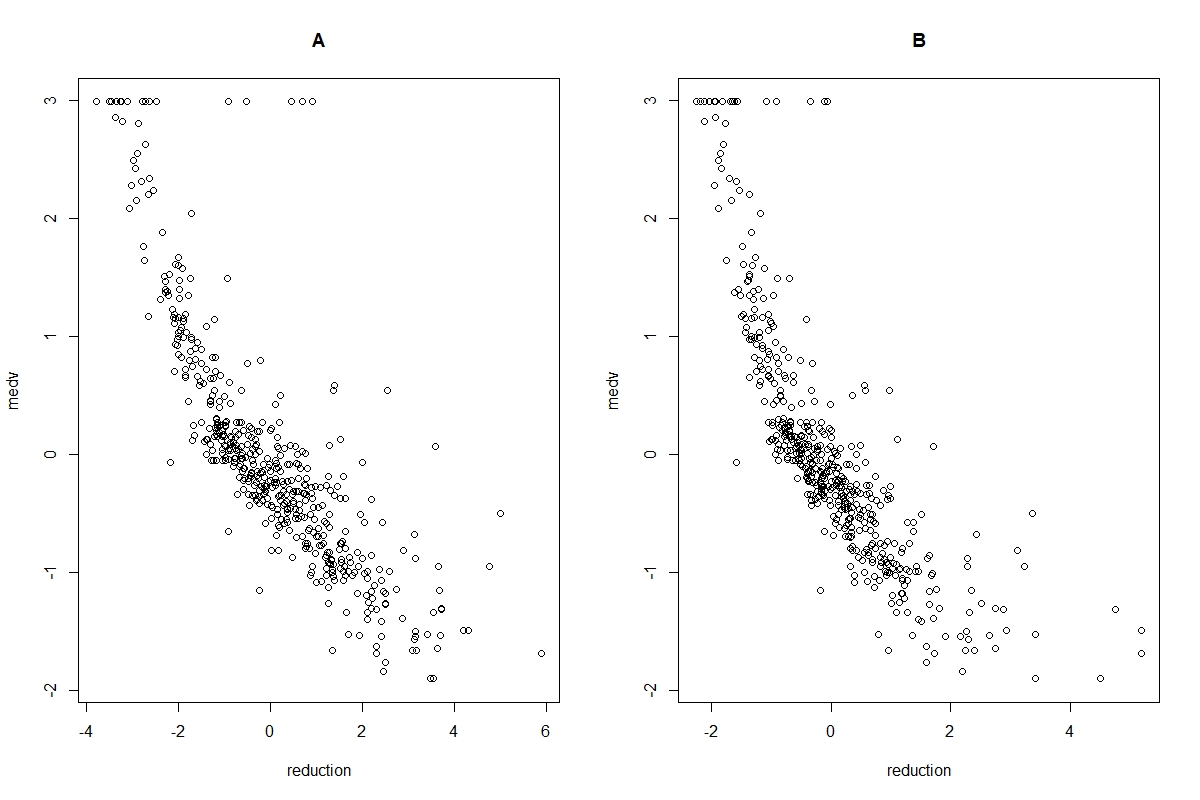}
	\caption{ Panel A: $Y$ vs. $\widehat{\B}_{\texttt{Fourier}}^T\X$. Panel B: $Y$ vs. $\widehat{\B}_{\texttt{csMAVE}}^T\X$}
	\label{Fig_real_data}
\end{figure}


\section{Discussion}\label{sec:discussion}
In this paper, we extend the \textit{mean subspace} conditional variance estimation (\texttt{CVE}) of \cite{FertlBura} 
to the ensemble conditional variance estimation (\texttt{ECVE}), which exhaustively estimates the \textit{central subspace}, by applying the ensemble device introduced by \cite{YinLi2011}. In Section~\ref{sec:consistency} we showed that the new estimator is consistent for the central subspace. 
The regularity conditions for consistency require the joint distribution of the target variable and predictors, $(Y,\X^T)^T$, be sufficiently smooth. They are comparable to those under which the main competitor \texttt{csMAVE} \cite{WangXia2008} is consistent. 

We analysed the estimation accuracy of \texttt{ECVE}
in Section~\ref{sec:simulations}. We found  that it is either on par with \texttt{csMAVE} or that it exhibits substantial performance improvement in certain models. We could not characterize the defining features of the models for which the  ensemble conditional variance estimation  outperforms \texttt{csMAVE}. This is an interesting line of further research together with establishing more theoretical results such as the rate of convergence, estimation of the structural dimension, and the limiting distribution of the estimator.   

\texttt{ECVE} identifies the central subspace via the orthogonal complement and thus circumvents the estimation and inversion of the variance matrix of the predictors $\X$. This renders the method formally applicable to settings where the sample size $n$ is small or smaller than  $p$, the number of predictors, and leads to potential future research. 

Throughout, the dimension of the central subspace, $k = \dim(\cs)$, is assumed to be known. The derivation of asymptotic tests for dimension is technically very challenging due to the lack of closed-form solution and the lack of independence of all quantities in the calculation. The dimension can be estimated via cross-validation, as in \cite{WangXia2008} and \cite{FertlBura}, or information criteria. 

\section*{Acknowledgements}
The authors gratefully acknowledge the support of the Austrian Science Fund (FWF P 30690-N35) and thank Daniel Kapla for his programming assistance. 
Daniel Kapla also co-authored the \texttt{CVE} \texttt{R} package that implements the proposed method.


\begin{thebibliography}{MMW{\etalchar{+}}63}

\bibitem[AC09]{AdragniCook2009}
{Kofi P.} Adragni and {R. Dennis} Cook.
\newblock Sufficient dimension reduction and prediction in regression.
\newblock {\em Philosophical Transactions of the Royal Society A: Mathematical,
  Physical and Engineering Sciences}, 367(1906):4385--4405, 11 2009.

\bibitem[Ame85]{Takeshi}
Takeshi Amemiya.
\newblock {\em Advanced econometrics}.
\newblock Harvard university press, 1985.

\bibitem[Boo02]{Boothby}
W.~M. Boothby.
\newblock {\em An Introduction to Differentiable Manifolds and Riemannian
  Geometry}.
\newblock Academic Press, 2002.

\bibitem[CD88]{locallinearsmoothing}
William~S. Cleveland and Susan~J. Devlin.
\newblock Locally weighted regression: An approach to regression analysis by
  local fitting.
\newblock {\em Journal of the American Statistical Association},
  83(403):596--610, 1988.

\bibitem[CL02]{CookLi2002}
R.Dennis Cook and Bing Li.
\newblock Dimension reduction for conditional mean in regression.
\newblock {\em Ann. Statist.}, 30(2):455--474, 04 2002.

\bibitem[Coo98]{Cook1998}
Dennis~R. Cook.
\newblock {\em Regression Graphics: Ideas for studying regressions through
  graphics}.
\newblock Wiley, New York, 1998.

\bibitem[Coo07]{Cook2007}
R.~Dennis Cook.
\newblock Fisher lecture: Dimension reduction in regression.
\newblock {\em Statist. Sci.}, 22(1):1--26, 02 2007.

\bibitem[Fad85]{regProb}
Arnold~M. Faden.
\newblock The existence of regular conditional probabilities: Necessary and
  sufficient conditions.
\newblock {\em The Annals of Probability}, 13(1):288--298, 1985.

\bibitem[FB21]{FertlBura}
Lukas Fertl and Efstathia Bura.
\newblock Conditional variance estimator for sufficient dimension reduction,
  2021.

\bibitem[Fri91]{mars}
Jerome~H. Friedman.
\newblock Multivariate adaptive regression splines.
\newblock {\em The Annals of Statistics}, 19(1):1--67, 1991.

\bibitem[GH94]{Grassman}
Phillip Griffiths and Joseph Harris.
\newblock {\em Principles of algebraic geometry}.
\newblock Wiley Classics Library. John Wiley \& Sons, Inc., New York, 1994.
\newblock Reprint of the 1978 original.

\bibitem[Han08]{Hansen2008}
Bruce~E. Hansen.
\newblock Uniform convergence rates for kernel estimation with dependent data.
\newblock {\em Econometric Theory}, 24:726–748, 2008.

\bibitem[Heu95]{HarroHeuser}
H.~Heuser.
\newblock {\em Analysis 2, 9 Auflage}.
\newblock Teubner, 1995.

\bibitem[Jen69]{jennrich1969}
Robert~I. Jennrich.
\newblock Asymptotic properties of non-linear least squares estimators.
\newblock {\em Ann. Math. Statist.}, 40(2):633--643, 04 1969.

\bibitem[JWHT13]{James2013}
Gareth James, Daniela Witten, Trevor Hastie, and Robert Tibshirani.
\newblock {\em An Introduction to Statistical Learning: with Applications in
  R}.
\newblock Springer, 2013.

\bibitem[Kar93]{regProb2}
Alan~F. Karr.
\newblock {\em Probability}.
\newblock Springer Texts in Statistics. Springer-Verlag New York, 1993.

\bibitem[Li91]{Li1991}
K.~C. Li.
\newblock Sliced inverse regression for dimension reduction.
\newblock {\em Journal of the American Statistical Association},
  86(414):316--327, 1991.

\bibitem[Li18]{Li2018}
Bing Li.
\newblock {\em Sufficient dimension reduction: methods and applications with
  R}.
\newblock CRC Press, Taylor \& Francis Group, 2018.

\bibitem[LJFR04]{Leaoetal2004}
D.~Leao~Jr., M.~Fragoso, and P.~Ruffino.
\newblock Regular conditional probability, disintegration of probability and
  radon spaces.
\newblock {\em {Proyecciones (Antofagasta)}}, 23:15 -- 29, 05 2004.

\bibitem[MMW{\etalchar{+}}63]{mickey1963test}
M.R. Mickey, P.B. Mundle, D.N. Walker, A.M. Glinski, Inc C-E-I-R, and Aerospace
  Research~Laboratories (U.S.).
\newblock {\em Test Criteria for Pearson Type III Distributions}.
\newblock ARL (Aerospace Research Laboratories (U.S.))). Aerospace Research
  Laboratories, Office of Aerospace Research, United States Air Force, 1963.

\bibitem[MZ13]{MaZhu2013}
Yanyuan Ma and Liping Zhu.
\newblock A review on dimension reduction.
\newblock {\em International Statistical Review}, 81(1):134--150, 4 2013.

\bibitem[S.N27]{Bernstein}
S.N.Bernstein.
\newblock {\em Theory of Probability}.
\newblock 1927.

\bibitem[Tag11]{Tagare2011}
Hemant~D. Tagare.
\newblock Notes on optimization on stiefel manifolds, January 2011.

\bibitem[WX08]{WangXia2008}
Hansheng Wang and Yingcun Xia.
\newblock Sliced regression for dimension reduction.
\newblock {\em Journal of the American Statistical Association},
  103(482):811--821, 2008.

\bibitem[WY19]{MAVEpackage}
Hang Weiqiang and Xia Yingcun.
\newblock {\em MAVE: Methods for Dimension Reduction}, 2019.
\newblock R package version 1.3.10.

\bibitem[XTLZ02]{Xiaetal2002}
Yingcun Xia, Howell Tong, W.~K. Li, and Li-Xing Zhu.
\newblock An adaptive estimation of dimension reduction space.
\newblock {\em Journal of the Royal Statistical Society: Series B (Statistical
  Methodology)}, 64(3):363--410, 2002.

\bibitem[YL11]{YinLi2011}
Xiangrong Yin and Bing Li.
\newblock Sufficient dimension reduction based on an ensemble of minimum
  average variance estimators.
\newblock {\em Ann. Statist.}, 39(6):3392--3416, 12 2011.

\bibitem[YLC08]{CompactAssumption}
Xiangrong Yin, Bing Li, and R.~Cook.
\newblock Successive direction extraction for estimating the central subspace
  in a multiple-index regression.
\newblock {\em Journal of Multivariate Analysis}, 99:1733--1757, 09 2008.

\bibitem[ZZ10]{ZENG2010271}
Peng Zeng and Yu~Zhu.
\newblock An integral transform method for estimating the central mean and
  central subspaces.
\newblock {\em Journal of Multivariate Analysis}, 101(1):271 -- 290, 2010.

\end{thebibliography}

\newcommand{\etalchar}[1]{$^{#1}$}

\newpage
\section*{Appendix}
For any $\V \in \spc(p,q)$, defined in \eqref{Smanifold}, we generically denote a basis of the orthogonal complement of its column space $\spn\{\V\}$, by $\Ub$. That is, $\Ub \in \spc(p,p-q)$ such that $\spn\{\V\} \perp \spn\{\Ub\}$ and $\spn\{\V\} \cup \spn\{\Ub\} = \real^p$, $\Ub^T\V = \0 \in \real^{(p-q) \times q}, \Ub^T\Ub = \I_{p-q}$. For any $\xn, \bs_0 \in \real^p$ we can always write
\begin{equation}\label{ortho_decomp}
    \xn = \bs_0 + \Pbf_\V (\xn - \bs_0) + \Pbf_\Ub (\xn - \bs_0) = \bs_0 + \V\rs_1 + \Ub\rs_2
\end{equation}
where $\rs_1 = \V^T(\xn-\bs_0) \in \real^{q}, \rs_2 = \Ub^T(\xn-\bs_0) \in \real^{p-q}$.

\begin{proof}[Proof of Theorem~\ref{CVE_targets_meansubspace_thm}]
The density of $\X \mid \X \in \bs_0 + \spn\{\V\}$ is given by 
\begin{gather}\label{density}
f_{\X\mid\X \in \bs_0 +\spn\{\V\}}(\rs_1) =
\frac{f_\X(\bs_0 + \V\rs_1)}{\int_{\real^q}f_\X(\bs_0 + \V\rs)d\rs} 
\end{gather}
where $\X$ is the $p$-dimensional  continuous random covariate vector with density $f_\X(\xn)$, $\bs_0 \in \text{supp}(f_\X) \subset \real^p$, and $\V$ belongs to the Stiefel manifold $\spc(p,q)$ defined in \eqref{Smanifold}. Equation \eqref{density} follows from Theorem 3.1 of \cite{Leaoetal2004}  and the fact that  $(\real^p, \mathcal{B}(\real^p))$, where $\mathcal{B}(\real^p)$ denotes the Borel sets on $\real^p$, is a Polish space, which in turnguarantees the existence of the regular conditional probability of $\X\mid\X \in \bs_0 +\spn\{\V\}$  [see also \cite{regProb}]. Further, the measure is concentrated on the  affine subspace $\bs_0 +\spn\{\V\} \subset \real^p$ with density  \eqref{density}, which follows from Definition 8.38, Theorem 8.39 of \cite{regProb2}, the orthogonal decomposition \eqref{ortho_decomp}, and the continuity of $f_\X$ (E.2). 

By assumption (E.1), $Y = \gcs(\B^T\X,\epsilon)$ with $\eps \ind \X$. 
Assume $f \in \Fa$ for which assumption (E.4) holds and let  $\widetilde{\B}$ be a basis of $\msf$; that is, $\spn\{\widetilde{\B}\} = \msf \subseteq \cs = \spn\{\B\}$. By Theorem~\ref{Y_decomposition_thm},
 $   f(Y) = g(\widetilde{\B}^T\X) + \tilde{\eps}$, 
with $\E(\tilde{\eps}\mid \X) = 0$ and $g$  twice continuously differentiable. Therefore, 
\begin{gather}
    \tilde{L}_\Fa(\V, \bs_0,f) = \Var\left(f(Y)\mid \X \in \bs_0 + \spn\{\V\}\right) \notag\\
    = \var\left(g(\widetilde{\B}^T\X) \mid \X \in \bs_0 + \spn\{\V\} \right) +2\cov\left(\tilde{\eps},g(\widetilde{\B}^T\X) \mid \X \in \bs_0 + \spn\{\V\}\right)  \notag \\
    +\var\left(\tilde{\eps} \mid \X \in \bs_0 + \spn\{\V\} \right) \notag \\
    =\var\left(g(\widetilde{\B}^T\X) \mid \X \in \bs_0 + \spn\{\V\} \right) + \var\left(\tilde{\eps} \mid \X \in \bs_0 + \spn\{\V\} \right)\label{CVE_problem}
\end{gather}
The covariance term in \eqref{CVE_problem} vanishes since
\begin{gather*}
  \cov\left(\tilde{\eps},g(\widetilde{\B}^T\X) \mid \X \in \bs_0 + \spn\{\V\}\right) 
  = \E\left( \underbrace{\E(\tilde{\eps}\mid \X)}_{=0}g(\widetilde{\B}^T\X) \mid \X \in \bs_0 + \spn\{\V\}\right) \\
  - E\left(g(\widetilde{\B}^T\X) \mid \X \in \bs_0 + \spn\{\V\}\right)\E\left( \underbrace{\E(\tilde{\eps}\mid \X)}_{=0} \mid \X \in \bs_0 + \spn\{\V\}\right) = 0
\end{gather*}
i.e. the sigma field generated by $\X \in \bs_0 + \spn\{\V\}$ is a subset of that generated by $\X$. By the same argument and using \eqref{density}
\begin{gather}
    \var\left(\tilde{\eps}\mid \X \in \bs_0 + \spn\{\V\}\right) = \E(\tilde{\eps}^2\mid \X \in \bs_0 + \spn\{\V\}) \notag\\
    = \E(\E(\tilde{\eps}^2\mid \X) \mid \X \in \bs_0 + \spn\{\V\}) = \E(h(\X) \mid \X \in \bs_0 + \spn\{\V\}) \notag \\
    = \int_{\text{supp}(f_\X)\cap\real^q} h(\bs_0 + \V \rs_1 ) \times f_\X(\bs_0 + \V \rs_1)d\rs_1 / t^{(0)}(\V,\bs_0,f)) \notag
\end{gather}
where $\E(\tilde{\eps}^2\mid \X = \xn) = h(\xn)$. Using 
\eqref{density} again for the first term in \eqref{CVE_problem} obtains formula \eqref{e_LtildeVs0} and \eqref{tilde_eps_var}. 

To see that \eqref{e_objective}, \eqref{e_LtildeVs0}, and \eqref{tilde_eps_var} are well defined and continuous, let $\tilde{g}(\V,\bs_0,\rs)= g(\B^T\bs_0 + \B^T\V\rs)^l f_\X(\bs_0 + \V\rs)$ for $l = 1,2$ or $\tilde{g}(\V,\bs_0,\rs)= h(\B^T\bs_0 + \B^T\V\rs) f_\X(\bs_0 + \V\rs)$ (for \eqref{tilde_eps_var}) which are continuous by assumption. In consequence, the parameter depending integrals 
\eqref{tl} and \eqref{tilde_eps_var} are well defined and continuous if (1) $\tilde{g}(\V,\bs_0,\cdot)$ is integrable for all $\V \in \spc(p,q),\bs_0 \in \text{supp}(f_\X)$, (2) $\tilde{g}(\cdot,\cdot,\rs)$ is continuous for all $\rs$, and (3) there exists an integrable dominating function of $\tilde{g}$ that does not depend on  $\V$ and $\bs_0$ [see \cite[p. 101]{HarroHeuser}]. 

Furthermore, for some compact set $\mathcal{K}$, $t^{(l)}(\V,\bs_0) = \int_{\mathcal{K}} \tilde{g}(\V,\bs_0,\rs) d\rs$, since $\text{supp}(f_\X)$ is compact by (E.2). The function $\tilde{g}(\V,\bs_0,\rs)$ is continuous in all inputs by the continuity of $g$ (E.4) and $f_\X$ by (E.2), and therefore it attains a maximum. In consequence, all three conditions are satisfied so that $t^{(l)}(\V,\bs_0)$ is well defined and continuous. By the same argument \eqref{tilde_eps_var} is well defined and continuous.

Next, $\mu_l(\V,\bs_0) = t^{(l)}(\V,\bs_0)/t^{(0)}(\V,\bs_0)$ is continuous since $t^{(0)}(\V,\bs_0) > 0$ for all interior points $\bs_0 \in \text{supp}(f_\X)$ by the continuity of $f_\X$, convexity of the support and $\Sigmaxbf > 0$. Then, $\tilde{L}(\V,\bs_0,f)$ in  \eqref{e_LtildeVs0} is continuous, which results in \eqref{e_LV1} also being well defined and continuous by virtue of it being a parameter depending integral following the same arguments as above. Moreover, \eqref{CVE_of_transformed_Y} exists as the minimizer of a continuous function over the compact set $\spc(p,q)$. 
\medskip
Then, \eqref{e_LV1} can be writen as
\begin{align} \label{LV_Fa}
    L_\Fa^*(\V,f) = \E_{\bs_0 \sim \X}\left(\mu_2(\V,\bs_0,f) - \mu_1(\V,\bs_0,f)^2 \right) + \E_{\bs_0 \sim \X}\left(\var\left(\tilde{\eps} \mid \X \in \bs_0 + \spn\{\V\} \right) \right)
\end{align}
where $\bs_0 \sim \X$ signifies that $\bs_0$ is distributed as $\X$ and the expectation is with respect to the distribution of $\bs_0$. 

It now suffices to show that the second term on the right hand side of \eqref{LV_Fa} is constant with respect to $\V$.
By the law of total variance,
\begin{align}
    \var(\tilde{\eps}) &= \E\left(\var(\tilde{\eps}\mid \X \in \bs_0 + \spn\{\V\})\right) + \var\left(\E(\tilde{\eps}\mid \X \in \bs_0 + \spn\{\V\} )\right) \notag\\
   &= \E\left(\var(\tilde{\eps}\mid \X \in \bs_0 + \spn\{\V\})\right) \label{variance_constant}
\end{align}
since $\E(\tilde{\eps}\mid \X \in \bs_0 + \spn\{\V\} ) = \E(\underbrace{\E(\tilde{\eps}\mid \X)}_{= 0}\mid \X \in \bs_0 + \spn\{\V\} ) = 0$.  Inserting \eqref{variance_constant} into \eqref{LV_Fa} obtains
\begin{align}
L_\Fa^*(\V,f_t) = \E\left(\mu_2(\V,\X,f_t) - \mu_1(\V,\X,f_t)^2\right) + \var(\tilde{\eps}) \notag \\
= \E_{\bs_0 \sim \X}\left(\var\left(g(\widetilde{\B}^T\X) \mid \X \in \bs_0 + \spn\{\V\}\right) \right)  + \var(\tilde{\eps})\label{CVE_target}
\end{align}
Next we show that \eqref{e_LV1}, or, equivalently \eqref{CVE_target}), attains its minimum at $\V \perp \widetilde{\B}$. Let $\bs_0 \in \text{supp}(f_\X)$ and $\V = (\vb_1,...,\vb_q) \in \real^{p \times q}$,  so that $\vb_u \in \spn\{\B\}$ for some $u \in \{1,...,q\}$. 
Since $\X \in \bs_0 +\spn\{\V\} \Longleftrightarrow \X = \bs_0 + \Pbf_{\V}(\X - \bs_0)$, by the first term in \eqref{CVE_target}
\begin{align}
&\Var\left(g(\widetilde{\B}^T\X) \mid \X \in \bs_0 + \spn\{\V\}\right) = \Var\left(g(\widetilde{\B}^T\X) \mid \X = \bs_0 + \V\V^T(\X-\bs_0)\right) 
\notag \\
&= \Var\left(g(\widetilde{\B}^T\bs_0 + \widetilde{\B}^T\V\V^T(\X-\bs_0))\mid\X = \bs_0 + \V\V^T(\X-\bs_0)\right) \geq 0 \label{1stterm}
\end{align}
If \eqref{1stterm} is positive, i.e. $\widetilde{\B}^T\V\V^T(\X-\bs_0) \neq 0$ with positive probability, then the lower bound is not attained. If it is zero; i.e., for  $\V$ such that  $\V $ and $\widetilde{\B}^T$ are orthogonal, then $L_\Fa^*(\V,f) = \var(\tilde{\eps})$. 
Since $\bs_0$ is arbitrary yet constant, the same inequality holds for \eqref{e_LV1}; that is, \eqref{e_LV1} attains its minimum for $\V$ such that  $\V $ and $\widetilde{\B}^T$ are orthogonal. Since $\spn\{\widetilde{\B}^T\} = \msf$,  \eqref{CVE_of_transformed_Y} follows.
\end{proof}

\begin{proof}[Proof of Theorem~\ref{ECVE_identifies_cs_thm}]
Under assumptions (E.1), (E.2), and (E.3), \eqref{e_objective} is well defined and continuous by arguments analogous to those in the proof of Theorem~\ref{CVE_targets_meansubspace_thm}. Therefore \eqref{enVq}  exists as a minimizer of a continuous function over the compact set $\spc(p,q)$.

To show $\cs = \spn\{\V_q\}^\perp$, let $\tilde{\spc} \neq \cs$ with $\dim(\tilde{\spc}) = \dim(\cs) = k$. Also, let $\Z \in \real^{p \times (p-k)}$ be an orthonormal base of $\tilde{\spc}^\perp$.
Suppose $L_\Fa(\Z) = \min_{V \in \spc(p,p-k)}L_\Fa(\V)$. 
By~\eqref{CVE_of_transformed_Y} and~\eqref{CVE_targets_meansubspace} in  Theorem~\ref{CVE_targets_meansubspace_thm}, $L_\Fa^*(\V,f_t)$, considered as a function from $\real^{p \times (p -k_t)}$, is minimized by an orthonormal base of $\msf^\perp$ with  $p -k_t$ elements, where $k_t = \dim(\msf) \leq k$. By (E.1), $\msf \subseteq \cs= \spn\{\B\}$. As in the proof of Theorem~\ref{CVE_targets_meansubspace_thm}, we obtain that $L_\Fa^*(\V,f_t)$, as a function from $\real^{p \times (p-k)}$, is minimized by an orthonormal base $\Ub \in \real^{p \times (p-k)}$ of $\spn\{\B\}^\perp$.

Since $\tilde{\spc}= \spn\{\Z\} \neq \spn\{\Ub\} = \cs$, we can rearrange the bases $\Ub = (\Ub_1,\Ub_2)$ and $\Z = (\Z_1,\Z_2)$ such that $\spn\{\Ub_1\} = \spn\{\Z_1\}$ and $\spn\{\Ub_2\} \neq \spn\{\Z_2\}$. Since $\Fa$ characterises $\cs$, the set $A = \{t \in \Omega_T: \spn\{\Ub_2\} \subseteq \msf \}$ is non empty and by (E.3) $A$ is not a null set with respect to the probability measure $F_T$.

Thus, 
\begin{gather*}
\min_{V \in \spc(p,p-k)}L_\Fa(\V) = L_\Fa(\Z) = \E_{t\sim F_T}\left(L_\Fa^*(\Z,f_t)\right) \\
= \int_A \underbrace{L_\Fa^*(\Z,f_t)}_{> L_\Fa^*(\Ub,f_t)} dF_T(t) +\int_{A^c} \underbrace{L_\Fa^*(\Z,f_t)}_{= L_\Fa^*(\Ub,f_t)}dF_T(t) >\E_{t\sim F_T}\left(L_\Fa^*(\Ub,f_t)\right),
\end{gather*}
which contradicts our assumption that $L_\Fa(\Z) = \min_{V \in \spc(p,p-k)}L_\Fa(\V)$. 
\end{proof}



\smallskip
\noindent
Next we introduce notation and auxiliary lemmas for the proof of Theorem~\ref{uniform_convergence_ecve}. We suppose all assumptions of Theorem~\ref{uniform_convergence_ecve} hold. We generically use the letter ``C'' to denote constants. 

Suppose  $f$ is an arbitrary element of  $\Fa$ and let
\begin{align}\label{Ytilde_decomposition}
   \tilde{Y}_i = f(Y_i) = g(\widetilde{\B}^T\X_i) + \tilde{\eps_i} 
\end{align}
with $\spn\{\widetilde{\B}\} = \spc_{\E(\tilde{Y}\mid \X)} = \spc_{\E(f(Y)\mid \X)}$. Condition (E.4) yields that $g$ is twice continuously differentiable, and $\E(|\tilde{Y}|^8) < \infty$. Since $f$ is fixed, we suppress it in $t^{(l)}(\V,\bs_0,f)$ and $\tilde{h}(\V,\bs_0,f)$, so that  
\begin{align}\label{tn}
t^{(l)}_n(\V,\bs_0,f) = t^{(l)}_n(\V,\bs_0) =\frac{1}{nh_n^{(p-q)/2}}\sum_{i=1}^n K\left(\frac{d_i(\V,\bs_0)}{h_n}\right)\tilde{Y}^l_i,
\end{align}
which is  the sample version of \eqref{tl} for $l=0,1,2$. Eqn. \eqref{e_ybar} can be expressed  as
\begin{align}\label{yl}
\bar{y}_l(\V,\bs_0)
&= \frac{ t^{(l)}_n(\V,\bs_0)}{t^{(0)}_n(\V,\bs_0)},
\end{align}



\begin{lemma}\label{aux_lemma2}
Assume  (E.2) and (K.1) hold. For a continuous function $g$, we let $Z_n(\V,\bs_0) =  \left(\sum_i g(\X_i)^l K(d_i(\V,\bs_0)/h_n)\right) /(n h_n^{(p-q)/2})$. Then, 
\begin{align*}
    \E\left(Z_n(\V,\bs_0)\right) 
    &= \int_{\text{supp}(f_\X)\cap\real^{p-q}}K(\|\rs_2\|^2)\int_{\text{supp}(f_\X)\cap\real^q} \tilde{g}(\rs_1,h_n^{1/2}\rs_2)d\rs_1 d\rs_2
\end{align*}
\end{lemma}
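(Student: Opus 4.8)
The plan is to first collapse the sample average using the i.i.d.\ assumption, then pass from an expectation over $\X$ to an integral against $f_\X$ and perform an orthogonal change of variables adapted to $\V$, and finally rescale the complementary coordinate $\rs_2$ to absorb the bandwidth factor $h_n^{-(p-q)/2}$.

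Since $(\X_i)_{i=1,\ldots,n}$ are i.i.d., every summand in $Z_n(\V,\bs_0)$ has the same expectation, so the factor $n$ cancels against the $1/n$ in its definition and
\[
\E\left(Z_n(\V,\bs_0)\right) = \frac{1}{h_n^{(p-q)/2}}\,\E\!\left(g(\X)^l\, K\!\left(\frac{d(\V,\bs_0)}{h_n}\right)\right),
\]
where $d(\V,\bs_0)=\|\Q_\V(\X-\bs_0)\|^2$ by \eqref{distance}. Writing this expectation as an integral against $f_\X$ and using the orthogonal decomposition \eqref{ortho_decomp}, with $\Ub$ an orthonormal basis of $\spn\{\V\}^\perp$, I would set $\xn=\bs_0+\V\rs_1+\Ub\rs_2$, so that $\Q_\V(\xn-\bs_0)=\Ub\rs_2$ and hence $d(\V,\bs_0)=\|\Ub\rs_2\|^2=\|\rs_2\|^2$ because $\Ub^T\Ub=\I_{p-q}$. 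The linear map $(\rs_1,\rs_2)\mapsto \xn$ is the isometry given by the orthogonal matrix $(\V,\Ub)$, whose Jacobian determinant has absolute value one, yielding
\[
\E\left(Z_n(\V,\bs_0)\right) = \frac{1}{h_n^{(p-q)/2}}\iint g(\bs_0+\V\rs_1+\Ub\rs_2)^l\, K\!\left(\tfrac{\|\rs_2\|^2}{h_n}\right) f_\X(\bs_0+\V\rs_1+\Ub\rs_2)\,d\rs_1\,d\rs_2,
\]
where the double integral runs over $\{(\rs_1,\rs_2): \bs_0+\V\rs_1+\Ub\rs_2\in\text{supp}(f_\X)\}$.

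Substituting $\rs_2\mapsto h_n^{1/2}\rs_2$ contributes a factor $h_n^{(p-q)/2}$ from $d\rs_2$ that cancels the leading $h_n^{-(p-q)/2}$, while the argument $\|\rs_2\|^2/h_n$ becomes $\|\rs_2\|^2$. Writing $\tilde g(\rs_1,h_n^{1/2}\rs_2)=g(\bs_0+\V\rs_1+h_n^{1/2}\Ub\rs_2)^l f_\X(\bs_0+\V\rs_1+h_n^{1/2}\Ub\rs_2)$, this produces exactly the claimed integrand, and splitting the double integral into the iterated integral over $\rs_2$ then $\rs_1$ requires only Fubini's theorem. The absolute integrability needed for Fubini follows from (E.2) and (K.1): on the compact set $\text{supp}(f_\X)$ the continuous functions $g$ and $f_\X$ are bounded, $K$ is bounded by $M_1$, and $\int_{\real^{p-q}}K(\|\rs_2\|^2)\,d\rs_2<\infty$, so the integrand is dominated by an integrable product.

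The computation is essentially routine; the only points demanding care are the bookkeeping of the integration domain under the change of variables (tracking how $\text{supp}(f_\X)$ transforms, which I would handle by letting $f_\X$ vanish off its support and integrating over all of $\real^q\times\real^{p-q}$) and the justification that interchanging the order of integration is legitimate. I expect this domain/Fubini bookkeeping to be the main, though modest, obstacle, since the geometric substitution and the cancellation of the bandwidth factor are immediate once one recognizes $d(\V,\bs_0)=\|\rs_2\|^2$.
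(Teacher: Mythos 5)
Your proposal is correct and follows essentially the same route as the paper's proof: reduce to a single expectation by the i.i.d.\ assumption, write it as an integral against $f_\X$, apply the orthogonal decomposition \eqref{ortho_decomp} so that $d(\V,\bs_0)=\|\rs_2\|^2$, and rescale $\rs_2$ by $h_n^{1/2}$ to cancel the bandwidth factor. The only difference is that you make the Fubini/domain bookkeeping explicit, which the paper leaves implicit.
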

where $\tilde{g}(\rs_1,\rs_2) = g(\bs_0 + \V\rs_1 +\Ub\rs_2)^lf_\X(\bs_0 + \V\rs_1 + \Ub\rs_2)$, $\xn= \bs_0 + \V \rs_1 + \Ub \rs_2$  in \eqref{ortho_decomp}.
\begin{proof}[Proof of Lemma~\ref{aux_lemma2}]
By \eqref{ortho_decomp}, $\|\Pbf_{\Ub} (\xn-\bs_0)\|^2 = \|\Ub \rs_2\|^2 = \|\rs_2\|^2$. Further 
\begin{gather*} 
\E\left(Z_n(\V,\bs_0)\right) = \frac{1}{h_n^{(p-q)/2}} \int_{\text{supp}(f_\X)} g( \xn)^l K(\|\Pbf_\Ub (\xn-\bs_0)/ h^{1/2}_n \|^2) f_\X(\xn)d\xn \\ 
= \frac{1}{h_n^{(p-q)/2}} \int_{\text{supp}(f_\X)\cap\real^{p-q}}\int_{\text{supp}(f_\X)\cap\real^q} g(\bs_0 + \V \rs_1 +  \Ub \rs_2)^lK(\|\rs_2/ h^{1/2}_n\|^2) \times \\ f_\X(\bs_0 + \V \rs_1 + \Ub \rs_2)d\rs_1 d\rs_2\\ \notag
=  \int_{\text{supp}(f_\X)\cap\real^{p-q}}K(\|\rs_2\|^2)\int_{\text{supp}(f_\X)\cap\real^q} g(\bs_0 + \V \rs_1 + h_n^{1/2} \Ub \rs_2)^l 
f_\X(\bs_0 + \V \rs_1 + h_n^{1/2}\Ub\rs_2)d\rs_1 d\rs_2
\end{gather*}
where the substitution $\tilde{\rs}_2 = \rs_2/h_n^{1/2}$, $d\rs_2 = h_n^{(p-q)/2} d\tilde{\rs}_2$ was used to obtain the last equality.
\end{proof}

\begin{lemma}\label{aux_lemma3}
Assume (E.1), (E.2), (E.3), (E.4), (H.1) and (K.1) hold. Then, there exists a constant $C > 0$, such that  
\begin{equation*}
   \var\left(n h_n^{(p-q)/2} t^{(l)}_n(\V,\bs_0,f)\right) \leq n h_n^{(p-q)/2} C
\end{equation*}
\end{lemma}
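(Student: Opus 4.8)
The plan is to exploit the i.i.d.\ structure to reduce the variance of the sum to $n$ copies of a single-summand variance, bound that variance by a raw second moment, and then control the second moment with Lemma~\ref{aux_lemma2}. First I would note that since $(\tilde{Y}_i,\X_i)_{i=1,\ldots,n}$ are i.i.d.\ and $n h_n^{(p-q)/2}t^{(l)}_n(\V,\bs_0,f)=\sum_{i=1}^n K(d_i(\V,\bs_0)/h_n)\tilde{Y}_i^l$, the variance of the sum equals $n\,\var(\tilde{Y}^l K(d(\V,\bs_0)/h_n))$, which I bound from above by $n\,\E(\tilde{Y}^{2l} K^2(d(\V,\bs_0)/h_n))$.

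The second step is to strip off the regression part of $\tilde{Y}$. Using (E.2) together with the continuity of $g$ from (E.4), $g(\widetilde{\B}^T\cdot)$ attains a finite maximum on the compact set $\text{supp}(f_\X)$, so $|\tilde{Y}|\le \text{const}+|\tilde{\eps}|$ pointwise, and the binomial expansion gives $|\tilde{Y}|^{2l}\le \sum_{u=0}^{2l}\binom{2l}{u}\text{const}^u|\tilde{\eps}|^{2l-u}$. Substituting and conditioning on $\X$ via the tower property reduces the task to bounding $\E(g_{2l-u}(\X)\,K^2(d(\V,\bs_0)/h_n))$, where $g_{2l-u}(\xn)=\E(|\tilde{\eps}|^{2l-u}\mid\X=\xn)$ are continuous for the relevant orders $2l-u\le 4$ by the hypotheses of Theorem~\ref{uniform_convergence_ecve}.

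The third and decisive step is to apply Lemma~\ref{aux_lemma2} with the kernel $K^2$ in place of $K$ and the function $g_{2l-u}$ taken with exponent $1$. I would first verify that $K^2$ still satisfies (K.1): it is non-increasing, continuous, bounded by $M_1^2$, and $\int_{\real^q}K^2(\|\rs\|^2)\,d\rs\le M_1\int_{\real^q}K(\|\rs\|^2)\,d\rs<\infty$. Lemma~\ref{aux_lemma2} then expresses $\E(g_{2l-u}(\X)K^2(d(\V,\bs_0)/h_n))$ as $h_n^{(p-q)/2}$ times a double integral of the continuous integrand $g_{2l-u}(\bs_0+\V\rs_1+h_n^{1/2}\Ub\rs_2)f_\X(\bs_0+\V\rs_1+h_n^{1/2}\Ub\rs_2)$ against $K^2(\|\rs_2\|^2)$ over $(\text{supp}(f_\X)\cap\real^q)\times(\text{supp}(f_\X)\cap\real^{p-q})$. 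Since this integrand is continuous and integrated over a compact set, the double integral is bounded by a finite constant, and collecting the finitely many constants from the binomial sum yields $\E(\tilde{Y}^{2l}K^2(d(\V,\bs_0)/h_n))\le h_n^{(p-q)/2}C$; multiplying by $n$ finishes the proof.

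I expect the main obstacle to be establishing that the bound is \emph{uniform}: one must argue that the parameter-dependent integral produced by Lemma~\ref{aux_lemma2} is dominated by a single finite constant simultaneously over all $\V\in\spc(p,q)$, all shift points $\bs_0\in\text{supp}(f_\X)$, and all small $h_n$. This follows because $g_{2l-u}\cdot f_\X$ is continuous and hence uniformly bounded on the compact support, while the finite integral of $K^2$ absorbs the $\rs_2$ direction; but it is worth making explicit that, for $h_n$ bounded by (H.1), the perturbation $h_n^{1/2}\Ub\rs_2$ keeps the argument inside a fixed compact neighbourhood, so that the supremum of the integrand is taken over a compact set independent of $\V$, $\bs_0$ and $h_n$.
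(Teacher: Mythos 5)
Your proposal is correct and follows essentially the same route as the paper's proof: the i.i.d.\ reduction to a single-summand second moment, the binomial bound $|\tilde{Y}|^{2l}\le\sum_u\binom{2l}{u}C^u|\tilde{\eps}|^{2l-u}$ via compactness of $\text{supp}(f_\X)$, the tower property, and an application of Lemma~\ref{aux_lemma2} with the kernel $K^2$ to extract the factor $h_n^{(p-q)/2}$. Your explicit verification that $K^2$ satisfies (K.1) and your remark on uniformity of the constant over $\V$, $\bs_0$ and $h_n$ are welcome refinements that the paper leaves implicit.
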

for $n > n^\star$ and $t^{(l)}_n(\V,\bs_0)$, $l = 0,1,2$, in \eqref{tn}.
\begin{proof}[Proof of Lemma~\ref{aux_lemma3}]
Since a continuous function attains a finite maximum over a compact set, $ \sup_{\xn \in \text{supp}(f_\X)}|g(\widetilde{\B}^T\xn)| < \infty.$ Therefore,
\begin{align*}
    |\tilde{Y}_i| \leq |g(\widetilde{\B}^T\X_i)| +|\tilde{\eps_i}| \leq \sup_{\xn \in \text{supp}(f_\X)}|g(\widetilde{\B}^T\xn)| +|\tilde{\eps_i}| = C +|\tilde{\eps_i}|
\end{align*}
and $| \tilde{Y}_i|^{2l} \leq  \sum_{u=0}^{2l} \binom{2l}{u} C^u |\tilde{\eps_i}|^{2l - u}$. 
Since $(\tilde{Y}_i,\X_i)$ are i.i.d.,
\begin{gather}
    \var\left(n h_n^{(p-q)/2} t^{(l)}_n(\V,\bs_0,f)\right) =  n \var\left(\tilde{Y}^l K\left(\frac{d_i(\V,\bs_0)}{h_n}\right)\right) \leq
    n \E\left(\tilde{Y}^{2l} K^2\left(\frac{d_i(\V,\bs_0)}{h_n}\right)\right) \notag \\
    = n \E\left(|\tilde{Y}|^{2l} K^2\left(\frac{d_i(\V,\bs_0)}{h_n}\right)\right) \leq n \sum_{u=0}^{2l} \binom{2l}{u} C^u \E\left( |\tilde{\eps_i}|^{2l - u} K^2\left(\frac{d_i(\V,\bs_0)}{h_n}\right)\right) \notag \\
    = n \sum_{u=0}^{2l} \binom{2l}{u} C^u \E\left( \E(|\tilde{\eps_i}|^{2l - u}\mid \X_i) K^2\left(\frac{d_i(\V,\bs_0)}{h_n}\right)\right) \label{bound_tnf_var}
\end{gather}
for $l = 0,1,2$.
Let $\E(|\tilde{\eps_i}|^{2l - u} \mid \X_i) = g_{2l-u}(\X_i)$ for a continuous (by assumption) function $g_{2l-u}(\cdot)$ with finite moments for $l=0,1,2$ by the compactness of $\text{supp}(f_\X)$. Using Lemma~\ref{aux_lemma2} with \[Z_n(\V,\bs_0) = \frac{1}{nh_n^{(p-q)/2}} \sum_i g_{2l-u}(\X_i) K^2\left(d_i(\V,\bs_0)/h_n\right),\] where $K^2(\cdot)$ fulfills (K.1), we calculate 
\begin{gather}
    \E\left( \E(|\tilde{\eps_i}|^{2l - u}\mid \X_i) K^2\left(\frac{d_i(\V,\bs_0)}{h_n}\right)\right) = h_n^{(p-q)/2} \E(Z_n(\V,\bs_0)) \notag \\
    = h_n^{(p-q)/2} \int_{\text{supp}(f_\X)\cap\real^{p-q}}K^2(\|\rs_2\|^2)\times \notag \\ 
    \quad \int_{\text{supp}(f_\X)\cap\real^q} g_{2l-u}(\bs_0 + \V \rs_1 + h_n^{1/2} \Ub \rs_2)
f_\X(\bs_0 + \V \rs_1 + h_n^{1/2}\Ub\rs_2)d\rs_1 d\rs_2 \label{ugly_integral}\\
\leq h_n^{(p-q)/2} C \notag
\end{gather}
since all integrands in \eqref{ugly_integral} are continuous and over compact sets by (E.2) and the continuity of $g_{2l-u}(\cdot)$ and $K(\cdot)$, so that  the integral can be upper bounded by a finite constant $C$. Inserting $\eqref{ugly_integral}$ into \eqref{bound_tnf_var} yields
\begin{align}\label{upper_bound_var_tnf}
    \var\left(n h_n^{(p-q)/2} t^{(l)}_n(\V,\bs_0,f)\right) \leq n h_n^{(p-q)/2} \underbrace{\sum_{u=0}^{2l} \binom{2l}{u} C^u C}_{ = C} = n h_n^{(p-q)/2}C
\end{align}
\end{proof}

In Lemma \ref{d_inequality} we show that $d_i(\V,\bs_0)$ in \eqref{distance} is Lipschitz in its inputs under assumption (E.2). 

\begin{lemma}\label{d_inequality}
Under assumption (E.2) there exists a constant $0 < C_2 < \infty$ such that for all $\delta >0$ and  $\V, \V_j \in \spc(p,q)$ with $\|\Pbf_\V - \Pbf_{\V_j}\| < \delta$ and for all $\bs_0, \bs_j \in \text{supp}(f_\X) \subset \real^p$ with $\|\bs_0 - \bs_j\| < \delta$,
\begin{equation*}
    |d_i(\V,\bs_0) - d_i(\V_j,\bs_j)| \leq C_2 \delta
\end{equation*}
for $d_i(\V,\bs_0)$ given by \eqref{distance}
\end{lemma}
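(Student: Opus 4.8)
The plan is to exploit the quadratic-form representation of the distance. Since $\Q_\V=\I_p-\Pbf_\V$ is an orthogonal projection, hence symmetric and idempotent, we may write $d_i(\V,\bs_0)=\|\Q_\V(\X_i-\bs_0)\|^2=(\X_i-\bs_0)^T\Q_\V(\X_i-\bs_0)$. Before perturbing, I would record the two facts that make the bound uniform. By assumption (E.2) the support $\text{supp}(f_\X)$ is compact, so it has finite diameter $D:=\sup_{\x,\y\in\text{supp}(f_\X)}\|\x-\y\|<\infty$; and since the sample points $\X_i$ lie in $\text{supp}(f_\X)$ almost surely, while $\bs_0,\bs_j\in\text{supp}(f_\X)$ by hypothesis, we have $\|\X_i-\bs_0\|\le D$ and $\|\X_i-\bs_j\|\le D$. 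This is precisely what turns the local perturbation estimates into a single constant $C_2$.

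Next I would split the increment by the triangle inequality through the intermediate configuration $(\V_j,\bs_0)$,
\[
|d_i(\V,\bs_0)-d_i(\V_j,\bs_j)|\le |d_i(\V,\bs_0)-d_i(\V_j,\bs_0)|+|d_i(\V_j,\bs_0)-d_i(\V_j,\bs_j)|,
\]
so that the first term isolates the change in the projection (with the shift held fixed) and the second the change in the shift (with the projection held fixed). For the projection term I would use $\Q_\V-\Q_{\V_j}=-(\Pbf_\V-\Pbf_{\V_j})$ to write $d_i(\V,\bs_0)-d_i(\V_j,\bs_0)=-(\X_i-\bs_0)^T(\Pbf_\V-\Pbf_{\V_j})(\X_i-\bs_0)$ and then bound this quadratic form by $\|\Pbf_\V-\Pbf_{\V_j}\|\,\|\X_i-\bs_0\|^2\le\delta D^2$, invoking the hypothesis $\|\Pbf_\V-\Pbf_{\V_j}\|<\delta$ and the fact that the operator norm is dominated by the Frobenius norm $\|\cdot\|$.

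For the shift term I would apply the identity $\|a\|^2-\|b\|^2=\langle a-b,\,a+b\rangle$ with $a=\Q_{\V_j}(\X_i-\bs_0)$ and $b=\Q_{\V_j}(\X_i-\bs_j)$, so that $a-b=\Q_{\V_j}(\bs_j-\bs_0)$ and $a+b=\Q_{\V_j}(2\X_i-\bs_0-\bs_j)$. Since $\Q_{\V_j}$ is a projection its operator norm is at most one, so by Cauchy--Schwarz the term is bounded by $\|\bs_j-\bs_0\|\cdot(\|\X_i-\bs_0\|+\|\X_i-\bs_j\|)\le \delta\cdot 2D$. Adding the two pieces yields the claim with $C_2=D^2+2D$, which is finite by compactness and independent of $i$, $\V$, $\V_j$, $\bs_0$, $\bs_j$, and $\delta$.

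The argument is essentially routine, and the only point requiring genuine care is the norm bookkeeping: keeping the quadratic-form estimate in the operator norm while the hypothesis on $\V,\V_j$ is stated in the Frobenius norm $\|\cdot\|$, and checking that every vector $\X_i-\bs_0$ and $\X_i-\bs_j$ remains inside a set of finite diameter so that one constant $C_2$ serves uniformly. This last step is exactly where the compactness of $\text{supp}(f_\X)$ from (E.2) enters.
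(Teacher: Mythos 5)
Your proof is correct and follows essentially the same route as the paper's: both are elementary Lipschitz estimates combining the triangle inequality, Cauchy--Schwarz, the domination of the operator norm by the Frobenius norm, and the compactness of $\text{supp}(f_\X)$ from (E.2) to obtain a single uniform constant. The only difference is bookkeeping --- you perturb one argument at a time through the intermediate configuration $(\V_j,\bs_0)$ and work with the quadratic form in $\Q_\V$, whereas the paper splits $d_i$ into its squared-norm and inner-product terms --- which yields a comparable (in fact slightly cleaner) constant.
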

\begin{proof}[Proof of Lemma~\ref{d_inequality}]
\begin{gather}
|d_i(\V,\bs_0) - d_i(\V_j,\bs_j)| \leq \left|\|\X_i - \bs_0\|^2 - \|\X_i - \bs_j\|^2\right| + \notag\\
\left|\langle \X_i - \bs_0,\Pbf_{\V}(\X_i - \bs_0)\rangle - \langle \X_i - \bs_j,\Pbf_{\V_j}(\X_i - \bs_j)\rangle\right| = I_1 + I_2\label{di-dj2}
\end{gather}
where $\langle\cdot,\cdot \rangle$ is the scalar product in $\real^p$.  We bound  the first term on the right hand side of \eqref{di-dj2} as follows using $\|\X_i \| \leq \sup_{z \in \text{supp}(f_\X)} \|z \| = C_1 < \infty$ with probability 1 by (E.2).
\begin{align*}
I_1 &= \left|\|\X_i - \bs_0\|^2 - \|\X_i - \bs_j\|^2\right| \leq 
2\left|\langle \X_i,\bs_0 -\bs_j \rangle\right| + \left|\|\bs_0\|^2 - \|\bs_j\|^2\right| \\
& \leq 2\|\X_i\|\|\bs_0 -\bs_j\| + 2C_1\|\bs_0 - \bs_j\| 
\leq 2C_1 \delta + 2C_1\delta = 4 C_1\delta
\end{align*}
by Cauchy-Schwartz and the reverse triangular inequality for which  $\left|\|\bs_0\|^2 - \|\bs_j\|^2\right| = \left|\|\bs_0\| - \|\bs_j\|\right|(\|\bs_0\| + \|\bs_j\|) \leq \|\bs_0 - \bs_j\|2C_1$. 
The second term in \eqref{di-dj2} satisfies
\begin{gather*}
I_2 \leq \left|\langle \X_i,(\Pbf_{\V}-\Pbf_{\V_j})\X_i\rangle\right| + 2\left|\langle \X_i,\Pbf_{\V}\bs_0 -\Pbf_{\V_j}\bs_j\rangle\right| + \left|\langle \bs_0,\Pbf_{\V}\bs_0\rangle - \langle \bs_j,\Pbf_{\V_j}\bs_j\rangle\right| \\
\leq \|\X_i\|^2\|\Pbf_{\V} - \Pbf_{\V_j}\| + 2\|\X_i\| \left\|\Pbf_{\V}(\bs_0 - \bs_j) + (\Pbf_{\V}-\Pbf_{\V_j})\bs_j\right\| + \left|\langle \bs_0 - \bs_j,\Pbf_{\V} \bs_0 \rangle\right|+\\
\left|\langle \bs_j,\Pbf_{\V} \bs_0 - \Pbf_{\V_j}\bs_j \rangle\right| \leq C_1^2 \delta + 2C_1(\delta + C_1 \delta) + C_1\delta +C_1 (\delta + C_1 \delta) =4C_1\delta + 4C_1^2 \delta
\end{gather*}
Collecting all constants into $C_2$ (i.e. $C_2 = 8C_1 + 4C_1^2$) yields the result.
\end{proof}
To show Theorems~\ref{uniform_convergence_ecve} and \ref{thm_variance}, we use the \textbf{Bernstein inequality} \cite{Bernstein}. 
Let $\{Z_i, i=1,2,\ldots \}$, be an independent sequence of bounded random variables with $|Z_i| \leq b$. Let $S_n = \sum_{i=1}^n Z_i$, $E_n = \E(S_n)$ and $V_n = \var(S_n)$. Then,
\begin{equation}\label{Bernstein}
P(|S_n - E_n|>t) < 2 \exp{\left(-\frac{t^2/2}{V_n + b t/3} \right)}
\end{equation}
Assumption (K.2) yields
\begin{equation} \label{kernel}
    |K(u) - K(u')| \leq K^*(u') \delta
\end{equation}
for all $u, u'$ with $|u-u'| < \delta \leq L_2$ and $K^*(\cdot)$ is a bounded and integrable kernel function [see \cite{Hansen2008}]. Specifically, if condition (1) of (K.2) holds, then $K^*(u) = L_1 1_{\{|u| \leq 2L_2\}}$. If condition (2) holds, then  $K^*(u) = L_1 1_{\{|u| \leq 2L_2\}} + 1_{\{|u| > 2L_2\}}|u-L_2|^{-\nu}$.

Let $A = \spc(p,q) \times \text{supp}(f_\X)$. In Lemma~\ref{thm_variance} and \ref{thm_bias} we show that \eqref{tn} converges uniformly in probability to \eqref{tl} by showing that the variance and bias terms vanish uniformly in probability, respectively.

\begin{lemma}\label{thm_variance}
Under the assumptions of Theorem~\ref{uniform_convergence_ecve},  
\begin{equation}
    \sup_{\V \times \bs_0 \in A} \left|t^{(l)}_n(\V,\bs_0) - \E\left(t_n^{(l)}(\V,\bs_0)\right)\right| = O_{P}(a_n), \; l=0,1,2
\end{equation}
\end{lemma}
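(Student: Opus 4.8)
The plan is to establish this uniform variance bound by a discretization (chaining-type) argument: cover the compact index set $A = \spc(p,q) \times \text{supp}(f_\X)$ by a fine grid, bound the maximal fluctuation of $t^{(l)}_n(\V,\bs_0) - \E(t^{(l)}_n(\V,\bs_0))$ over the grid points by the Bernstein inequality \eqref{Bernstein}, and control the oscillation between grid points by equicontinuity. Because (E.4) only guarantees $\E(|\tilde{Y}|^8) < \infty$ rather than boundedness of $\tilde{Y}_i$, a preliminary truncation is needed before Bernstein, whose hypotheses require bounded summands.

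First I would truncate at a level $\tau_n \to \infty$, splitting $\tilde{Y}_i^l = \tilde{Y}_i^l 1_{\{|\tilde{Y}_i| \le \tau_n\}} + \tilde{Y}_i^l 1_{\{|\tilde{Y}_i| > \tau_n\}}$ and correspondingly $t^{(l)}_n = t^{(l),\le}_n + t^{(l),>}_n$. The tail piece is handled deterministically: using $|K| \le M_1$ from (K.1) and $|\tilde{Y}|^l 1_{\{|\tilde{Y}|>\tau_n\}} \le \tau_n^{l-8}|\tilde{Y}|^8$, both $t^{(l),>}_n$ and its expectation are bounded uniformly in $(\V,\bs_0)$ by a multiple of $\tau_n^{l-8} h_n^{-(p-q)/2}$. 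The assumption $a_n/h_n^{(p-q)/2} = O(1)$ is exactly what leaves room to pick $\tau_n$ growing polynomially so that this tail is $o(a_n)$; it then suffices to prove the rate for the truncated process, whose summands are bounded by $b = M_1 \tau_n^l$.

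Next I would discretize and apply Bernstein. As $A$ is compact of fixed dimension $D = \dim(\spc(p,q)) + p$, it admits a $\delta_n$-net of cardinality $N_n = O(\delta_n^{-D})$. Within each cell, Lemma~\ref{d_inequality} gives $|d_i(\V,\bs_0) - d_i(\V_j,\bs_j)| \le C_2 \delta_n$, and the kernel regularity \eqref{kernel} turns this into $|K(d_i(\V,\bs_0)/h_n) - K(d_i(\V_j,\bs_j)/h_n)| \le K^*(d_i(\V_j,\bs_j)/h_n)\, C_2 \delta_n / h_n$; multiplying by $\tau_n^l$ and averaging, the oscillation of the truncated process over a cell is of order $\tau_n^l \delta_n/h_n$ (the averaged $K^*$ factor having expectation $O(1)$ by a computation as in Lemma~\ref{aux_lemma2}, since $K^*$ inherits (K.1) from (K.2)). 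Choosing $\delta_n$ polynomially small makes this oscillation $o(a_n)$ while keeping $N_n$ polynomial. At each grid point I would apply \eqref{Bernstein} with variance bound $V_n \le C\, n h_n^{(p-q)/2}$ from Lemma~\ref{aux_lemma3} and bound $b = M_1\tau_n^l$, at threshold $t = M a_n n h_n^{(p-q)/2} = M\sqrt{\log n \cdot n h_n^{(p-q)/2}}$; the quadratic term then dominates and the exponent becomes of order $-cM^2\log n$, so the per-point probability is $O(n^{-cM^2})$. A union bound over the $N_n = O(\delta_n^{-D})$ points, together with the oscillation and truncation estimates, shows that $P(\sup_A |t^{(l)}_n - \E t^{(l)}_n| > M' a_n) \to 0$ for $M'$ large, which is the asserted $O_P(a_n)$.

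The main obstacle is the joint calibration of the three scales $\tau_n$, $\delta_n$, and $t$ under exactly the stated hypotheses. The truncation level must be large enough for the eighth-moment tail to be negligible, $\tau_n \gg (a_n h_n^{(p-q)/2})^{-1/6}$, yet small enough to keep Bernstein in its quadratic (sub-Gaussian) regime, $\tau_n^l = O(a_n^{-1})$; reconciling these for $l=2$ requires $a_n^2 = o(h_n^{(p-q)/2})$, which is precisely what $a_n/h_n^{(p-q)/2} = O(1)$ delivers since $h_n \to 0$. Simultaneously, $\delta_n$ must be fine enough to defeat the $1/h_n$ amplification of the kernel-Lipschitz oscillation, yet coarse enough that the polynomial covering number $N_n$ is overwhelmed by the super-polynomial Bernstein decay. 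Exhibiting one admissible choice of $(\tau_n,\delta_n,t)$ consistent with all these constraints is the crux of the proof.
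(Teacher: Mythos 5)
Your proposal follows essentially the same route as the paper's proof of Lemma~\ref{thm_variance}: (i) truncation of $\tilde{Y}_i^l$ to make Bernstein applicable, with the tail controlled via the eighth moment from (E.4) and the condition $a_n/h_n^{(p-q)/2}=O(1)$; (ii) a polynomial cover of $A$ with in-cell oscillation controlled through Lemma~\ref{d_inequality} and the kernel bound \eqref{kernel}; (iii) Bernstein's inequality at the grid points with the variance bound of Lemma~\ref{aux_lemma3}, followed by a union bound. The only differences are bookkeeping choices (the paper fixes $\tau_n=a_n^{-1}$ and $\delta_n=a_nh_n$ and controls the $K^*$-weighted remainder $r_n$ by a second Bernstein application, whereas you propose a finer net with a cruder deterministic oscillation bound), and your calibration constraints are consistent with the stated hypotheses.
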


\begin{proof}[Proof of Lemma~\ref{thm_variance}]
The proof 
proceeds in 3 steps: (i) truncation, (ii)  discretization by covering $A= \spc(p,q) \times \text{supp}(f_\X)$, and (iii) application of Bernstein's inequality \eqref{Bernstein}. If the function $f$ in \eqref{Ytilde_decomposition} is bounded, the truncation step and the assumption $a_n/h_n^{(p-q)/2} = O(1)$ are not needed.

(i) We let $\tau_n = a_n^{-1}$ and  truncate $\tilde{Y}_i^l$ by $\tau_n$ as follows. We let  \begin{align}\label{tl.trc}
    t^{(l)}_{n,\text{trc}}(\V,\bs_0) &= (1/nh_n^{(p-q)/2})\sum_i K(\|\Pbf_\Ub (\X_i-\bs_0) \|^2/ h_n)\tilde{Y}_i^l 1_{\{|\tilde{Y}_i|^l \leq \tau_n\}}
\end{align} be the truncated version of \eqref{tn} and $\tilde{R}^{(l)}_n = (1/nh_n^{(p-q)/2})\sum_i |\tilde{Y}_i|^l 1_{\{|\tilde{Y}_i|^l > \tau_n\}}$ be the remainder of \eqref{tn}. Therefore $R^{(l)}_n(\V,\bs_0) = t^{(l)}_n(\V,\bs_0) - t^{(l)}_{n,\text{trc}}(\V,\bs_0) \leq M_1 \tilde{R}^{(l)}_n $ due to (K.1)
and
\begin{align}
\sup_{\V \times \bs_0 \in A} \left|t^{(l)}_n(\V,\bs_0) - \E\left(t_n^{(l)}(\V,\bs_0)\right)\right| &\leq M_1(\tilde{R}^{(l)}_n +\E\tilde{R}^{(l)}_n) \notag \\
& \qquad + \sup_{\V \times \bs_0 \in A}\left|t^{(l)}_{n,\text{trc}}(\V,\bs_0) - \E \left( t^{(l)}_{n,\text{trc}}(\V,\bs_0)\right)\right|\label{truncation}
\end{align}
By Cauchy-Schwartz and the Markov inequality, $\Pb(|Z| > t) = \Pb(Z^4 > t^4) \leq \E(Z^4)/t^4$, 
we obtain 
\begin{align}
  \E\tilde{R}^{(l)}_n &= \frac{1}{h_n^{(p-q)/2}}  \E \left(|\tilde{Y}_i|^{l} 1_{\{|\tilde{Y}_i|^l > \tau_n\}}\right) \leq 
  \frac{1}{h_n^{(p-q)/2}}\sqrt{\E(|\tilde{Y}_i|^{2l})} \sqrt{\Pb(|\tilde{Y}_i|^l > \tau_n)} \notag \\
  &\leq \frac{1}{h_n^{(p-q)/2}} \sqrt{\E(|\tilde{Y}_i|^{2l})} \left(\frac{\E(|\tilde{Y}_i|^{4l})}{a_n^{-4}}\right)^{1/2} 
  = o(a_n), \label{Rtilde}
\end{align}
where the last equality uses the assumption $a_n/h_n^{(p-q)/2} = O(1)$ and the expectations are finite due to (E.4) for $l=0,1,2$. No truncation is needed for $l=0$ or if $\tilde{Y}_i = f(Y_i) \leq \sup_{f \in \Fa} |f(Y_i)| < C < \infty$.

Therefore, the first two terms of the right hand side of \eqref{truncation} converge to 0 with rate $a_n$ by \eqref{Rtilde} and Markov's inequality. From this point on, $\tilde{Y}_i$ will denote the truncated version $\tilde{Y}_i 1_{\{|\tilde{Y}_i| \leq \tau_n\}}$ and we do not distinguish the truncated from the untruncated $t_n(\V,\bs_0)$ since this truncation results in an error of magnitude $a_n$. 

(ii) For the discretization step we cover the compact set $A = \spc(p,q) \times \text{supp}(f_\X)$ by finitely many balls, which is possible by (E.2) 
 and the compactness of $\spc(p,q)$. 
Let $\delta_n = a_n h_n$ and $A_j = \{\V: \|\Pbf_\V - \Pbf_{\V_j}\| \leq \delta_n\} \times \{\bs :\|\bs - \bs_j\| \leq \delta_n\}$ be a cover of $A$ with ball centers $\V_j \times \bs_j$. Then, $A \subset \bigcup_{j=1}^{N} A_j$ and the number of balls can be bounded by $N \leq C \, \delta_n^{-d}\delta_n^{-p}$ for some constant $C \in (0, \infty)$, where $d = \text{dim}(\spc(p,q)) = pq - q(q+1)/2$. 
Let $\V \times \bs_0 \in A_j$.  Then by Lemma~\ref{d_inequality} there exists  $0 < C_2 < \infty$, such that
\begin{align}\label{inequality1}
|d_i(\V,\bs_0) - d_i(\V_j,\bs_j)| \leq C_2 \delta_n
\end{align}
for $d_i$ in \eqref{distance}. Under (K.2), which implies \eqref{kernel},  inequality~\eqref{inequality1} yields
\begin{equation}\label{Ki-Kj}
\left|K\left(\frac{d_i(\V,\bs_0)}{h_n}\right) - K\left(\frac{d_i(\V_j,\bs_j)}{h_n}\right)\right| \leq K^*\left(\frac{d_i(\V_j,\bs_j)}{h_n}\right) C_2 a_n 
\end{equation}
for $\V \times \bs_0 \in A_j$ and $K^*(\cdot)$ an integrable and bounded function. 

Define $r^{(l)}_n(\V_j,\bs_j) = (1/nh_n^{(p-q)/2}) \sum_{i=1}^n K^*(d_i(\V_j,\bs_j)/h_n)|\tilde{Y}_i|^l$. For notational convenience we next drop the dependence on $l$ and $j$ and observe that \eqref{Ki-Kj} yields 
\begin{equation}\label{t_diff}
|t^{(l)}_n(\V,\bs_0) - t^{(l)}_n(\V_j,\bs_j)| \leq C_2 a_n r^{(l)}_n(\V_j,\bs_j)
\end{equation}
Since $K^*$ fulfills (K.1) except for continuity, an analogous argument as in the proof of Lemma~\ref{aux_lemma2} yields that $\E\left(r^{(l)}_n(\V_j,\bs_j)\right) < \infty$. By subtracting and adding $t^{(l)}_n(\V_j,\bs_j)$, $\E(t^{(l)}_n(\V_j,\bs_j))$, the triangular inequality, \eqref{t_diff} and integrability of $r_n^l$, we obtain 
\begin{gather}
\left|t^{(l)}_n(\V,\bs_0) - \E\left(t_n^{(l)}(\V,\bs_0)\right)\right| \leq 
\left|t^{(l)}_n(\V,\bs_0) - t^{(l)}_n(\V_j,\bs_j)\right| +  \left|\E\left(t^{(l)}_n(\V_j,\bs_j) - t_n^{(l)}(\V,\bs_0)\right)\right| \notag\\+ \left|t^{(l)}_n(\V_j,\bs_j) - \E\left(t_n^{(l)}(\V_j,\bs_j)\right)\right|  
\leq C_2 a_n \left(|r_n| +   |\E\left(r_n\right)| \right) + \left|t^{(l)}_n(\V_j,\bs_j) - \E\left(t_n^{(l)}(\V_j,\bs_j)\right)\right|  \notag\\
\leq C_2 a_n(|r_n -\E(r_n)| + 2|\E(r_n)|) + \left|t^{(l)}_n(\V_j,\bs_j) - \E\left(t_n^{(l)}(\V_j,\bs_j)\right)\right| \notag\\ 
\leq 2C_3a_n + |r_n -\E(r_n)| + \left|t^{(l)}_n(\V_j,\bs_j) - \E\left(t_n^{(l)}(\V_j,\bs_j)\right)\right| \label{inequality2}
\end{gather}
for any constant  $C_3 > C_2 \E(r^{(l)}_n(\V_j,\bs_j))$ and  $n$ such that $C_2 a_n \leq 1$, since $a_n^2 = o(1)$, which in turn yields that there exists $0 < C_3 < \infty$ such that \eqref{inequality2} holds. 

Since $\sup_{x \in A} f(x) = \max_{1\leq j\leq N}\sup_{x \in A_j}f(x) \leq \sum_{j=1}^N \sup_{x \in A_j}f(x)$ for any cover of $A$ and continuous function $f$, 
\begin{gather}
\Pb(\sup_{\V \times \bs_0 \in A} |t^{(l)}_n(\V,\bs_0) - \E\left(t_n^{(l)}(\V,\bs_0)\right)| > 3C_3a_n) \notag\\ 
\leq \sum_{j=1}^N \Pb(\sup_{\V \times \bs_0 \in A_j} |t^{(l)}_n(\V,\bs_0) - \E\left(t_n^{(l)}(\V,\bs_0)\right)| > 3C_3a_n) \notag \\ \leq N \max_{1 \leq j \leq N} \Pb(\sup_{\V \times \bs_0 \in A_j} |t^{(l)}_n(\V,\bs_0) - \E\left(t_n^{(l)}(\V,\bs_0)\right)| > 3C_3a_n) \label{Prob1} \\ 
\leq N \left(\max_{1 \leq j \leq N}\Pb(|t^{(l)}_n(\V_j,\bs_j) - \E\left(t_n^{(l)}(\V_j,\bs_j)\right)| > C_3 a_n) +  \max_{1 \leq j \leq N} \Pb(|r_n -\E(r_n)| > C_3a_n)\right)  \leq \notag\\ 
C\, \delta^{-(d+p)} \left( \max_{1 \leq j \leq N}\Pb(|t^{(l)}_n(\V_j,\bs_j) - \E\left(t_n^{(l)}(\V_j,\bs_j)\right)| > C_3 a_n) + \max_{1 \leq j \leq N} \Pb(|r_n -\E(r_n)| > C_3a_n)\right) \notag
\end{gather}
by the subadditivity of  probability for the first inequality and \eqref{inequality2} for the third inequality above, where the last inequality is due to $N \leq C\, \delta_n^{-d}\delta_n^{-p}$ for a cover of $A$.

Finally, we bound the first and second term in the last line of \eqref{Prob1} by the Bernstein inequality~\eqref{Bernstein}. For the first term in the last line of \eqref{Prob1}, let $Z_i = Y^l_i K(d_i(\V_j,\bs_j)/h_n)$ and $S_n = \sum_i Z_i = nh_n^{(p-q)/2} t^{(l)}_n(\V_j,\bs_j)$. Then, $Z_i$ are independent with $|Z_i| \leq b =  M_1\tau_n =M_1/a_n$ by (K.1) and the truncation step (i). For $V_n = \var(S_n)$, Lemma~\ref{aux_lemma3} yields $nh_n^{(p-q)/2}C  \geq V_n$ with $C>0$
, and set $t = C_3a_n n h_n^{(p-q)/2}$. The  Bernstein inequality~\eqref{Bernstein} yields
\begin{gather*} \label{first_term}
\Pb\left(\left|t^{(l)}_n(\V_j,\bs_j) - \E\left(t_n^{(l)}(\V_j,\bs_j)\right)\right| > C_3 a_n\right)  < 
2 \exp{\left(\frac{-t^2/2}{V_n + b t/3}\right)} \leq \\
2 \exp{\left(-\frac{(1/2)C_3^2a^2_n n^2 h_n^{(p-q)}}{nh_n^{(p-q)/2}C  + (1/3) M_1\tau_n C_3 a_n n h_n^{(p-q)/2})} \right)} \leq 
 2 \exp{\left(-\frac{(1/2)C_3\log(n)}{C/C_3 + M_1/3 } \right)} = 
2 n^{-\gamma(C_3)}
\end{gather*}
where  $a_n^2 = \log(n)/(n h_n^{(p-q)/2})$ and  $\gamma(C_3) = C_3\left(2(C/C_3 + M_1 /3)\right)^{-1} $ that is  an increasing function that can be made arbitrarily large by increasing  $C_3$.

For the second term in the last line of \eqref{Prob1}, set $Z_i = Y^l_i K^*(d_i(\V_j,\bs_j)/h_n)$ 
in~\eqref{Bernstein} and proceed similarly to obtain
\begin{gather*} \label{second_term}
\Pb\left(\left|r^{(l)}_n(\V_j,\bs_j) - \E\left(r_n^{(l)}(\V_j,\bs_j)\right)\right| > C_3 a_n\right) < 2 n^{- \frac{(1/2)C_3}{C/C_3 + (1/3) M_2}} = 2 n^{-\gamma(C_3)}
\end{gather*}
By (H.1), $h_n^{(p-q)/4} \leq 1$ for $n$ large and (H.2) implies $1/(nh_n^{(p-q)/2}) \leq 1$ for $n$ large, therefore $h_n^{-1} \leq n^{2/(p-q)} \leq n^2$ since $p-q \geq 1$. Then, $\delta_n^{-1} = (a_n h_n)^{-1} \leq n^{1/2}h_n^{-1} h_n^{(p-q)/4} \leq n^{5/2}$.  
Therefore, \eqref{Prob1} is smaller than $4\,C\, \delta_n^{-(d+p)}n^{-\gamma(C_3)} \leq 4C n^{5(d+p)/2 - \gamma(C_3)}$. For $C_3$ large enough, we have $5(d+p)/2 - \gamma(C_3) < 0$ and $n^{5(d+p)/2 - \gamma(C_3)} \to 0$.  
This completes the proof. 
\end{proof}

If we assume $|\tilde{Y}_i| < M_2 < \infty$ almost surely, the requirement $a_n/h_n^{(p-q)/2} = O(1)$ for the bandwidth can be dropped and the truncation step of the proof of Lemma~\ref{thm_variance} is no longer necessary.

\begin{lemma}\label{thm_bias}
Under 
(E.1), (E.2), (E.3), (E.4), (H.1), (K.1), and 
$\int_{\real^{p-q}}K(\|\rs_2\|^2)d\rs_2 = 1$,
\begin{equation}
    \sup_{\V \times\bs_0 \in A} \left| t^{(l)}(\V,\bs_0)+1_{\{l=2\}}\tilde{h}(\V,\bs_0) 
    - \E\left(t_n^{(l)}(\V,\bs_0)\right)\right| =O(h_n), \quad l=0,1,2
\end{equation}
where $t^{(l)}(\V,\bs_0)$ and $\tilde{h}(\V,\bs_0)$ are defined in Theorem~\ref{CVE_targets_meansubspace_thm}.
\end{lemma}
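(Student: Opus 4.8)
The plan is to evaluate $\E(t_n^{(l)}(\V,\bs_0))$ in closed form with Lemma~\ref{aux_lemma2} and then isolate its leading term via a second-order Taylor expansion in the bandwidth. First I would insert the decomposition $\tilde Y_i = g(\widetilde\B^T\X_i)+\tilde\eps_i$ from \eqref{Ytilde_decomposition} into $t_n^{(l)}$ of \eqref{tn} and expand $\tilde Y_i^l$ for $l=0,1,2$. For $l=2$ the summand is $\bigl(g(\widetilde\B^T\X_i)^2+2g(\widetilde\B^T\X_i)\tilde\eps_i+\tilde\eps_i^2\bigr)K(d_i/h_n)$, and analogously (with fewer terms) for $l=0,1$. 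Conditioning on $\X_i$ and using $\E(\tilde\eps_i\mid\X_i)=0$ from Theorem~\ref{Y_decomposition_thm}, every summand linear in $\tilde\eps_i$ is annihilated, because $d_i(\V,\bs_0)$ and $g(\widetilde\B^T\X_i)$ are $\sigma(\X_i)$-measurable; the quadratic summand contributes $\E\bigl(h(\X_i)K(d_i/h_n)\bigr)$ with $h(\xn)=\E(\tilde\eps^2\mid\X=\xn)$. Hence $\E(t_n^{(l)})$ reduces to expectations of exactly the type handled by Lemma~\ref{aux_lemma2}, with effective integrand $\phi_l := \bigl(g(\widetilde\B^T\cdot)^l+1_{\{l=2\}}h(\cdot)\bigr)f_\X(\cdot)$.

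Applying Lemma~\ref{aux_lemma2} (and summing the two pieces by linearity when $l=2$) turns $\E(t_n^{(l)}(\V,\bs_0))$ into $\int_{\real^{p-q}}K(\|\rs_2\|^2)\,G_l(h_n^{1/2}\rs_2)\,d\rs_2$, where $G_l(\rs_2)=\int_{\real^q}\phi_l(\bs_0+\V\rs_1+\Ub\rs_2)\,d\rs_1$; note $G_l$ has compact support in $\rs_2$, so the $\rs_2$-integration may be taken over all of $\real^{p-q}$. I would then Taylor expand $G_l$ about $\rs_2=\0$. By (E.2) and (E.4), $f_\X$, $g$ and $h$ are twice continuously differentiable, so differentiating under the integral sign shows $G_l\in C^2$ with second derivatives bounded uniformly over $(\V,\bs_0)\in A$ (the $\rs_1$-domain being effectively the compact support of $f_\X$). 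Writing $G_l(h_n^{1/2}\rs_2)=G_l(\0)+h_n^{1/2}\rs_2^\top\nabla G_l(\0)+\tfrac12 h_n\,\rs_2^\top\nabla^2G_l(\xi_n)\rs_2$ and integrating against $K(\|\rs_2\|^2)$, three things happen: the zeroth-order term gives $G_l(\0)\int_{\real^{p-q}}K(\|\rs_2\|^2)\,d\rs_2=G_l(\0)$ by the normalization $\int_{\real^{p-q}}K(\|\rs_2\|^2)d\rs_2=1$, and $G_l(\0)=t^{(l)}(\V,\bs_0)+1_{\{l=2\}}\tilde h(\V,\bs_0)$ upon identifying the defining integrals \eqref{tl} and \eqref{tilde_eps_var}; the first-order term vanishes because $K(\|\rs_2\|^2)$ is an even function of $\rs_2$, so $\int_{\real^{p-q}}K(\|\rs_2\|^2)\rs_2\,d\rs_2=\0$; and the second-order term is bounded in absolute value by $\tfrac12 h_n\,\sup_A\|\nabla^2G_l\|\int_{\real^{p-q}}K(\|\rs_2\|^2)\|\rs_2\|^2\,d\rs_2=O(h_n)$, the kernel's second moment being finite for the Gaussian kernel used throughout. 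Taking the supremum over $(\V,\bs_0)\in A$ and repeating for $l=0,1,2$ yields the claim.

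The step I expect to be the main obstacle is securing the rate $O(h_n)$ \emph{uniformly} over $A$, in particular when $\bs_0$ approaches $\partial\,\text{supp}(f_\X)$: there a portion of the kernel mass spills outside the support, which in a naive analysis would degrade the bias to $O(h_n^{1/2})$. The device I would exploit is that (E.2) posits $f_\X$ to be twice continuously differentiable with compact support \emph{on all of} $\real^p$; since $f_\X\ge 0$ attains its global minimum value $0$ on $\partial\,\text{supp}(f_\X)$, both $f_\X$ and $\nabla f_\X$ vanish there, so $\phi_l$—and with it $G_l$—extends to a genuinely $C^2$ function on $\real^{p}$ without any boundary discontinuity. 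This legitimizes the global Taylor expansion and the extension of the $\rs_2$-integral to $\real^{p-q}$ uniformly in $\bs_0$, so the interior argument applies everywhere. The residual bookkeeping—justifying differentiation under the integral via dominating functions supplied by (E.2) and (E.4), and confirming the uniform bound on $\nabla^2 G_l$ by continuity on the compact set $A$—is routine and parallels the continuity-and-compactness reasoning already used in the proof of Theorem~\ref{CVE_targets_meansubspace_thm}.
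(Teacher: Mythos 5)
Your proposal follows essentially the same route as the paper's proof: decompose $\tilde Y_i^l$ via Theorem~\ref{Y_decomposition_thm} so that the cross terms vanish by $\E(\tilde\eps\mid\X)=0$ and the quadratic term contributes $h(\X_i)$, apply Lemma~\ref{aux_lemma2}, Taylor expand to second order in $h_n^{1/2}\rs_2$, kill the first-order term by evenness of $K(\|\cdot\|^2)$, and bound the Hessian remainder uniformly over the compact set $A$ using (E.2) and (E.4). The only differences are cosmetic (you expand the partially integrated function $G_l$ rather than the integrand $\tilde g$ itself) plus your explicit treatment of the boundary of $\mathrm{supp}(f_\X)$, which the paper leaves implicit but which your argument correctly resolves via the vanishing of $f_\X$ and $\nabla f_\X$ there.
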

\begin{proof}[Proof of Lemma~\ref{thm_bias}]
Let $\tilde{g}(\rs_1,\rs_2)= g(\widetilde{\B}^T\bs_0 + \widetilde{\B}^T\V\rs_1 +\widetilde{\B}^T\Ub\rs_2)^l f_\X(\bs_0 + \V\rs_1 + \Ub\rs_2)$, where $\rs_1,\rs_2$ satisfy the  orthogonal decomposition~\eqref{ortho_decomp}. 
\begin{align*} \E\left(t_n^{(0)}(\V,\bs_0)\right) &= \E\left(K(d_i(\V,\bs_0)/h_n)\right)/h_n^{(p-q)/2} \\ \E(t_n^{(1)}(\V,\bs_0)) &= \E\left(K(d_i(\V,\bs_0)/h_n)g(\widetilde{\B}^T\X_i)\right)/h_n^{(p-q)/2} \\
&\quad + \E\left(K(d_i(\V,\bs_0)/h_n)\underbrace{\E(\tilde{\eps}_i\mid\X)}_{=0}\right)/h_n^{(p-q)/2}\\
\E(t_n^{(2)}(\V,\bs_0)) &= \E\left(K(d_i(\V,\bs_0)/h_n)g(\widetilde{\B}^T\X_i)^2\right)/h_n^{(p-q)/2} \\
&\quad + 2\E\left(K(d_i(\V,\bs_0)/h_n)\underbrace{\E(\tilde{\eps}_i\mid\X)}_{=0}\right)/h_n^{(p-q)/2} \\
&\qquad+ \E\left(K(d_i(\V,\bs_0)/h_n)\underbrace{\E(\tilde{\eps}^2_i\mid\X)}_{= h(\X_i)}\right)/h_n^{(p-q)/2}
\end{align*}
Then
\begin{align}\label{bias1}
\E\left(t_n^{(l)}(\V,\bs_0)\right) = \int_{\real^{p-q}}K(\|\rs_2\|^2)\int_{\real^p} \tilde{g}(\rs_1,{h_n}^{1/2}\rs_2) d\rs_1 d\rs_2
\end{align}
holds by Lemma~\ref{aux_lemma2} for $l = 0,1$. For $l = 2$, $\tilde{Y}_i^2 = g_i^2 + 2g_i \epsilon_i + \epsilon_i^2$ with $g_i = g(\widetilde{\B}^T\X_i)$ and can be handled as in the case of $l = 0,1$. 
Plugging in \eqref{bias1} the second order Taylor expansion for some $\xi$ in the neighborhood of 0, $\tilde{g}(\rs_1,{h_n}^{1/2}\rs_2) = \tilde{g}(\rs_1,0) + {h_n}^{1/2} \nabla_{\rs_2}\tilde{g}(\rs_1,0)^T\rs_2 + h_n\rs_2^T \nabla^2_{\rs_2}\tilde{g}(\rs_1,\xi) \rs_2$
, yields
\begin{gather*}
\E\left(t_n^{(l)}(\V,\bs_0)\right) = \int_{\real^q}\tilde{g}(\rs_1,0) d\rs_1 + \sqrt{h_n} \left(\int_{\real^q}\nabla_{\rs_2}\tilde{g}(\rs_1,0)d\rs_1\right)^T\int_{\real^{p-q}}K(\|\rs_2\|^2)\rs_2 d\rs_2 + \\
 h_n \frac{1}{2}\int_{\real^{p-q}}K(\|\rs_2\|^2)\int_{\real^p}\rs_2^T \nabla^2_{\rs_2}\tilde{g}(\rs_1,\xi) \rs_2 d\rs_1d\rs_2 =
 t^{(l)}(\V,\bs_0) + h_n \frac{1}{2}R(\V,\bs_0)
\end{gather*}
since $\int_{\real^q}\tilde{g}(\rs_1,0) d\rs_1 = t^{(l)}(\V,\bs_0)$ and $\int_{\real^{p-q}}K(\|\rs_2\|^2)\rs_2 d\rs_2 = 0 \in \real^{p-q}$ due to $K(\|\cdot\|^2)$ being even. Let $R(\V,\bs_0) = \int_{\real^{p-q}}K(\|\rs_2\|^2)\int_{\real^p}\rs_2^T \nabla^2_{\rs_2}\tilde{g}(\rs_1,\xi) \rs_2 d\rs_1d\rs_2$. By (E.4) and (E.2), 
$|\rs_2^T \nabla^2_{\rs_2}\tilde{g}(\rs_1,\xi) \rs_2| \leq C \|\rs_2\|^2$ for $C = \sup_{\xn,\y} \| \nabla^2_{\rs_2}\tilde{g}(\xn,\y)\| < \infty$, since a continuous function over a compact set is bounded. Then, $R(\V,\bs_0) \leq C C_4 \int_{\real^{p-q}}K(\|\rs_2\|^2)\|\rs_2\|^2d\rs_2 < \infty$ for some $C_4 > 0$, since the integral over $\rs_1$ is over a compact set by (E.2). 
\end{proof}

Lemma~\ref{t_uniform} follows directly from Lemmas~\ref{thm_variance} and \ref{thm_bias} and the triangle inequality.

\begin{lemma}\label{t_uniform}
Suppose  
(E.1), (E.2), (E.3), (E.4), (K.1), (K.2), (H.1) hold. If $a_n^2 = \log(n)/nh_n^{(p-q)/2} = o(1)$, and $a_n/h_n^{(p-q)/2} = O(1)$, then for $l=0,1,2$
\begin{equation*}
\sup_{\V \times \bs_0 \in A} \left|t^{(l)}(\V,\bs_0)+1_{\{l=2\}}\tilde{h}(\V,\bs_0) - t_n^{(l)}(\V,\bs_0)\right| = O_P(a_n + h_n) 
\end{equation*}
\end{lemma}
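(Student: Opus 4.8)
The plan is to prove the statement by a single application of the triangle inequality that splits the total error into a purely deterministic bias part and a stochastic fluctuation part, each of which has already been controlled uniformly over $A = \spc(p,q) \times \text{supp}(f_\X)$ by the two preceding lemmas. Concretely, for each $l = 0,1,2$ I would insert the centering quantity $\E(t_n^{(l)}(\V,\bs_0))$ and write
\begin{align*}
t^{(l)}(\V,\bs_0) + 1_{\{l=2\}}\tilde{h}(\V,\bs_0) - t_n^{(l)}(\V,\bs_0)
&= \underbrace{\left[t^{(l)}(\V,\bs_0) + 1_{\{l=2\}}\tilde{h}(\V,\bs_0) - \E\left(t_n^{(l)}(\V,\bs_0)\right)\right]}_{\text{bias}} \\
&\quad + \underbrace{\left[\E\left(t_n^{(l)}(\V,\bs_0)\right) - t_n^{(l)}(\V,\bs_0)\right]}_{\text{fluctuation}}.
\end{align*}

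First I would bound the supremum of the absolute value of the left-hand side by the sum of the suprema of the two bracketed terms, using the elementary inequality $\sup_A |f+g| \le \sup_A |f| + \sup_A |g|$. The bias bracket is exactly the quantity controlled in Lemma~\ref{thm_bias}, so its supremum over $A$ is $O(h_n)$, a deterministic bound. The fluctuation bracket is, up to sign, the quantity controlled in Lemma~\ref{thm_variance}, whose supremum over $A$ is $O_P(a_n)$. Combining the two yields $\sup_{\V \times \bs_0 \in A}|\cdots| \le O(h_n) + O_P(a_n)$.

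To finish I would note that a deterministic $O(h_n)$ term is in particular $O_P(h_n)$, and that the sum of an $O_P(a_n)$ term and an $O_P(h_n)$ term is $O_P(a_n + h_n)$; this is the only point requiring a line of justification, and it is a standard property of stochastic order symbols (the tightness of each summand transfers to their sum). Hence the combined rate is $O_P(a_n + h_n)$, uniformly in $l = 0,1,2$, as claimed.

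There is essentially no obstacle here: the heavy probabilistic work (truncation, discretization, and the Bernstein inequality) was already absorbed into Lemma~\ref{thm_variance}, and the Taylor-expansion analysis of the kernel smoothing bias into Lemma~\ref{thm_bias}. The only thing to be careful about is bookkeeping — matching the centering term $\E(t_n^{(l)})$ and the indicator correction $1_{\{l=2\}}\tilde{h}$ so that the two lemmas align exactly term by term, and confirming that their shared hypotheses (in particular $a_n^2 = o(1)$ and $a_n/h_n^{(p-q)/2} = O(1)$, which Lemma~\ref{thm_variance} invokes) are precisely those assumed in the present statement.
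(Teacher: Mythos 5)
Your proposal is correct and matches the paper exactly: the paper states that Lemma~\ref{t_uniform} ``follows directly from Lemmas~\ref{thm_variance} and \ref{thm_bias} and the triangle inequality,'' which is precisely the bias--fluctuation decomposition you give. Your additional bookkeeping remarks (converting the deterministic $O(h_n)$ to $O_P(h_n)$ and summing stochastic orders) are sound and only make explicit what the paper leaves implicit.
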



\begin{thm}\label{thm_Ltilde_uniform}
Suppose 
(E.1), (E.2), (E.3), (E.4), (K.1), (K.2), (H.1) hold. Let $a_n^2 = \log(n)/nh_n^{(p-q)/2} = o(1)$,  $a_n/h_n^{(p-q)/2} = O(1)$, then
\begin{equation*}
\sup_{\V \times \bs_0 \in A}\left|\bar{y}_l(\V,\bs_0) - \mu_l(\V,\bs_0)-1_{\{l=2\}}\tilde{h}(\V,\bs_0)\right| = o_P(1)
, \quad l=0,1,2
\end{equation*}
and
\begin{equation}\label{Ltilde_uniform}
\sup_{\V \times \bs_0 \in A}\left|\tilde{L}_{n,\Fa}(\V,\bs_0) - \tilde{L}_\Fa(\V,\bs_0)\right| = o_P(1) 
\end{equation}
where $\bar{y}_l(\V,\bs_0)$, $\mu_l(\V,\bs_0)$, $\tilde{L}_{n,\Fa}(\V,\bs_0)$ and $\tilde{L}_\Fa(\V,\bs_0)$ are defined in \eqref{e_ybar}, \eqref{mu_l},  \eqref{e_Ltilde} and \eqref{e_LtildeVs0}, respectively. 
\end{thm}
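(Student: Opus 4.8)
The plan is to bootstrap everything from Lemma~\ref{t_uniform}, which already delivers uniform control of the numerators $t_n^{(l)}$, and then to push this through the ratios $\bar{y}_l = t_n^{(l)}/t_n^{(0)}$ and the quadratic $\tilde{L}_{n,\Fa} = \bar{y}_2 - \bar{y}_1^2$ by the standard algebra of uniform convergence. First I would record that Lemma~\ref{t_uniform} gives, for $l = 0,1,2$,
\[
\sup_{\V\times\bs_0\in A}\bigl|t_n^{(l)}(\V,\bs_0) - t^{(l)}(\V,\bs_0) - 1_{\{l=2\}}\tilde{h}(\V,\bs_0)\bigr| = O_P(a_n+h_n) = o_P(1),
\]
since $a_n \to 0$ (because $a_n^2 = o(1)$) and $h_n \to 0$ by (H.1).

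The main obstacle, and the step on which everything hinges, is to bound the denominator $t_n^{(0)}$ away from zero uniformly on $A$. As shown in the proof of Theorem~\ref{CVE_targets_meansubspace_thm}, $t^{(0)}(\V,\bs_0) > 0$ for interior shifting points, and by continuity of $t^{(0)}$ on the compact set $A = \spc(p,q)\times\text{supp}(f_\X)$ it attains a strictly positive minimum $c_0 = \inf_A t^{(0)} > 0$. Combining this with the $l=0$ case above, $\sup_A|t_n^{(0)} - t^{(0)}| = o_P(1)$, so that with probability tending to one we have $t_n^{(0)} \geq c_0/2$ uniformly on $A$; on this event every ratio below is well defined with denominator bounded below.

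Next I would dispatch the ratios. The case $l=0$ is trivial since $\bar{y}_0 = \mu_0 = 1$. For $l = 1,2$, writing the difference over a common denominator gives
\[
\bar{y}_l - \mu_l - 1_{\{l=2\}}\tfrac{\tilde{h}}{t^{(0)}}
= \frac{t_n^{(l)} - \bigl(t^{(l)} + 1_{\{l=2\}}\tilde{h}\bigr)}{t_n^{(0)}}
- \bigl(t^{(l)} + 1_{\{l=2\}}\tilde{h}\bigr)\,\frac{t_n^{(0)} - t^{(0)}}{t_n^{(0)}\,t^{(0)}} .
\]
On the high-probability event both numerators are $o_P(1)$ uniformly by the first display, the limits $t^{(l)}$ and $\tilde{h}$ are bounded on the compact $A$, and both denominators are bounded below by $c_0^2/2$; hence each term is $o_P(1)$ uniformly. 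This proves the first assertion, with the excess term for $l=2$ identified through \eqref{tilde_eps_var} as $\tilde{h}/t^{(0)} = \var(\tilde{\eps}\mid\X\in\bs_0+\spn\{\V\})$ (so the relevant normalized quantity is $\tilde{h}/t^{(0)}$, not $\tilde{h}$ alone).

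Finally, for \eqref{Ltilde_uniform} I would use \eqref{e_Ltilde}, \eqref{e_LtildeVs0}, and $\var(\tilde{\eps}\mid\cdot) = \tilde{h}/t^{(0)}$ to decompose
\[
\tilde{L}_{n,\Fa} - \tilde{L}_\Fa
= \Bigl(\bar{y}_2 - \mu_2 - \tfrac{\tilde{h}}{t^{(0)}}\Bigr) - (\bar{y}_1 - \mu_1)(\bar{y}_1 + \mu_1).
\]
The first bracket is $o_P(1)$ uniformly by the previous step. In the second term, $\bar{y}_1 - \mu_1 = o_P(1)$ uniformly, while $\bar{y}_1 + \mu_1 = O_P(1)$ uniformly because $\bar{y}_1$ converges uniformly to the bounded function $\mu_1$; their product is therefore $o_P(1)$ uniformly. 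The triangle inequality then yields $\sup_A|\tilde{L}_{n,\Fa} - \tilde{L}_\Fa| = o_P(1)$, completing the argument.
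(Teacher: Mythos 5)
Your overall route is the same as the paper's --- reduce everything to Lemma~\ref{t_uniform} and push the uniform convergence through the ratio $\bar{y}_l = t_n^{(l)}/t_n^{(0)}$ and the quadratic $\bar{y}_2-\bar{y}_1^2$ --- and your observation that the centering term for $l=2$ must be $\tilde{h}/t^{(0)}$ rather than $\tilde{h}$ alone (so that \eqref{Ltilde_uniform} actually follows via \eqref{tilde_eps_var}) is correct and worth making explicit. There is, however, one genuine gap: the claim that $c_0=\inf_A t^{(0)}>0$ follows from compactness of $A$ and continuity of $t^{(0)}$. Compactness gives that the infimum is attained, not that it is positive, and positivity fails in general. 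Under (E.2) the density $f_\X$ is continuous with compact support, hence vanishes on $\partial\,\text{supp}(f_\X)$; for a shifting point $\bs_0\in\partial\,\text{supp}(f_\X)$ and a $\V$ for which the affine subspace $\bs_0+\spn\{\V\}$ meets $\text{supp}(f_\X)$ only near its boundary (e.g.\ a tangent line to a disk when $p=2$, $q=1$), one has $t^{(0)}(\V,\bs_0)=\int_{\real^q}f_\X(\bs_0+\V\rs)\,d\rs=0$. So the denominator is \emph{not} uniformly bounded away from zero on all of $A$, and your high-probability event $t_n^{(0)}\ge c_0/2$ is vacuous.

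The paper's proof is built precisely around this difficulty: it works on the interior sets $A_n=\spc(p,q)\times\{\xn\in\text{supp}(f_\X):|\xn-\partial\,\text{supp}(f_\X)|\ge b_n\}$, on which $\delta_n=\inf_{A_n}t^{(0)}>0$, chooses $b_n\to0$ slowly enough that $\delta_n^{-1}(a_n+h_n)\to0$, obtains the uniform rate $O_P(\delta_n^{-1}(a_n+h_n))$ on $A_n$, and then passes to the limit $A_n\uparrow A$. To repair your argument you would need to insert this (or an equivalent) boundary-trimming device in place of the fixed lower bound $c_0$; the rest of your algebra --- the common-denominator decomposition of $\bar{y}_l-\mu_l$ and the factorization $\bar{y}_1^2-\mu_1^2=(\bar{y}_1-\mu_1)(\bar{y}_1+\mu_1)$ --- then goes through unchanged and is equivalent to the paper's computation, which divides numerator and denominator by $t^{(0)}$ instead.
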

\smallskip
\begin{proof}[Proof of Theorem~\ref{thm_Ltilde_uniform}]
Let $\delta_n = \inf_{\V \times \bs_0 \in A_n}t^{(0)}(\V,\bs_0)$, where $t^{(0)}(\V,\bs_0)$ is defined in \eqref{tl}, and $A_n = \spc(p,q) \times \{\xn \in \text{supp}(f_\X): |\xn - \partial\text{supp}(f_\X)| \geq b_n\}$, where $\partial C $ denotes the boundary of the set $C$ and $|\xn - C| = \inf_{\rs \in C} |\xn - \rs| $, for a sequence $b_n \to 0$ so that   $\delta_n^{-1}(a_n + h_n) \to 0$ for any bandwidth $h_n$ that satisfies the assumptions. Then,
\begin{equation}
 \bar{y}_l(\V,\bs_0) = \frac{t_n^{(l)}(\V,\bs_0)}{t_n^{(0)}(\V,\bs_0)} =   \frac{t_n^{(l)}(\V,\bs_0)/t^{(0)}(\V,\bs_0)}{t_n^{(0)}(\V,\bs_0)/t^{(0)}(\V,\bs_0)} \label{ylbar}
\end{equation}
We consider the numerator and enumerator of \eqref{ylbar} separately. 
By Lemma~\ref{t_uniform} 
\begin{gather*}
\sup_{\V \times \bs_0 \in A_n} \left|\frac{t_n^{(0)}(\V,\bs_0)}{t^{(0)}(\V,\bs_0)} - 1\right| 
\leq \frac{\sup_{A}|t_n^{(0)}(\V,\bs_0) - t^{(0)}(\V,\bs_0)|}{\inf_{A_n} t^{(0)}(\V,\bs_0)} = O_P(\delta_n^{-1}(a_n + h_n))
\end{gather*}
\begin{gather*}
\sup_{\V \times \bs_0 \in A_n} \left|\frac{t_n^{(l)}(\V,\bs_0)}{t^{(0)}(\V,\bs_0)} - \mu_l(\V,\bs_0)\right| 
\leq \frac{\sup_{A}|t_n^{(l)}(\V,\bs_0) - t^{(l)}(\V,\bs_0)|}{\inf_{A_n} t^{(0)}(\V,\bs_0)} = O_P(\delta_n^{-1}(a_n + h_n)),
\end{gather*}
and therefore by $A_n \uparrow A = \spc(p,q) \times \text{supp}(f_\X)$,
\begin{equation*}
    \lim_{n \to \infty} \sup_{\V \times \bs_0 \in A_n}\left|\frac{t_n^{(l)}(\V,\bs_0)}{t^{(0)}(\V,\bs_0)} - \mu_l(\V,\bs_0)\right| = \lim_{n \to \infty} \sup_{\V \times \bs_0 \in A}\left|\frac{t_n^{(l)}(\V,\bs_0)}{t^{(0)}(\V,\bs_0)} - \mu_l(\V,\bs_0)\right| 
\end{equation*}
Substituting in \eqref{ylbar}, we obtain
\begin{equation*}
\bar{y}_l(\V,\bs_0) = \frac{t_n^{(l)}(\V,\bs_0)/t^{(0)}(\V,\bs_0)}{t_n^{(0)}(\V,\bs_0)/t^{(0)}(\V,\bs_0)} = \frac{\mu_l + O_P(\delta_n^{-1}(a_n + h_n))}{1 + O_P(\delta_n^{-1}(a_n + h_n))} = \mu_l + O_P(\delta_n^{-1}(a_n + h_n)).
\end{equation*}
For $l = 2$,  $\tilde{Y}^2_i = g(\widetilde{\B}^T\X_i)^2 + 2g(\widetilde{\B}^T\X_i)\tilde{\epsilon}_i + \tilde{\epsilon}_i^2$, and \eqref{Ltilde_uniform} follows from~\eqref{e_LtildeVs0}.
\end{proof}

\begin{lemma}\label{mu_lemma}
Under  (E.1), (E.2), (E.4), there exists $0 < C_5 < \infty$ such that
\begin{align}\label{mu_inequality}
    \left|\mu_l(\V,\bs_0) - \mu_l(\V_j,\bs_0)\right| \leq C_5 \|\Pbf_\V -\Pbf_{\V_j}\|
\end{align}
for all interior points $\bs_0 \in \text{supp}(f_\X)$
\end{lemma}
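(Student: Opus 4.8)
The plan is to treat $\mu_l(\V,\bs_0)=t^{(l)}(\V,\bs_0)/t^{(0)}(\V,\bs_0)$ from \eqref{mu_l} as a ratio of two parameter-dependent integrals and to show that the numerator is Lipschitz in the projection $\Pbf_\V$ while the denominator is bounded away from $0$, so that the quotient inherits the Lipschitz property. First I would reduce the integration domain in \eqref{tl} to a fixed compact set independent of $\V$. Since $\V$ has orthonormal columns, $\|\V\rs_1\|=\|\rs_1\|$, so $\bs_0+\V\rs_1\in\text{supp}(f_\X)$ forces $\|\rs_1\|\le\|\bs_0+\V\rs_1\|+\|\bs_0\|\le 2C_1$ with $C_1=\sup_{z\in\text{supp}(f_\X)}\|z\|<\infty$ by (E.2). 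Because $f_\X$ vanishes off its compact support, the integrand of $t^{(l)}$ is identically zero for $\|\rs_1\|>2C_1$, and hence $t^{(l)}(\V,\bs_0)=\int_{B_R}g(\widetilde{\B}^T\bs_0+\widetilde{\B}^T\V\rs_1)^l f_\X(\bs_0+\V\rs_1)\,d\rs_1$ over the fixed ball $B_R=\{\rs_1\in\real^q:\|\rs_1\|\le R\}$, $R=2C_1$, for every $\V\in\spc(p,q)$.

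Next I would establish a Lipschitz bound for $t^{(l)}$ in the Frobenius norm of $\V$. The integrand $\Phi_l(\V,\rs_1)=g(\widetilde{\B}^T\bs_0+\widetilde{\B}^T\V\rs_1)^l f_\X(\bs_0+\V\rs_1)$ is a product and composition of the $C^2$ functions $g$ (twice continuously differentiable by Theorem~\ref{Y_decomposition_thm} under (E.4)) and $f_\X$ ($C^2$ with compact support by (E.2)) with the linear map $\V\mapsto\V\rs_1$, hence $\Phi_l$ is $C^1$ in $\V$ with gradient bounded uniformly over the compact set $\spc(p,q)\times B_R$. The mean value theorem gives $|\Phi_l(\V,\rs_1)-\Phi_l(\V',\rs_1)|\le\tilde C\|\V-\V'\|$ uniformly in $\rs_1\in B_R$, and integrating over the fixed ball $B_R$ yields $|t^{(l)}(\V,\bs_0)-t^{(l)}(\V',\bs_0)|\le C\|\V-\V'\|$. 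To pass from $\|\V-\V'\|$ to $\|\Pbf_\V-\Pbf_{\V_j}\|$ I would invoke the invariance of $t^{(l)}$ under a change of basis: substituting $\rs_1\mapsto\Ob\rs_1$ (a measure-preserving map for $\Ob\in\spc(q,q)$) shows $t^{(l)}(\V\Ob,\bs_0)=t^{(l)}(\V,\bs_0)$, so $t^{(l)}$ depends on $\V$ only through $\spn\{\V\}$. Therefore $|t^{(l)}(\V,\bs_0)-t^{(l)}(\V_j,\bs_0)|=|t^{(l)}(\V,\bs_0)-t^{(l)}(\V_j\Ob,\bs_0)|\le C\|\V-\V_j\Ob\|$ for every $\Ob\in\spc(q,q)$; minimizing over $\Ob$ and using the standard comparison $\min_{\Ob\in\spc(q,q)}\|\V-\V_j\Ob\|\le c\,\|\Pbf_\V-\Pbf_{\V_j}\|$ between the Stiefel and Grassmann metrics gives $|t^{(l)}(\V,\bs_0)-t^{(l)}(\V_j,\bs_0)|\le Cc\,\|\Pbf_\V-\Pbf_{\V_j}\|$ for $l=0,1,2$.

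It remains to control the denominator and assemble the quotient. For an interior point $\bs_0\in\text{supp}(f_\X)$ one has $t^{(0)}(\V,\bs_0)>0$, as shown in the proof of Theorem~\ref{CVE_targets_meansubspace_thm} using the continuity of $f_\X$, convexity of the support and $\Sigmaxbf>0$; since $\V\mapsto t^{(0)}(\V,\bs_0)$ is continuous on the compact manifold $\spc(p,q)$, its infimum $\delta:=\inf_{\V\in\spc(p,q)}t^{(0)}(\V,\bs_0)>0$ is attained and positive, while continuity of $g$ and $f_\X$ over the compact $B_R$ gives a uniform upper bound $t^{(l)}(\V,\bs_0)\le M<\infty$. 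Writing $a=t^{(l)}(\V,\bs_0)$, $b=t^{(0)}(\V,\bs_0)$, $a'=t^{(l)}(\V_j,\bs_0)$, $b'=t^{(0)}(\V_j,\bs_0)$, the elementary identity $a/b-a'/b'=((a-a')b'+a'(b'-b))/(bb')$ together with $b,b'\ge\delta$, $|a'|\le M$ and the Lipschitz bounds of the previous step yields
\[
|\mu_l(\V,\bs_0)-\mu_l(\V_j,\bs_0)|\le\frac{Cc}{\delta}\|\Pbf_\V-\Pbf_{\V_j}\|+\frac{MCc}{\delta^2}\|\Pbf_\V-\Pbf_{\V_j}\|=:C_5\,\|\Pbf_\V-\Pbf_{\V_j}\|,
\]
which is \eqref{mu_inequality}.

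I expect the main obstacle to be the discrepancy between dependence on the \emph{basis} $\V$ and dependence on the \emph{subspace} $\Pbf_\V$: the integrand in \eqref{tl} is written through $\V\rs_1$ and is not literally a function of $\Pbf_\V$, so the Lipschitz estimate is naturally obtained in $\|\V-\V'\|$ and must be transferred to $\|\Pbf_\V-\Pbf_{\V_j}\|$ via basis invariance and the Stiefel--Grassmann metric comparison. A secondary technical point is that the effective domain of integration moves with $\V$; this is handled at the outset by enlarging to the fixed ball $B_R$ and exploiting that $f_\X$ vanishes exactly, not merely asymptotically, outside its compact support. The uniform positivity of $t^{(0)}$ requires $\bs_0$ to be interior, which is precisely the hypothesis of the lemma; the constant $\delta$, and hence $C_5$, may be taken uniform on compact subsets of the interior of $\text{supp}(f_\X)$.
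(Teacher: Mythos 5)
Your proof is correct and shares the overall skeleton of the paper's argument (Lipschitz numerator, denominator bounded below at interior $\bs_0$ by compactness of the Stiefel manifold, then the standard quotient estimate), but the key technical step is handled differently. The paper sidesteps the basis-versus-subspace issue you identify as the main obstacle: it replaces $t^{(l)}(\V,\bs_0)$ by the full-dimensional representation $\tilde{t}^{(l)}(\Pbf_\V,\bs_0)$ from \eqref{Grassman}, whose integrand is $\tilde{g}(\bs_0+\Pbf_\V\rs)$ with $\tilde{g}(\xn)=g(\widetilde{\B}^T\xn)^l f_\X(\xn)$ Lipschitz on compacts; the bound $|\tilde{g}(\bs_0+\Pbf_\V\rs)-\tilde{g}(\bs_0+\Pbf_{\V_j}\rs)|\le L\|(\Pbf_\V-\Pbf_{\V_j})\rs\|$ then delivers Lipschitz continuity directly in $\|\Pbf_\V-\Pbf_{\V_j}\|$, with no need for the Stiefel--Grassmann metric comparison. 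You instead keep the $q$-dimensional parametrization on the fixed ball $B_R$, prove Lipschitz continuity in $\|\V-\V'\|$ via the mean value theorem, and transfer to the projection metric through orthogonal invariance $t^{(l)}(\V\Ob,\bs_0)=t^{(l)}(\V,\bs_0)$ plus the Procrustes bound $\min_{\Ob\in\spc(q,q)}\|\V-\V_j\Ob\|\le c\|\Pbf_\V-\Pbf_{\V_j}\|$; that comparison is indeed standard (it holds with $c=\sqrt{2}$ via principal angles, since all principal angles lie in $[0,\pi/2]$), though you should state or cite it explicitly rather than leave it as folklore. Your route is a bit longer but stays closer to the definition \eqref{tl} and makes the domain-of-integration issue explicit; the paper's route is shorter but leans on the identity \eqref{Grassman} established earlier. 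Both arguments share the same implicit smooth-extension convention for $g$ off the support of $\widetilde{\B}^T\X$, so that is not a gap relative to the paper.
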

\begin{proof}
From  the representation $\tilde{t}^{(l)}(\Pbf_\V,\bs_0)$  in \eqref{Grassman} instead of $t^{(l)}(\V,\bs_0)$, we consider $\mu_l(\V,\bs_0) = \mu_l(\Pbf_\V,\bs_0)$ as a function on the Grassmann manifold since $\Pbf_\V \in Gr(p,q)$. 
Then,
\begin{align}\label{LipschitG}
    \left|\mu_l(\Pbf_\V,\bs_0) - \mu_l(\Pbf_{\V_j},\bs_0)\right| &= \left|\frac{\tilde{t}^{(l)}(\Pbf_\V,\bs_0)}{\tilde{t}^{(0)}(\Pbf_\V,\bs_0)} - \frac{\tilde{t}^{(l)}(\Pbf_{\V_j},\bs_0)}{\tilde{t}^{(0)}(\Pbf_{\V_j},\bs_0)}\right| \notag \\ 
   &\leq \frac{\sup |\tilde{t}^{(0)}(\Pbf_\V,\bs_0)|}{(\inf \tilde{t}^{(0)}(\Pbf_\V,\bs_0))^2}\left| \tilde{t}^{(l)}(\Pbf_\V,\bs_0)-\tilde{t}^{(l)}(\Pbf_{\V_j},\bs_0)\right|\notag \\
    &\quad +\frac{\sup \tilde{t}^{(l)}(\Pbf_\V,\bs_0)}{(\inf \tilde{t}^{(0)}(\Pbf_\V,\bs_0))^2}\left| \tilde{t}^{(0)}(\Pbf_\V,\bs_0)-\tilde{t}^{(0)}(\Pbf_{\V_j},\bs_0)\right| 
\end{align}
with $\sup_{\Pbf_\V \in Gr(p,q)} \tilde{t}^{(0)}(\Pbf_\V,\bs_0) \in (0,\infty)$ and  $\inf_{\Pbf_\V \in Gr(p,q)} \tilde{t}^{(0)}(\Pbf_\V,\bs_0) \in (0,\infty)$ since $\tilde{t}^{(l)}$ is continuous, $\Sigmaxbf >0$ and $\bs_0 \in \text{supp}(f_\X)$ an interior point.

By (E.2) and (E.4), $\tilde{g}(\xn) =g(\widetilde{\B}^T \xn)f_\X(\xn)$ is twice continuous differentiable  and therefore Lipschitz continuous on compact sets.  We denote its Lipschitz constant by $L < \infty$. Therefore,
\begin{gather} 
    \left| \tilde{t}^{(l)}(\Pbf_{\V},\bs_0)-\tilde{t}^{(l)}(\Pbf_{\V_j},\bs_0)\right| \leq \int_{\text{supp}(f_\X)} \left|\tilde{g}(\bs_0 + \Pbf_{\V} \rs)-\tilde{g}(\bs_0 + \Pbf_{\V_j} \rs)\right|d \rs  \notag \\ \leq
    L \int_{\text{supp}(f_\X)}  \|(\Pbf_{\V} -\Pbf_{\V_j}) \rs\|d\rs \leq
    L\left(\int_{\text{supp}(f_\X)}  \| \rs \|dr\right) \|\Pbf_\V -\Pbf_{\V_j}\| \label{t_inequality}
\end{gather}
where the last inequality is due to the sub-multiplicativity of the Frobenius norm and the integral being finite by (E.2). Plugging \eqref{t_inequality} in \eqref{LipschitG} and collecting all constants into $C_5$ yields \eqref{mu_inequality}. 
\end{proof}

\bigskip
\begin{proof}[Proof of Theorem~\ref{uniform_convergence_ecve}]
By \eqref{e_LN} and \eqref{e_objective},
\begin{align}
\left|L^*_n(\V,f) - L_\Fa^*(\V,f)\right| &\leq \left|\frac{1}{n} \sum_i \left(\tilde{L}_{n,\Fa}(\V,\X_i,f) -\tilde{L}_\Fa(\V,\X_i,f)\right)\right| \notag\\ &\qquad + \left|\frac{1}{n} \sum_i \left(\tilde{L}_\Fa(\V,\X_i,f) - \E(\tilde{L}_\Fa(\V,\X,f))\right) \right|   \label{Ln-L}
\end{align}
By Theorem~\ref{thm_Ltilde_uniform}, 
\begin{equation}
\left|\frac{1}{n} \sum_i \tilde{L}_{n,\Fa}(\V,\X_i,f) -\tilde{L}_\Fa(\V,\X_i,f)\right| \leq \sup_{\V \times \bs_0 \in A}\left|\tilde{L}_{n,\Fa}(\V,\bs_0,f) - \tilde{L}_\Fa(\V,\bs_0,f)\right| = o_P(1)
\end{equation}
The second term in \eqref{Ln-L} converges to 0 almost surely for all $\V \in \spc(p,q)$ by the strong law of large numbers. In order  to show uniform convergence the same technique as in the proof of Theorem~\ref{thm_variance} is used. Let $B_j = \{\V \in \spc(p,q): \|\V\V^T - \V_j\V_j^T\| \leq \tilde{a}_n\}$ be a cover of $\spc(p,q)\subset \bigcup_{j=1}^{N} B_j$ with $N \leq C\, \tilde{a}_n^{-d} = C\,(n/\log(n))^{d/2} \leq C\, n^{d/2}$, where $d = \dim(\spc(p,q))$ is defined in the proof of Theorem~\ref{thm_variance}. By Lemma~\ref{mu_lemma}, 
\begin{align}\label{inequality3}
    \left|\mu_l(\V,\X_i) - \mu_l(\V_j,\X_i)\right| \leq C_5 \|\Pbf_\V  - \Pbf_{\V_j}\|
\end{align}
Let $G_n(\V,f) = \sum_i\tilde{L}_\Fa(\V,\X_i,f)/n$ with $\E(G_n(V)) = L^*_\Fa(\V,f)$. Using \eqref{inequality3} and following the same steps as in the proof of Lemma~\ref{thm_variance} we obtain
\begin{align}\label{G_n_ineq}
    \left|G_n(\V,f)-L^*_\Fa(\V,f)\right| &\leq \left|G_n(\V,f) - G_n(\V_j,f)\right|\notag\\
    &\quad+ \left|G_n(\V_j,f) -L^*_\Fa(\V_j,f)\right| + \left|L^*_\Fa(\V,f)-L^*_\Fa(\V_j,f)\right| \notag \\ 
    &\quad \leq 2C_6\tilde{a}_n + \left|G_n(\V_j,f) -L^*_\Fa(\V_j,f)\right| 
\end{align}
for $\V \in B_j$ and some $C_6 > C_5$. Inequality \eqref{G_n_ineq} leads to
\begin{align} 
    \Pb\left(\sup_{\V \in \spc(p,q)}|G_n(\V,f) - L^*_\Fa(\V,f)| > 3C_6\tilde{a}_n\right) &\leq C\,N\, \Pb(\sup_{\V \in B_j}|G_n(\V,f) - L^*_\Fa(\V,f)| > 3C_6\tilde{a}_n) \notag \\ 
    &\leq C\, n^{d/2} \Pb(|G_n(\V_j,f) -L^*_\Fa(\V_j,f)|> C_6\tilde{a}_n) \notag \\
    &\leq C\, n^{d/2} n^{-\gamma(C_6)} \to 0 \label{inequality5}
\end{align}
where the last inequality in \eqref{inequality5} is due to \eqref{Bernstein} with $Z_i = \tilde{L}_\Fa(\V_j,\X_i,f)$, which is bounded since  $\tilde{L}_\Fa(\cdot,\cdot,f)$ is continuous on the compact set $A$, and $\gamma(C_6)$ a monotone increasing function of $C_6$ that can be made arbitrarily large by choosing $C_6$ accordingly. Therefore, $\sup_{\V \in \spc(p,q)}\left|L^*_n(\V,f) - L_\Fa^*(\V,f)\right| \leq o_P(1) + O_P(\tilde{a}_n)$ which implies  Theorem~\ref{uniform_convergence_ecve}.
\end{proof}
\begin{proof}[Proof of Theorem~\ref{thm_consistency_mean_subspace}]
We apply  \cite[Thm 4.1.1]{Takeshi} to obtain consistency of the conditional variance estimator. This theorem requires three conditions that guarantee the convergence of the minimizer of a sequence of random functions $L^*_n(\Pbf_\V,f_t)$ to the minimizer of the limiting function $L^*(\Pbf_\V,f_t)$; i.e., $\Pbf_{\spn\{\widehat{\B}^t_{{k_t}}\}^\perp} = \argmin L^*_n(\Pbf_\V,f) \to \Pbf_{\spn\{\B\}^\perp} = \argmin L^*(\Pbf_\V,f_t)$ in probability.
To apply the theorem three conditions have to be met:
(1) The parameter space 
is compact; 
(2) $L^*_n(\Pbf_\V,f_t)$ is continuous in $\Pbf_\V$ and a measurable function of the data $(Y_i,\X_i^T)_{i=1,...,n}$, and 
    (3) $L^*_n(\Pbf_\V,f_t)$ converges uniformly to $L^*(\Pbf_\V,f_t)$ and $L^*(\Pbf_\V,f_t)$ attains a unique global minimum at $\msf^\perp$. 

Since $L^*_n(\V,f_t)$ depends on $\V$ only through $\Pbf_\V = \V\V^T$,  $L^*_n(\V,f_t)$ can be considered as functions on the Grassmann manifold, which is compact, and the same holds true for $L^*(\V,f_t)$ by \eqref{Grassman}. 
Further,  $L^*_n(\V,f_t)$ is by definition a measurable function of the data and continuous in $\V$ if  a continuous kernel, such as the  Gaussian, is used. Theorem~\ref{uniform_convergence_ecve} obtains  the uniform convergence and Theorem~\ref{CVE_targets_meansubspace_thm} that the minimizer is  unique when $L(\V)$ is minimized  over the Grassmann manifold $G(p,q)$, since $\msf = \spn\{\widetilde{\B}\}$ is uniquely identifiable and so is   $\spn\{\widetilde{\B}\}^\perp$ (i.e. $\|\Pbf_{\spn\{\widehat{\B}^t_{k_t}\}} - \Pbf_{\spn\{\widetilde{\B}\}}\|=\|\widehat{\B}^t_{k_t}(\widehat{\B}^t_{k_t})^T - \widetilde{\B}\widetilde{\B}^T\| = \| (\I_p- \widetilde{\B}\widetilde{\B}^T)- (\I_p-\widehat{\B}^t_{k_t}(\widehat{\B}^t_{k_t})^T)\| = \| \Pbf_{\spn\{\widetilde{\B}\}^\perp}-\Pbf_{\spn\{\widehat{\B}^t_{k_t}\}^\perp}\|$). Thus, all three conditions are met and the result is obtained.
\end{proof}

\begin{proof}[Proof of Theorem~\ref{uniform_convergence_eobjective}]
Let $(t_j)_{j=1,\ldots,{m_n}}$ be an i.i.d. sample from $F_T$ and write
\begin{align}\label{L_nF_decompostion}
    |L_{n,\Fa}(\V) -L_{\Fa}(\V)| &= \left| \frac{1}{{m_n}} \sum_{j=1}^{{m_n}} \left(L_n^*(\V,f_{t_j}) - L^*(\V,f_{t_j})\right) \right| \notag \\
    &\quad +  \left| \frac{1}{{m_n}} \sum_{j=1}^{{m_n}} \left(L^*(\V,f_{t_j}) - \E_{t \sim F_T}(L^*(\V,f_{t})\right) \right|
\end{align}
Then, $\sup_{\V \in \spc(p,q)} \left| L_n^*(\V,f_{t}) - L^*(\V,f_{t}) \right| \leq 8M^2$, by the assumption $\sup_{t \in \Omega_T} |f_t(Y)| < M < \infty$, and the triangle inequality. That is,  $L_n^*(\V,f_{t})$ estimates a variance of a bounded response $ f_t(Y) \in [-M, M]$ and is therefore bounded by the squared range $4M^2$ of $f_t(Y)$. The same holds true for $L^*(\V,f_{t})$. Further, $8M^2$ is an integrable dominant function so that Fatou's Lemma applies.

Consider the first term on the right hand side of \eqref{L_nF_decompostion} and let $\delta > 0$.  By Markov's and triangle inequalities and Fatou's Lemma,
\begin{gather*}
   \limsup_{n} \Pb\left(\sup_{\V \in \spc(p,q)} \left| \frac{1}{{m_n}} \sum_{j=1}^{{m_n}} L_n^*(\V,f_{t_j}) - L^*(\V,f_{t_j}) \right| > \delta \right) \\
   \leq \frac{1}{\delta} \limsup_{n} \E_{F_T}\left(\E(\sup_{\V \in \spc(p,q)} \left| \frac{1}{{m_n}} \sum_{j=1}^{{m_n}} L_n^*(\V,f_{t_j}) - L^*(\V,f_{t_j}) \right|  \right) \; \mbox{:Markov inequality}\\
   \leq  \frac{1}{\delta}\limsup_{n}\E_{F_T}\left(  \frac{1}{{m_n}} \sum_{j=1}^{{m_n}} \E(\sup_{\V \in \spc(p,q)}|L_n^*(\V,f_{t_j}) - L^*(\V,f_{t_j})|\right) \\
   =  \frac{1}{\delta}\limsup_{n}\E_{F_T}\left(   \E(\sup_{\V \in \spc(p,q)}|L_n^*(\V,f_{t_j}) - L^*(\V,f_{t_j})|)\right) \\\\
   \leq  \frac{1}{\delta}\E_{ F_T}\left(   \E(\limsup_{n} \sup_{\V \in \spc(p,q)}|L_n^*(\V,f_{t_j}) - L^*(\V,f_{t_j})| \right) = \frac{1}{\delta}\E_{t\sim F_T}\left(   \E(0)\right) = 0
\end{gather*}
since by Theorem~\ref{uniform_convergence_ecve} it holds $\limsup_{n} \sup_{\V \in \spc(p,q)}|L_n^*(\V,f_{t_j}) - L^*(\V,f_{t_j})| = 0$.

\vspace{3mm}
For the second term on the right hand side of \eqref{L_nF_decompostion} we apply Theorem 2 of \cite{jennrich1969} in \cite[p. 40]{mickey1963test}: 
\begin{thm}\label{Jennrich}
Let $t_j$ be an i.i.d. sample and $L^*(\V,f_{t}): \Theta \times \Omega_T \to \real$ where $\Theta$ is a compact subset of an euclidean space. $L^*(\V,f_{t})$ is continuous in $\V$ and measurable in $t$ by Theorem~\ref{CVE_targets_meansubspace_thm}. If  $L^*(\V,f_{t_j})\leq h(t_j)$, where $h(t_j)$ is integrable with respect to $F_T$, then 
\begin{align*}
    \frac{1}{m_n}\sum_{j=1}^{m_n} L^*(\V,f_{t_j}) \longrightarrow \E_{F_T}\left(L^*(\V,f_{t}) \right) \quad \text{uniformly over $\V \in \Theta$ almost surely as $n \to \infty$ }
\end{align*}
\end{thm}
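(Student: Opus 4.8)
The statement is a version of Jennrich's uniform strong law of large numbers, so the plan is to reproduce the classical finite-cover bracketing argument adapted to the Carath\'eodory function $L^*(\V,f_t)$. The structural facts I would lean on are that $\Theta$ (in the application $\Theta=\spc(p,q)$) is compact, that $\V\mapsto L^*(\V,f_t)$ is continuous for each $t$ while $t\mapsto L^*(\V,f_t)$ is measurable for each $\V$ (both established in Theorem~\ref{CVE_targets_meansubspace_thm}), and that $L^*$ is nonnegative, being a conditional variance by \eqref{e_LV1}. This last point is essential: the one-sided hypothesis $L^*(\V,f_t)\le h(t)$ together with $L^*\ge 0$ delivers the two-sided domination $0\le L^*(\V,f_t)\le h(t)$ with $h\in L^1(F_T)$, which is what the argument actually requires.

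First I would dispose of the pointwise statement and the continuity of the limit. For each fixed $\V$, integrability of $L^*(\V,f_t)$ follows from the domination, so the ordinary strong law of large numbers gives $m_n^{-1}\sum_{j=1}^{m_n} L^*(\V,f_{t_j})\to \E_{F_T}(L^*(\V,f_t))$ almost surely. Applying dominated convergence along sequences $\V_n\to\V$, using continuity in $\V$ and the envelope $h$, shows in addition that $\V\mapsto \E_{F_T}(L^*(\V,f_t))$ is continuous on $\Theta$.

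The heart of the argument is upgrading pointwise to uniform convergence via compactness. For $\V_0\in\Theta$ and $\rho>0$ I would introduce the oscillation
\[
\omega(t,\V_0,\rho)=\sup_{\V\in\Theta:\,\|\V-\V_0\|\le\rho}\bigl|L^*(\V,f_t)-L^*(\V_0,f_t)\bigr|,
\]
which is measurable in $t$ because, by continuity of $L^*(\cdot,f_t)$, the supremum can be restricted to a countable dense subset of the ball. Since $\omega(t,\V_0,\rho)\le 2h(t)$ and $\omega(t,\V_0,\rho)\downarrow 0$ as $\rho\downarrow 0$ by continuity, dominated convergence gives $\E_{F_T}\,\omega(t,\V_0,\rho)\to 0$. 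Fixing $\epsilon>0$, I would pick for each $\V_0$ a radius $\rho_{\V_0}$ with $\E_{F_T}\,\omega(t,\V_0,\rho_{\V_0})<\epsilon$, extract a finite subcover $\{B(\V_k,\rho_k)\}_{k=1}^N$ of $\Theta$, and for $\V\in B(\V_k,\rho_k)$ bound
\begin{align*}
\Bigl|\tfrac{1}{m_n}\sum_{j} L^*(\V,f_{t_j})-\E_{F_T}L^*(\V,f_t)\Bigr|
&\le \tfrac{1}{m_n}\sum_{j}\omega(t_j,\V_k,\rho_k)+\Bigl|\tfrac{1}{m_n}\sum_{j} L^*(\V_k,f_{t_j})-\E_{F_T}L^*(\V_k,f_t)\Bigr|\\
&\quad+\E_{F_T}\,\omega(t,\V_k,\rho_k).
\end{align*}
Applying the scalar SLLN to the $N$ envelopes $\omega(\cdot,\V_k,\rho_k)$ and to the $N$ centers $L^*(\V_k,\cdot)$, then taking the supremum over $\V$ and $\limsup_n$, the right-hand side is almost surely at most $2\epsilon$; letting $\epsilon\downarrow 0$ along a countable sequence yields the uniform almost sure convergence.

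The only genuinely delicate point is the handling of the oscillation $\omega(t,\V_0,\rho)$: justifying its measurability and that its $F_T$-expectation vanishes as $\rho\downarrow 0$. Everything else is routine finite-cover bracketing combined with the scalar SLLN. Since this is precisely the content of \cite[Thm.~2]{jennrich1969} in the Carath\'eodory-function, compact-parameter, dominated setting, one could alternatively simply cite that result rather than reprove it.
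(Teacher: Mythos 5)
The paper does not prove this statement at all: it is transcribed verbatim as Theorem~2 of \cite{jennrich1969} and invoked as an external ingredient inside the proof of Theorem~\ref{uniform_convergence_eobjective}, so there is no internal argument to compare against. Your proposal supplies a correct, self-contained proof via the classical route --- pointwise SLLN at each parameter value, an oscillation (bracketing) function $\omega(t,\V_0,\rho)$ whose $F_T$-expectation vanishes as $\rho\downarrow 0$ by dominated convergence, a finite subcover of the compact $\Theta$, and a three-term triangle-inequality decomposition --- which is essentially Jennrich's own argument, so in substance you are reproving the cited result rather than diverging from it. Two points of added value in your write-up are worth keeping. First, you correctly observe that the hypothesis as stated gives only the one-sided bound $L^*(\V,f_{t_j})\le h(t_j)$, which does not by itself yield an integrable envelope for $|L^*|$; the argument needs the two-sided bound, and this is rescued only because $L^*$ in \eqref{e_LV1} is an expectation of conditional variances and hence nonnegative (in the paper's actual application the envelope is the constant $4M^2$, so the issue is moot there, but the abstract statement is imprecise as transcribed). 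Second, you flag the measurability of the oscillation supremum, handled via separability of $\Theta$, which the citation approach leaves entirely implicit. The trade-off is the usual one: citing buys brevity, while your reproof makes the hypotheses that are actually used (nonnegativity plus upper envelope, joint Carath\'eodory structure, compactness) visible and checkable against Theorem~\ref{CVE_targets_meansubspace_thm}.
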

Here $\V \in \spc(p,q) = \Theta \subseteq \real^{pq}$, by $\sup_{t \in \Omega_T} |f_t(Y)| < M < \infty$ and an analogous argument as for the first term in \eqref{L_nF_decompostion}, $Z_j(\V) = L^*(\V,f_{t_j})< 4M^2$. Therefore, $\E(\sup_{\V\in \spc(p,q)} |Z_j(\V)|) < 4M^2 $, which is integrable. Further, since $t_j$ are an i.i.d. sample from $F_T$, $Z_j(\V)$ is a i.i.d. sequence of random variables, $Z_j(\V)$ is continuous in $\V$ by Theorem~\ref{CVE_targets_meansubspace_thm} and the parameter space $\spc(p,q)$ is compact. Then by Theorem \ref{Jennrich},
\begin{align*}
   \sup_{\V\in \spc(p,q)} \left| \frac{1}{{m_n}} \sum_{j=1}^{{m_n}} L^*(\V,f_{t_j}) - \E_{t \sim F_T}(L^*(\V,f_{t})) \right| \longrightarrow 0 \quad \quad \text{almost surely as $n \to \infty$}
\end{align*}
if $\lim_{n \to \infty} m_n = \infty$. Putting everything together it follows that $\sup_{\V\in \spc(p,q)} |L_{n,\Fa}(\V) -L_{\Fa}(\V)| \to 0$ in probability as $n \to \infty$.
\end{proof}
\begin{proof}[Proof of Theorem~\ref{ECVE_consistency}]
The proof is directly analogous to the proof of Theorem~\ref{thm_consistency_mean_subspace}. The uniform convergence of the target function $L_{n,\Fa}(\V)$ is obtained by Theorem~\ref{uniform_convergence_eobjective}. The  minimizer over $Gr(p,q)$ and its uniqueness derive from Theorem~\ref{ECVE_identifies_cs_thm}.
\end{proof}

\begin{proof}[Proof of Theorem~\ref{e_lemma-one}]
In this proof we supress the dependence on $f$ in the notation. The Gaussian kernel $K$ satisfies $\partial_z K(z) = - z K(z)$. From \eqref{weights} and \eqref{e_Ltilde} we have $\tilde{L}_{n,\Fa} = \bar{y}_2 - \bar{y}_1^2$ where $\bar{y}_l = \sum_i w_i \tilde{Y}_i^l$, $l=1,2$. We let $K_{j} = K(d_j(\V,\bs_0)/h_n)$,
suppress the dependence on $\V$ and $\bs_0$ and write $w_i = K_i/\sum_j K_j$. Then, $\nabla K_i = (-1/h_n^2)K_i d_i \nabla d_i$ and $\nabla w_i =  -\left(K_i d_i \nabla d_i (\sum_j K_j) - K_i \sum_j K_j d_j\nabla d_j\right)/(h_n \sum_j K_j)^2$. Next,
\begin{align}
\nabla \bar{y}_l &= -\frac{1}{h_n^2}\sum_i \tilde{Y}_i^l\frac{K_id_i\nabla d_i - K_i (\sum_jK_jd_j\nabla d_j)}{( \sum_jK_j)^2} 
= -\frac{1}{h_n^2}\sum_i \tilde{Y}_i^l w_i \left(d_i \nabla d_i - \sum_j w_j d_j \nabla d_j\right) \notag \\
&= -\frac{1}{h_n^2}\left(\sum_i \tilde{Y}_i^l w_i d_i \nabla d_i - \sum_j\tilde{Y}_j^l w_j \sum_i w_id_i \nabla d_i\right) 
=-\frac{1}{h_n^2}\sum_i (\tilde{Y}_i^l - \bar{y}_l) w_i d_i \nabla d_i \label{grad.yl}
\end{align}
Then, $\nabla \tilde{L}_n = \nabla \bar{y}_2 - 2\bar{y}_1 \nabla \bar{y}_1$, and inserting $\nabla \bar{y}_l$ from \eqref{grad.yl} yields $\nabla \tilde{L}_n = (-1/h_n^2)\sum_i (Y_i^2 -\bar{y}_2 - 2\bar{y}_1(Y_i - \bar{y}_1))w_i d_i \nabla d_i = (1/h_n^2)\left(\sum_i \left(\tilde{L}_n -(Y_i -\bar{y}_1)^2 \right) w_i d_i \nabla d_i\right)$, since $Y_i^2 -\bar{y}_2 - 2\bar{y}_1(Y_i - \bar{y}_1) = (Y_i -\bar{y}_1)^2 - \tilde{L}_n $.
\end{proof}

\end{document}